\documentclass[11pt]{article}

\usepackage{bbm}
\usepackage{framed} 
\usepackage{url} 
\usepackage{complexity}
\usepackage{booktabs}
\usepackage{amsmath,amssymb}
\usepackage{amsthm}
\usepackage{thmtools} 
\usepackage{thm-restate}
\usepackage{nicefrac}
\usepackage{microtype}
\usepackage{calc}
\usepackage{enumerate}
\usepackage{enumitem}
\usepackage[usenames,dvipsnames]{xcolor}
\usepackage[colorlinks,citecolor=blue,linkcolor=BrickRed]{hyperref}
\usepackage{textgreek}
\usepackage{setspace}
\usepackage{parskip}

\usepackage{fullpage}
\makeatletter
\setlength{\parindent}{0pt}
\addtolength{\partopsep}{-2mm}
\setlength{\parskip}{5pt plus 1pt}
\addtolength{\abovedisplayskip}{-3mm}
\addtolength{\textheight}{35pt}
\makeatother
\allowdisplaybreaks

\usepackage{graphicx}
\usepackage[font=footnotesize,labelfont=bf]{subcaption}
\usepackage[font=footnotesize,labelfont=bf]{caption}
\usepackage[nobreak=true,skipabove=7pt]{mdframed}
\usepackage{appendix}
\usepackage[colorinlistoftodos]{todonotes}

\usepackage{algorithm}
\usepackage{algcompatible}
\usepackage[noend]{algpseudocode}

\usepackage{xr}
\usepackage{array}
\usepackage{xspace}

\usepackage[capitalise]{cleveref}
\crefrangelabelformat{enumi}{#3#1#4--#5#2#6}
\usepackage{chngcntr}
\usepackage{mathtools}

\crefname{claim}{Claim}{Claims}

\usepackage{nameref}

\usepackage{tikz}
\usetikzlibrary{patterns}

\DeclareMathOperator*{\Ber}{Ber}
\DeclareMathOperator*{\argmin}{argmin}

\DeclareMathOperator*{\expectation}{\mathbb{E}}
\let\poly\relax
\DeclareMathOperator*{\poly}{poly}

\DeclareMathOperator*{\probability}{\mathbb{P}}

\newcommand\Z{\mathbb{Z}}

\newcommand\eps{\epsilon}

\DeclarePairedDelimiterX{\probarg}[1]{(}{)}{%
	\ifnum\currentgrouptype=16 \else\begingroup\fi
	\activatebar#1
	\ifnum\currentgrouptype=16 \else\endgroup\fi
}

\newcommand{\probover}[1]{\probability_{#1}\probarg}

\newcommand{\expect}{\expectation\expectarg}
\DeclarePairedDelimiterX{\expectarg}[1]{[}{]}{%
	\ifnum\currentgrouptype=16 \else\begingroup\fi
	\activatebar#1
	\ifnum\currentgrouptype=16 \else\endgroup\fi
}

\newcommand{\expectover}[1]{\expectation_{#1}\expectarg}

\DeclarePairedDelimiterX{\wklarg}[1]{(}{)}{%
	\ifnum\currentgrouptype=16 \else\begingroup\fi
	\activatebars#1
	\ifnum\currentgrouptype=16 \else\endgroup\fi
}

\newcommand{\innermid}{\nonscript\;\delimsize\vert\nonscript\;}
\newcommand{\activatebar}{%
	\begingroup\lccode`\~=`\|
	\lowercase{\endgroup\let~}\innermid 
	\mathcode`|=\string"8000
}

\newcommand{\innermids}{\nonscript\;\delimsize\vert\delimsize\vert\nonscript\;}
\newcommand{\activatebars}{%
	\begingroup\lccode`\~=`\|
	\lowercase{\endgroup\let~}\innermids 
	\mathcode`|=\string"8000
}

\newcommand\opt{\textsc{Opt}\xspace}
\newcommand\copt{c\left(\textsc{Opt}\right)\xspace}
\newcommand\opto{\textsc{Opt}_\textsc{Online}\xspace}
\newcommand\coptof[1]{c\left(\textsc{Opt}\left(#1\right)\right)\xspace}
\newcommand\lpopt{\textsc{LP}_{\textsc{Opt}}}
\newcommand\alg{\textsc{Alg}\xspace}

\counterwithin{equation}{section}

\newcommand{\ForEach}{\textbf{for each}\xspace}

\usepackage{eqparbox}

\newcommand\lastt{{t-1}}
\newcommand\thist{{t}}

\newcommand\KL[2]{\textsc{KL}\left(#1 \mid \mid #2\right)}
\newcommand\wKL[2]{\textsc{KL}_c\left(#1 \mid \mid #2\right)}
\newcommand\supp[1]{\text{support}\left( #1 \right)}

\newcommand\loc{\textsc{LearnOrCover}\xspace}

\newcommand\setcov{\textsc{SetCover}\xspace}

\newcommand\backup{\textsc{Backup}\xspace}
\newcommand\aug[2]{\textsc{Aug}\left(#1 \ \middle| \ #2\right)\xspace}
\newcommand\aip{\textsc{AIP}\xspace}
\newcommand\cip{\textsc{CIP}\xspace}
\newcommand\nscws{$(\alpha,\delta)\textsc{-NoisySampleSetCover}$\xspace}
\newcommand\facloc{\textsc{FacilityLocation}\xspace}
\newcommand\nmfl{\textsc{NonMetricFacilityLocation}\xspace}
\newcommand\mfl{\textsc{MetricFacilityLocation}\xspace}
\newcommand\setmcov{\textsc{SetMultiCover}\xspace}
\theoremstyle{plain}

\newtheorem{theorem}{Theorem}[section]
\newtheorem{definition}[theorem]{Definition}

\newtheorem{fact}[theorem]{Fact}
\newtheorem{observation}[theorem]{Observation}

\newtheorem{corollary}[theorem]{Corollary}

\newtheorem{invariant}{Invariant}



\newlength{\continueindent}
\setlength{\continueindent}{2em}
\usepackage{etoolbox}
\makeatletter
\newcommand*{\ALG@customparshape}{\parshape 2 \leftmargin \linewidth \dimexpr\ALG@tlm+\continueindent\relax \dimexpr\linewidth+\leftmargin-\ALG@tlm-\continueindent\relax}
\apptocmd{\ALG@beginblock}{\ALG@customparshape}{}{\errmessage{failed to patch}}
\makeatother

\makeatletter
\def\thm@space@setup{%
	\thm@preskip=\parskip \thm@postskip=0pt
}
\makeatother


\usepackage{etoolbox}
\usepackage{tikz}
\usetikzlibrary{tikzmark}
\usetikzlibrary{calc}

\errorcontextlines\maxdimen

\newcommand{\ALGtikzmarkcolor}{black}
\newcommand{\ALGtikzmarkextraindent}{4pt}
\newcommand{\ALGtikzmarkverticaloffsetstart}{-.5ex}
\newcommand{\ALGtikzmarkverticaloffsetend}{-.5ex}
\makeatletter
\newcounter{ALG@tikzmark@tempcnta}

\newcommand\ALG@tikzmark@start{%
	\global\let\ALG@tikzmark@last\ALG@tikzmark@starttext%
	\expandafter\edef\csname ALG@tikzmark@\theALG@nested\endcsname{\theALG@tikzmark@tempcnta}%
	\tikzmark{ALG@tikzmark@start@\csname ALG@tikzmark@\theALG@nested\endcsname}%
	\addtocounter{ALG@tikzmark@tempcnta}{1}%
}

\def\ALG@tikzmark@starttext{start}
\newcommand\ALG@tikzmark@end{%
	\ifx\ALG@tikzmark@last\ALG@tikzmark@starttext
	\else
	\tikzmark{ALG@tikzmark@end@\csname ALG@tikzmark@\theALG@nested\endcsname}%
	\tikz[overlay,remember picture] \draw[\ALGtikzmarkcolor] let \p{S}=($(pic cs:ALG@tikzmark@start@\csname ALG@tikzmark@\theALG@nested\endcsname)+(\ALGtikzmarkextraindent,\ALGtikzmarkverticaloffsetstart)$), \p{E}=($(pic cs:ALG@tikzmark@end@\csname ALG@tikzmark@\theALG@nested\endcsname)+(\ALGtikzmarkextraindent,\ALGtikzmarkverticaloffsetend)$) in (\x{S},\y{S})--(\x{S},\y{E});%
	\fi
	\gdef\ALG@tikzmark@last{end}%
}

\apptocmd{\ALG@beginblock}{\ALG@tikzmark@start}{}{\errmessage{failed to patch}}
\pretocmd{\ALG@endblock}{\ALG@tikzmark@end}{}{\errmessage{failed to patch}}
\makeatother

\algblock[with]{With}{EndWith}
\algblockdefx[With]{With}{EndWith}%
[1]{\textbf{with} #1 \textbf{do}}%
{}

\makeatletter
\ifthenelse{\equal{\ALG@noend}{t}}%
{\algtext*{EndWith}}
{}%
\makeatother

\title{Set Covering with Our Eyes Wide Shut}

 \author{Anupam Gupta\thanks{Computer Science Department, Carnegie Mellon
 		University, Pittsburgh, PA 15213. Email:
 		\texttt{anupamg@cs.cmu.edu}. Research supported in part by NSF awards CCF-1955785, CCF-2006953, and CCF-2224718.}
 	\and
 	Gregory Kehne\thanks{School of Engineering and Applied Sciences, Harvard
 		University, Boston, MA 02138. Email:
 		\texttt{gkehne@g.harvard.edu}. Research supported in part by the Siebel Scholars program.}
 	\and
 	Roie Levin\thanks{Department of Statistics and Operations Research, School of Mathematical Sciences, Tel Aviv University, Tel Aviv. Email:
 		\texttt{roiel@tauex.tau.ac.il}. Supported in part by a Fulbright Israel Postdoctoral Fellowship, Israel Science Foundation grant 2233/19, and United States - Israel Binational Science Foundation grant 2018352.}
 }
\date{}

\begin{document}

	\maketitle
	
    	\begin{abstract}
          In the \emph{stochastic set cover problem} (Grandoni et al.,
          FOCS '08), we are given a collection $\mathcal{S}$ of $m$
          sets over a universe $\mathcal{U}$ of size $N$, and a
          distribution $D$ over elements of $\mathcal{U}$. The algorithm draws $n$ elements one-by-one from $D$ and must buy a set to cover each element on arrival; the goal is to minimize the total cost of sets bought during this process. A \emph{universal} algorithm a-priori maps each element
          $u \in \mathcal{U}$ to a set $S(u)$ such that if $U \subseteq \mathcal{U}$ is
          formed by drawing $n$ times from distribution $D$, then the algorithm commits to outputting $S(U)$. Grandoni et al. gave an $O(\log mN)$-competitive
          universal algorithm for this stochastic set cover problem.

          We improve unilaterally upon this result by
          giving a simple, polynomial time $O(\log mn)$-competitive
          universal algorithm for the more general \emph{prophet}
          version, in which $U$ is formed by drawing from $n$
           different distributions $D_1, \ldots,
          D_n$. Furthermore, we show that we do not need full
          foreknowledge of the distributions: in fact, a \emph{single
            sample} from each distribution suffices. We show similar
          results for the \emph{2-stage prophet} setting and for the
          \emph{online-with-a-sample} setting.

          We obtain our results via a generic reduction from the
          single-sample prophet setting to the random-order setting;
          this reduction holds for a broad class of minimization
          problems that includes all covering problems. We take
          advantage of this framework by giving random-order
          algorithms for non-metric facility location and set
          multicover; using our framework, these automatically
          translate to universal prophet algorithms. 
	\end{abstract}

    \newpage

\section{Introduction}

In the \setcov problem we are given a set system $(\mathcal{U},\mathcal{S})$, where $\mathcal{U}$ is a ground set of size $N$ and $\mathcal{S}$ is a collection of subsets with $|\mathcal{S}| = m$.  We are also given a subset $U \subseteq \mathcal{U}$ of size $n$. The goal is to select a minimum-size (or more generally, minimum-cost) subcollection $\mathcal{S}' \subseteq \mathcal{S}$ such that the union of the sets in $\mathcal{S}'$ is $U$. Many polynomial-time algorithms have been discovered for this problem that achieve an approximation ratio of $\ln n$ (see e.g. \cite{chvatal1979greedy,johnson1974approximation,lovasz1975ratio,williamson2011design}), and this is best possible unless $\P = \NP$ \cite{DBLP:journals/jacm/Feige98,dinur2014analytical}.

One may interpret a solution $\mathcal{S}'$ as a map $\mathfrak{S}: \mathcal{U} \rightarrow \mathcal{S}$ taking each element to a set that covers it (breaking ties arbitrarily). In this case $\mathfrak{S}(U) = \bigcup_{u \in U} \{\mathfrak{S}(u)\}$ is the solution $\mathcal{S'}$. In seminal work, Jia et al. defined the \emph{universal} variant of the set cover problem, in which the goal is to construct $\mathfrak{S}$ a priori and obliviously without seeing the actual value of $U\subseteq \mathcal{U}$ (hence it is constructed using only $\mathcal{U}$ and $\mathcal{S}$) \cite{DBLP:conf/stoc/JiaLNRS05}. 
One wants a map $\mathfrak{S}$ minimizing the worst case ratio  $\max_{U \subseteq \mathcal{U}} c(\mathfrak{S}(U)) / c(\opt(U))$ between the cost of $\mathfrak{S}(U)$ and the cost of the optimal set cover for $U$. 
A universal algorithm is said to be $\alpha$-competitive, or to achieve competitive ratio $\alpha$, if the value of this ratio is no more than $\alpha$. 
Jia et al. showed $\tilde \Theta(\sqrt{n})$ bounds for this problem \cite{DBLP:conf/stoc/JiaLNRS05}.

To overcome this polynomial barrier, Grandoni et al. \cite{DBLP:journals/siamcomp/GrandoniGLMSS13} studied the \emph{stochastic} variant of universal set cover, in which one additionally assumes that the elements of $U$ are drawn i.i.d. from a known distribution $D$. The aim is now to minimize the expected ratio $E_U[c(\mathfrak{S}(U))] / E_U[c(\opt(U))]$. With this assumption, they showed that it is possible to get an exponentially better $O(\log (mN))$-competitive algorithm, and that this is best possible up to $\log \log$ factors.

In this work, we improve, generalize and simplify the results of \cite{DBLP:journals/siamcomp/GrandoniGLMSS13}. 
First, we improve the competitive ratio to $O(\log (mn))$, which can be exponentially smaller when $n \ll N$. This essentially is best possible for polynomial-time algorithms, since there is an $\Omega(\log m/ \log \log m)$ lower bound when $n \ll m$ \cite{DBLP:journals/siamcomp/GrandoniGLMSS13}, and there is no polynomial-time algorithm with approximation $o(\log n)$ unless $\P = \NP$ \cite{DBLP:journals/jacm/Feige98,dinur2014analytical}. We also generalize to the \emph{prophet} setting, in which $U$ consists of draws from a sequence of non-identical distributions $D^1, \ldots, D^n$.
In fact, we show that we do not need full knowledge of these distributions, and even a single sample from each distribution suffices. We also show extensions to two other related models, the \emph{2-stage} prophet setting, and the online-with-a-sample setting, as well as to several problems which generalize \setcov and covering.
We now present a more formal overview of these results.

\subsection{Our Results}
Our main contribution is a reduction from the prophet setting to the random-order online setting.
In random-order \setcov, the elements $U$ are adversarially chosen and revealed one at a time.
The algorithm must choose a set to cover the element, and decisions are irrevocable. Since the \loc algorithm of \cite{DBLP:conf/focs/0001KL21} is an $O(\log (mn))$-competitive algorithm for random-order \setcov, we immediately obtain:

\begin{theorem}
\label{thm:main_prophet}
There is a polynomial-time $O(\log (mn))$-competitive universal algorithm for $1$-sample prophet \setcov.
\end{theorem}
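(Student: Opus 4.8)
The plan is to prove \cref{thm:main_prophet} as an instance of a generic reduction: any $\alpha$-competitive random-order online algorithm for \setcov --- and, more generally, for any minimization problem whose optimum is subadditive over instances, which in particular includes all covering problems --- can be converted into an $O(\alpha)$-competitive universal algorithm for the $1$-sample prophet version. Instantiating it with the \loc algorithm of \cite{DBLP:conf/focs/0001KL21}, for which $\alpha = O(\log mn)$, then gives the theorem.

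To build the universal map $\mathfrak{S}$, first draw one sample $\tilde u^i \sim D^i$ for each $i \in [n]$ and write $\tilde U = \{\tilde u^1, \ldots, \tilde u^n\}$. Run \loc on $\tilde U$ presented in a uniformly random order, and let $\mathcal F$ be the sets it has bought and $x$ its internal fractional state at the end of this run. For every $u \in \mathcal U$, define $\mathfrak{S}(u)$ to be the set \loc would buy if $u$ arrived next out of state $(\mathcal F, x)$ (and if $u \in \bigcup_{S \in \mathcal F} S$ already, let $\mathfrak{S}(u)$ be a set of $\mathcal F$ covering it). This is polynomial time, and as a deterministic function of the $n$ samples it is a legitimate $1$-sample universal map. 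When the real instance $U = \{u^1, \ldots, u^n\}$ with $u^i \sim D^i$ arrives, the algorithm outputs $\mathfrak{S}(U) = \bigcup_{u \in U}\{\mathfrak{S}(u)\}$, of cost at most $c(\mathcal F) + \sum_{u \in U} c(\mathfrak{S}(u))$.

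The analysis proceeds in three steps. \emph{(1) Reduce to one online run:} show that $c(\mathcal F) + \sum_{u \in U} c(\mathfrak{S}(u))$ is at most $O(1)$ times the cost \loc incurs when run on the concatenation ``$\tilde U$ in random order, then $U$ in (say, random) order''. \emph{(2) Invoke random-order competitiveness:} the multiset $\tilde U \cup U$ has at most $2n$ elements, and since within each coordinate the pair $(\tilde u^i, u^i)$ is exchangeable (both come from $D^i$), the concatenated order can be coupled to a uniformly random order of $\tilde U \cup U$; hence \loc's expected cost on it is at most $O(\log(2mn))\cdot \expect[c(\opt(\tilde U \cup U))] = O(\log mn)\cdot \expect[c(\opt(\tilde U \cup U))]$. \emph{(3) Bound the optimum of the union:} any set cover of $\tilde U$ together with any set cover of $U$ covers $\tilde U \cup U$, so $\expect[c(\opt(\tilde U \cup U))] \le \expect[c(\opt(\tilde U))] + \expect[c(\opt(U))] = 2\,\expect[c(\opt(U))]$, the equality because $\tilde U$ and $U$ are identically distributed. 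Chaining (1)--(3) yields $\expect[c(\mathfrak{S}(U))] = O(\log mn)\cdot \expect[c(\opt(U))]$.

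I expect the main obstacle to be Step~(1). The map commits $\mathfrak{S}(u)$ against the \emph{frozen} post-sample state $(\mathcal F, x)$, whereas in an honest online run the state keeps growing as the real elements are served, so $\mathfrak{S}(u^i)$ need not agree with what \loc would do for $u^i$ in that run. To bridge this I anticipate we must use structural properties of \loc rather than treat it as a black box: it only ever adds sets, so a covered element stays covered; and the cost it charges for an uncovered element is governed by its current fractional guess $x$, which changes by at most one multiplicative-weights update per arrival --- so serving all of $U$ out of $(\mathcal F, x)$ costs, up to constants and the additive $c(\mathcal F)$, no more than the tail of an honest run. A secondary subtlety is ensuring that ``samples-then-reals'' genuinely behaves like a random order in Step~(2); the per-coordinate exchangeability of $(\tilde u^i, u^i)$ is designed to handle this, but one should check that \loc's guarantee is robust to the coupling rather than requiring an exactly uniform permutation over all $2n$ arrivals.
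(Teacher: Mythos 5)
Your high-level architecture matches the paper's: draw one sample per distribution, simulate the random-order algorithm on the samples, buy its output up front, and patch each real arrival that is still uncovered. But the analysis plan has a genuine gap in Steps (1)--(2). Step (2) does not go through: the order ``all of $\tilde U$ in random order, then all of $U$'' is \emph{not} couplable to a uniformly random permutation of the $2n$-element multiset $\tilde U \cup U$ (a uniform order would interleave the two halves), and \loc's guarantee is stated only for uniformly random order. This is not a technicality one can wave away --- the paper explicitly notes that set cover with elements arriving in randomly ordered \emph{batches} is strictly harder than uniform random order, and ``samples first, reals second'' is exactly a two-batch order. Step (1) is likewise not an argument yet: you name the obstacle (the map commits against the frozen state $(\mathcal F, x)$ while an honest run keeps evolving) and propose opening up \loc's internals, but no actual charging is carried out, and ``the set \loc would buy next'' is not even a single well-defined set since \loc samples a random collection on each uncovered arrival.

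The missing idea is a \emph{mate-charging} argument that makes the whole reduction black-box and avoids feeding the real elements to \loc at all. Pair each real request $v^t \sim D^t$ with its sample mate $\widehat v^t \sim D^t$ from the mock run, and cover an uncovered $v^t$ with the \emph{cheapest} augmenting set. Since the algorithm's solution only grows, the state $z(v^t)$ when $v^t$ arrives dominates the state $\widehat z(\widehat v^t)$ the simulation held when $\widehat v^t$ arrived, so by monotonicity of augmentation costs the backup cost of $v^t$ is at most $\aug{v^t}{\widehat z(\widehat v^t), \cdot}$; and because $v^t$ and $\widehat v^t$ are identically distributed and independent of the simulation's history, this has the same expectation as $\aug{\widehat v^t}{\widehat z(\widehat v^t), \cdot}$, which \loc already paid during the mock run. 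Summing over $t$ bounds the total backup cost by $\expect{c(\widehat z)} \leq \Delta\cdot\expect{\copt}$, giving $2\Delta$ overall --- with no appeal to subadditivity, no $2n$-element run, and no white-box properties of \loc beyond monotonicity of the solution. I'd encourage you to restructure the proof around this pairing; your Step (3) then becomes unnecessary.
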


Using similar techniques, we obtain theorems for the following two models as well. In the $2$-stage prophet model, the algorithm is allowed to purchase sets at a discount in a first stage before the game begins. In a second stage, $U$ is drawn at random as in the usual prophet setting, and any sets bought after seeing the realizations cost full price.

\begin{theorem}
\label{thm:main_2stage}
There is a polynomial-time universal algorithm for $2$-stage prophet \setcov that is $O(\log(mn))$-competitive with respect to the optimal online policy.
\end{theorem}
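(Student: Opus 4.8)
The plan is to reuse the $1$-sample algorithm behind Theorem~\ref{thm:main_prophet} essentially verbatim, spending the first stage only to pre-pay---at the discounted rate---for the sets that this algorithm would buy while digesting its free sample, and then billing the two stages separately against the optimal online policy $\opto$. In detail, draw one sample $\hat u_i \sim D^i$ for each $i \in [n]$ and let $\hat{U} := \{\hat u_1, \dots, \hat u_n\}$. In the first stage, run \loc on the elements of $\hat{U}$ presented in a uniformly random order, and \emph{pre-purchase}, at the discounted price, exactly the sets $F_1$ that \loc buys; by the random-order guarantee of \loc, $c(F_1) = O(\log mn) \cdot \coptof{\hat{U}}$. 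Retain \loc's internal state. In the second stage, as the true elements $u_1, \dots, u_n$ arrive one at a time, continue this very execution of \loc, paying full price for the further sets $F_2$ it buys; this is legitimate since $\bigcup F_1 \supseteq \hat{U}$ is already covered and \loc's fractional certificate is inherited intact.

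For the analysis, write the expected cost as $\sigma \cdot \expect{c(F_1)} + \expect{c(F_2)}$, where $\sigma \le 1$ is the first-stage discount. The first term equals $\sigma \cdot O(\log mn) \cdot \expect{\coptof{\hat{U}}} = \sigma \cdot O(\log mn) \cdot \expect{\coptof{U}}$ since $\hat{U}$ and $U$ have the same law; and since any policy---in particular $\opto$---covers $U$ with its combined first- and second-stage purchases, we have $c(\opto) \ge \sigma \cdot \expect{\coptof{U}}$, so this term is $O(\log mn) \cdot c(\opto)$ and the discount cancels. For the second term, the key observation is that $F_1 \cup F_2$ is exactly the output of one execution of \loc on the random ordering of $\hat{U}$ followed by $u_1, \dots, u_n$---precisely the configuration handled in the proof of Theorem~\ref{thm:main_prophet}, where the per-coordinate exchangeability of the pair $(\hat u_i, u_i)$ couples this input to a genuine random-order instance. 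Hence $\expect{c(F_1 \cup F_2)} = O(\log mn) \cdot \expect{\coptof{\hat{U} \cup U}} \le O(\log mn) \cdot \bigl(\expect{\coptof{\hat{U}}} + \expect{\coptof{U}}\bigr) = O(\log mn) \cdot \expect{\coptof{U}} = O(\log mn) \cdot c(\opto)$, so $\expect{c(F_2)}$ obeys the same bound; adding the two estimates proves the theorem.

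The step I expect to be the main obstacle is the second one: verifying that the random-order analysis of \loc underlying Theorem~\ref{thm:main_prophet} is undisturbed by splitting the execution across the two stages---so that the second-stage purchases $F_2$ can still be charged against the same potential even though the second stage inherits both $\bigcup F_1$ and \loc's fractional state from the first. A related point needing care is the discount accounting itself: one must pin down how $c(\opto)$ relates to $\expect{\coptof{U}}$ in the two-stage model and, if $\opto$ is itself allowed to pre-purchase, argue that pre-buying an approximately optimal cover of the sample is competitive with the optimal pre-purchase; the latter follows by feasibility, since $\opto$'s combined solution, read on the identically distributed sample instance, is itself a cover of $\hat{U}$, so that $\expect{\coptof{\hat{U}}}$ is dominated by the benchmark's total expected cost.
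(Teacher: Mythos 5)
There is a genuine gap in the second half of your accounting: the full-price second-stage cost cannot be charged to the optimal online policy the way you do it. Writing the benchmark $\opto$ with first-stage purchase $z_0^*$ (discounted by $\sigma$) and second-stage purchases $z_1^*$, its cost is $\mathbb{E}[\sigma\, c(z_0^*) + c(z_1^*)]$, which can be as small as $\sigma$ times the expected offline optimum $\mathbb{E}[c(\opt(U))]$ (e.g.\ a single deterministic element covered by a single unit-cost set: the benchmark pre-buys it for $\sigma$, while $c(\opt(U))=1$). Your first term survives because the discount $\sigma$ on $F_1$ cancels against $c(\opto) \ge \sigma\cdot\mathbb{E}[c(\opt(U))]$. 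But your second term is bounded only by $O(\log mn)\cdot \mathbb{E}[c(\opt(U))]$, and the final step ``$=O(\log mn)\cdot c(\opto)$'' --- and likewise your closing claim that $\mathbb{E}[c(\opt(\hat U))]$ is dominated by the benchmark's total expected cost --- uses the comparison in the wrong direction; each is off by a factor $1/\sigma$. As written, the argument yields only an $O(\log(mn)/\sigma)$ guarantee, which is unbounded as the discount deepens.

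The paper closes exactly this gap with boosted sampling. Setting $\lambda=1/\sigma$ (equivalently, normalizing so the second stage carries a markup $\lambda$), it draws $\lambda$ independent mock runs --- $\lambda n$ samples in total --- feeds all of them in random order to the random-order algorithm in the first stage, and in the second stage buys only cheapest augmentations for uncovered arrivals. Two consequences follow. First, by subadditivity $c(\opt(\textsc{MockRun})) \le c(z_0^*) + \sum_{i=1}^{\lambda} c(z_i^*)$, whose expectation is exactly the benchmark $Z=\mathbb{E}[c(z_0^*)+\lambda\, c(z_1^*)]$, so the first-stage cost is at most $\Delta\cdot Z$. Second, each real request $v^t$ now has $\lambda$ identically distributed mates among the mock samples, so by monotonicity of augmentation its expected backup cost is at most $\tfrac{1}{\lambda}$ of the total augmentation cost its mates force the simulation to pay; summing gives $\lambda\cdot\mathbb{E}[c(z_1)]\le \mathbb{E}[c(z_0)]$, which is precisely what absorbs the markup. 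Your single-sample coupling gives only $\mathbb{E}[c(z_1)]\le\mathbb{E}[c(z_0)]$, a factor $\lambda$ too weak. To repair the proposal, replace the single mock run by $\lceil 1/\sigma\rceil$ of them and redo the charging with $\lambda$ mates per real arrival; the question you flag about splitting one \loc execution across the two stages then becomes moot, since the simulation can be completed entirely in stage one.
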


In the \emph{online-with-a-sample} setting, an adversary selects an unknown element set $U$ and reveals a uniformly random $\alpha$-fraction of it to the algorithm. After this point, the remaining elements are revealed one-by-one in adversarial order. The algorithm must buy sets to cover incoming elements immediately on arrival, and decisions are irrevocable.

\begin{theorem}
\label{thm:main_was}
For every $O < \alpha \leq 1$, there is a polynomial-time $O(\log(mn)/\alpha)$-competitive universal algorithm for online-with-a-sample \setcov.
\end{theorem}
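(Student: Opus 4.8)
The plan is to reduce the online-with-a-sample problem to random-order \setcov and then invoke the $O(\log(mn))$-competitive guarantee of \loc from \cite{DBLP:conf/focs/0001KL21}. Let $U$ be the adversary's element set, $n=|U|$, let $S\subseteq U$ be the revealed sample (a uniformly random subset of size $\alpha n$), and let $T=U\setminus S$ be the adversarially ordered remainder, presented one element at a time. We may assume $n$ is known up to a constant factor --- either by a standard guess-and-double over the value of $n$, or by simply using the running count of distinct elements seen so far as the proxy ``$n$'' inside \loc --- so this costs only a constant in the competitive ratio. The algorithm is then: feed \loc the elements of $S$ in a uniformly random order, then feed it the elements of $T$ in the adversary's order, and output the union of all sets \loc ever buys. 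Since \loc maintains a feasible cover of everything it has seen at every step, this produces a cover of $U$ online, as required.

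I would then bound the cost in two phases. For the \emph{sample phase} (processing $S$), \loc is run on exactly a random-order instance: a uniformly random permutation of the (uniformly random) subset $S\subseteq U$. Hence by the guarantee of \loc its expected cost here is $O(\log(mn))\cdot\expectation_S[c(\opt(S))]$. Since any cover of $U$ is also a cover of $S\subseteq U$, we have $c(\opt(S))\le c(\opt(U))$ for every outcome of $S$, so the sample-phase cost is $O(\log(mn))\cdot c(\opt(U))$ in expectation.

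The crux is the \emph{adversarial phase} (processing $T$): I must show its expected cost is $O(\tfrac{1}{\alpha}\log(mn))\cdot c(\opt(U))$. The intuition is that, for covering purposes, a random $\alpha$-fraction ``trains'' \loc's internal fractional solution essentially as well as a $(1-o(1))$-fraction would, at the price of a $\tfrac1\alpha$ blow-up. To make this precise I would open up \loc's potential argument: \loc maintains a fractional set cover $x$ together with a multiplicative-weights-type update, and charges each set it buys (each ``cover step'') against a $\mathrm{KL}$-type potential taken relative to a fixed optimal cover of $U$; in random order a cover step makes $\Omega(1)$ expected progress on this potential, because a uniformly random not-yet-covered element is, with good probability, one that $x$ does not already cover well, so the renormalization loss incurred by the update is small. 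In the adversarial phase the adversary may deliberately present exactly those few elements of $T$ that $x$ covers poorly; but --- and this is the key point --- because the split of $U$ into $S$ and $T$ is uniformly random, each such poorly-covered element independently had probability $\approx\alpha$ of having landed in the sample, where \loc would already have paid to cover it and updated $x$ accordingly. A coupling argument along these lines bounds the expected number of cover steps in the adversarial phase by an $O(1/\alpha)$ factor times the expected number in the sample phase, which the previous paragraph already controls; adding the two bounds gives the claimed $O(\log(mn)/\alpha)\cdot c(\opt(U))$ and hence the theorem.

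I expect the main obstacle to be exactly this last coupling: the set of elements that \loc covers poorly at a given time is a function of \loc's state, which itself depends on which elements fell into $S$, so one cannot just apply concentration to a fixed set. The way I would handle it is to reveal the randomness of the $S/T$ split element-by-element, in the order \loc processes elements, and run a supermartingale argument on \loc's potential in which every adversarial arrival is credited with the $\approx\alpha$ chance that it ``should have been'' a training example drawn into the sample; one also needs the routine, lossless step of passing between the exact-size-$\alpha n$ sampling model and the independent $\mathrm{Bernoulli}(\alpha)$ model.
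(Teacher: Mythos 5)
Your algorithm and the first half of your analysis match the paper's: run \loc on the sample in a uniformly random order, and bound that phase by $O(\log mn)\cdot \mathbb{E}[c(\opt(S))]\le O(\log mn)\cdot c(\opt(U))$ using monotonicity of \opt. The gap is in the adversarial phase. You propose to keep feeding the adversarially ordered elements into \loc and to re-derive its potential (supermartingale) analysis under semi-adversarial arrivals via a coupling with the sample; but this coupling is precisely the hard step, you leave it as an intuition rather than a proof, and it is far from clear it can be made to work. \loc's covering-progress lemma genuinely needs the arriving uncovered element to be a \emph{uniformly random} member of the remaining uncovered set (an adversary can repeatedly present elements whose fractional coverage $X(v)$ sits in the dead zone where neither the KL term nor the covering term is guaranteed to drop, which is why adversarial-order set cover has an $\Omega(\log m\log n)$ lower bound). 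The heuristic that each poorly covered element ``had probability $\alpha$ of landing in the sample'' does not obviously convert into a per-round potential drop, because the set of poorly covered elements at time $t$ is a function of \loc's state, which is itself determined by the realized sample; you acknowledge this circularity but do not resolve it.

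The paper avoids this entirely (Theorem 5.1 / Reduction 3): after the sample phase it \emph{freezes} \loc, and for each adversarial element $v^t$ arriving uncovered it simply buys the cheapest augmentation ($\textsc{backup}$) for $v^t$ --- no further learning steps, no re-analysis of the potential. The charging is combinatorial: view the sample as obtained by partitioning $U$ into $\alpha n$ groups of size $1/\alpha$ and taking one uniformly random representative $r(v)$ per group. By monotonicity of augmentation costs (the \aip property), the backup cost of $v^t$ against the current, larger solution is at most its augmentation cost against \loc's solution at the moment $r(v^t)$ arrived in the sample phase; by exchangeability of $v^t$ with $r(v^t)$, this equals in expectation the augmentation cost of $r(v^t)$ itself, which \loc actually paid. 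Each representative is charged by the $1/\alpha$ members of its group, so the total backup cost is at most $\frac{1}{\alpha}\,\mathbb{E}[c(\widehat z)] \le \frac{\Delta}{\alpha}\,\mathbb{E}[c(\opt(U))]$. If you want to salvage your write-up, replacing your adversarial-phase argument with this backup-and-charge scheme is the missing idea.
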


In fact our reduction holds for a more general class of minimization problems which we call \emph{augmentable integer programs} ({\aip}s). These are problems for which augmenting any infeasible solution $z$ to a feasible solution gets cheaper as $z$ grows\footnote{Such problems were studied by \cite{DBLP:conf/soda/GargGLS08} but were not given a name. These are intuitively related to \emph{covering} problems; we reserve the term covering for problems with upward closed feasible regions. See \cref{sec:prelim} for details.}. Hence we reduce the task of designing prophet/$2$-stage prophet/online-with-a-sample algorithms for {\aip}s to the task of designing random-order algorithms. 
To illustrate our reductions, we additionally give $O(\log mn)$-competitive algorithms for \nmfl and \setmcov in random order. 
This marks partial progress in answering an open question of \cite{DBLP:conf/focs/0001KL21}, which asked if there is an $O(\log mn)$-competitive algorithm for covering integer programs with box constraints in random order.

\subsection{Techniques and Overview}

The proofs of \cref{thm:main_prophet,thm:main_2stage,thm:main_was} share a common template. The main idea is to reduce from setting $\mathcal{X}$ to random-order \setcov as follows: 
\begin{enumerate}
    \item Generate a mock instance $\widehat I \sim \mathcal{X}$.
    \item Simulate algorithm $\mathcal{A}$ for random-order \setcov on the mock input $\widehat I$ by shuffling the order artificially.
    \item Solve $\mathcal{X}$ on the real input $\mathcal{I} \sim \mathcal{X}$ by first buying the solution $z$ bought by $\mathcal{A}$, then covering any outstanding uncovered element $v^t$ with the cheapest set containing $v^t$.
\end{enumerate}
The idea is to charge the ``backup'' sets bought to cover any elements missed by $\mathcal{A}$ to the actual decisions made by $\mathcal{A}$, which we can bound using the performance guarantees on $\mathcal{A}$. This perspective allows us to give us proofs that are simple in hindsight; prior to our work it was not known how to obtain such results. We treat each model separately in \cref{sec:prophet,sec:2stage,sec:was}.

In \cref{sec:nmfl,sec:setmcov} we give random-order algorithms for \nmfl and \setmcov.
This demonstrates the generality of our reductions, and also illustrates the versatility of the ``Learn or Cover'' framework of \cite{DBLP:conf/focs/0001KL21} beyond pure covering problems.
These are the most technically involved sections of this work.

These results build upon the \loc framework of \cite{DBLP:conf/focs/0001KL21} designed for set cover. With every element that arrives uncovered, \loc (a) samples from a distribution over sets, and (b) learns from the fact that a random element was uncovered to update the distribution. \cite{DBLP:conf/focs/0001KL21} show the algorithm either makes progress \emph{learning} about the optimal distribution from which one should be sampling, or if it does not then it makes progress \emph{sampling} since the distribution is already sufficiently good. They used a two-part potential, where the parts measure progress learning and covering respectively.

Non-metric facility location is often treated as an extension of set cover, since there are standard reductions between the two (\cite[Section 3.1]{vygen2005approximation} and \cite{kolen1984covering}). 
However, these reductions do not hold in the random-order model. 
The first reduction from \nmfl to \setcov, which is folklore, requires an exponential blowup in the number of sets. This is prohibitive since one must in general lose a $\Omega(\log m)$ factor for random-order online set cover \cite{DBLP:conf/focs/0001KL21}. 
The second reduction introduces a new set and a new element for every facility-client pair; thus a client arriving in random order becomes a batch of new sets and elements.
However, online set cover in which elements arrive in randomly ordered \emph{batches} is in general harder than true uniform random order since \cite{DBLP:conf/focs/0001KL21} show a doubly logarithmic lower bound for  this problem. Therefore both reductions face obstacles in the random-order setting, and a new approach is needed.

The primary challenge in random-order \facloc is to account for \emph{connection costs}. This makes the task of learning a distribution over facilities complex, since the costs of satisfying arriving clients to change over time.
Our approach may be  viewed as running \loc on a set system that evolves dynamically over time: each facility is a set, each client is an element, and a client's element is contained in a facility's set if opening that facility significantly reduces that client's connection cost. We reuse the high level learn/cover idea, but we need to use a more intricate potential to measure progress learning.

Finally, our random-order \setmcov algorithm builds on the slightly more involved algorithm of \cite{DBLP:conf/focs/0001KL21} for random-order {\cip}s. This involves several technical challenges. For one, the two-part potential of \cite{DBLP:conf/focs/0001KL21} expects that if a variable's probability in the maintained distribution is high, then it will contribute towards covering unseen constraints in expectation. However multiplicity constraints prohibit the algorithm from sampling any variable more than once, even in this case. We show that this difficulty can nevertheless be circumvented by gradually ``forgetting'' coordinates that have already hit their caps; interestingly our multiplicative weights update rule does not depend on the marginal augmentation cost of the incoming constraint, as it does in \cite{DBLP:conf/focs/0001KL21}.

\subsection{Related Work}

    The term \emph{prophet inequality} is usually used in the context of online max finding: a gambler draws numbers one-by-one from a sequence of known distributions, and their task is to stop at the highest number. 
    Prophet inequality refers to the bound on the performance of such a player in terms of that of a clairvoyant ``prophet'' who can see the future. \cite{krengel1978semiamarts} showed a strategy for this game with expected reward at least $\nicefrac{1}{2}$ that of the prophet (see \cite{hill1992survey} for a further survey). 
    The \emph{secretary problem} \cite{ferguson1989solved} is a related max-finding game in which the gambler sees \emph{arbitrary} numbers in \emph{random} order, and once again aims to stop at the highest number.
    \cite{DBLP:conf/soda/AzarKW14} gave a $\nicefrac{1}{e}$-competitive strategy for the prophet problem (and extensions) via a reduction to (a subclass of algorithms for) the secretary problem, and this bound was later improved to $\nicefrac{1}{2}$ by \cite{DBLP:conf/innovations/RubinsteinWW20}.
    Our main result may be viewed as a minimization counterpart of the prophet-to-secretary reductions of \cite{DBLP:conf/soda/AzarKW14} for maximization problems.
    
    \emph{Free-order} prophet inequalities, in which the gambler can adaptively choose the order in which to open boxes, were studied by \cite{DBLP:conf/sigecom/LiuLPSS21,DBLP:conf/focs/PengT22,DBLP:journals/corr/abs-2211-04145}. Our main result implies that for covering problems, the constrained-order prophet problem is---up to a factor of two---no harder than its free-order counterpart.

    Previous work of \cite{dehghani2018greedy} claimed a reduction from the prophet set cover problem to universal algorithms for the stochastic (in other words i.i.d.) set cover problem. However the proof (which appears in Section 9.5 of \cite{ehsani2017online}) has an issue which we detail in \cref{sec:error}, and the claim has since been withdrawn \cite{saeedemails}.

    Motivated by settings where an algorithm has access to historical
    data, \cite{DBLP:conf/soda/KaplanNR22,DBLP:conf/soda/KaplanNR20}
    recently introduced the online-with-a-sample model in the context
    of max-finding (i.e. the secretary problem), and
    matching. \cite{AFGS22-neurips} study Steiner tree, facility
    location and load balancing in this model.

    There is considerable work on 2-stage (and more generally
    multi-stage) stochastic optimization from the perspective of
    approximations
    (e.g.,~\cite{DBLP:journals/jacm/ShmoysS06,DBLP:conf/approx/CharikarCP05,DBLP:conf/stoc/GuptaPRS04,DBLP:conf/approx/GuptaPRS05,DBLP:journals/mor/GuptaRS07}),
    see \cite{DBLP:conf/fsttcs/SwamyS06,birge2011introduction} for
    surveys. Our $2$-stage prophet model is a \emph{hybrid}
    stochastic-online model in which the second stage is a fully
    online game; as far as we know, this model has not been
    previously studied.

    Finally, using our reduction framework and the $3$-competitive random-order algorithm of \cite{kaplan2023almost}, we automatically get $6$-competitive universal algorithm for the prophet \emph{metric} facility location problem with a single sample per distribution. A similar result for the special case where all the distributions are identical is implied by previous work of \cite{DBLP:conf/soda/GargGLS08}.
    
    \section{Preliminaries}

    \label{sec:prelim}

    All logarithms in this paper are taken to be base $e$. In the following definitions, let $x,y \in \R^n_+$ be vectors. 
        The standard dot product between $x$ and $y$ is denoted
	$\langle x, y\rangle = \sum_{i=1}^n x_i y_i$. We use $\max(x,y)$ to denote the coordinate-wise maximum. We use a weighted generalization of KL divergence. Given
        a weight function $c$, define
	\[
        \wKL{x}{y} : = \sum_{i=1}^n c_i \left[x_i \log \left(\frac{x_i}{y_i}\right) -x_i + y_i\right].
    \]

    \paragraph{Augmentable Integer Programs.} A \emph{covering} integer program (\cip) is usually defined as an integer program (IP) for which the set of feasible solutions is upwards closed. We define the following more general class of problems which we call \emph{augmentable integer programs} (AIPs). These were studied in \cite{DBLP:conf/soda/GargGLS08}, but not given an explicit name.

        Let $V$ be a set of requests. For any subset of requests $V' \subseteq V$, let $\textsc{Sols}(V') \subseteq \Z^m$ be the subset of solutions that are feasible to $V'$. Next, for any subset of requests $V'$, any solution $z \in \textsc{Sols}(V')$, and any  request set $W$, define the augmentation cost
    \[
        \aug{W}{z, V'} = \min_w \left\{ c(w) \ \middle| \ \text{s.t. } \max(w,z) \in \textsc{Sols}(V' \cup W) \right\}, 
    \]
    or $\infty$ if no such $w$ exists.
    Let $\backup(W \mid z, V')$ be a minimizer when it exists.

    \begin{definition}[\aip] \label{defn:aip}
    An \emph{augmentable} integer linear program (\aip) is one in which augmentation costs are \emph{monotone}, i.e. for any $V' \subseteq V'' \subseteq V$, and any $z' \leq z''$ such that  $z' \in \textsc{Sols}(V')$ and $z'' \in \textsc{Sols}(V'')$, we have  $\aug{W}{z'', V''} \leq \aug{W}{z', V'}$ for any request set $W \subseteq V$.
    \end{definition}

    \begin{observation}[{\aip}s are subadditive]
    \label{obs:aip_subadd}
        For any $A, B \subseteq V$, we have $\opt(A \cup B) \leq \opt(A) + \opt(B)$.
    \end{observation}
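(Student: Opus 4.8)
The plan is to glue together optimal solutions for $A$ and $B$ using the monotonicity axiom of \cref{defn:aip}. First I would fix an optimal solution $z_A \in \textsc{Sols}(A)$, so $c(z_A) = \opt(A)$, and let $w := \backup(B \mid z_A, A)$ be a cheapest augmentation of $z_A$ into a solution feasible for $A \cup B$ (this exists because, as I argue below, the augmentation cost is finite). By definition $\max(w, z_A) \in \textsc{Sols}(A \cup B)$, and since $\max(w,z_A) \le w + z_A$ coordinatewise while $c$ is a nonnegative linear objective, we get
\[
\opt(A \cup B) \;\le\; c(\max(w, z_A)) \;\le\; c(w) + c(z_A) \;=\; \aug{B}{z_A, A} + \opt(A).
\]

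Next I would bound $\aug{B}{z_A, A} \le \opt(B)$ directly from the {\aip} property. Apply \cref{defn:aip} with $V' = \emptyset$, $V'' = A$, $z' = \mathbf 0$, $z'' = z_A$, and request set $W = B$: the hypotheses $\emptyset \subseteq A$, $\mathbf 0 \le z_A$, $\mathbf 0 \in \textsc{Sols}(\emptyset)$, and $z_A \in \textsc{Sols}(A)$ all hold, so monotonicity yields $\aug{B}{z_A, A} \le \aug{B}{\mathbf 0, \emptyset}$. But augmenting the zero solution over the empty request set so as to cover $B$ is just solving the instance $B$ from scratch, i.e.\ $\aug{B}{\mathbf 0, \emptyset} = \min\{ c(w) : w \in \textsc{Sols}(B)\} = \opt(B)$ (in particular it is finite). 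Chaining this with the previous display gives $\opt(A \cup B) \le \opt(A) + \opt(B)$, and by symmetry the same argument works starting from an optimal solution for $B$.

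I do not expect a genuine obstacle here: the statement is essentially an immediate consequence of the monotonicity axiom. The only points requiring care are the standing conventions that solutions are nonnegative and that the all-zero vector is feasible for the empty request set (so that $\aug{B}{\mathbf 0, \emptyset} = \opt(B)$), together with the assumption that $c$ is a nonnegative linear objective (so that $c(\max(w,z_A)) \le c(w)+c(z_A)$); for full rigor these should be recorded explicitly alongside \cref{defn:aip}.
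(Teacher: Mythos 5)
Your proof is correct and follows essentially the same route as the paper's: both bound $\opt(A\cup B)$ by $\opt(A) + \aug{B}{\opt(A),A}$ and then apply the monotonicity axiom with exactly the instantiation $z'=\vec 0$, $z''=\opt(A)$, $V'=\emptyset$, $V''=A$, $W=B$ to get $\aug{B}{\opt(A),A}\le\opt(B)$. Your version merely spells out a few conventions (nonnegative linear costs, feasibility of $\vec 0$ for the empty request set) that the paper leaves implicit.
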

    \begin{proof}
    We have that 
    \[\opt(A \cup B) \leq \opt(A) + \aug{B}{\opt(A),A} \leq \opt(A) + \opt(B).\]
    The first inequality follows since building a solution feasible to $A$ and then augmenting it to satisfy $B$ is only more expensive than $\opt(A \cup B)$. The second inequality follows from the monotonicity of augmentation costs property of {\aip}s, with $z' = \vec 0$, $z'' = \opt(A)$, $V' = \emptyset$, $V'' = A$, and $W = B$.
    \end{proof}

    Note that the standard IP formulation of \nmfl with indicator variables for $\{x_f\}_f$ for facilities, and $\{y_{fc}\}_{f,c}$ for facility-client connections is an \aip, but not a \cip. Likewise, \setmcov is an \aip but not a \cip.

    \paragraph{Online Models.}
    We briefly catalogue the various models that we treat in this paper.
    \begin{enumerate}[nosep]
        \item An \emph{online} \aip is an \aip in which some constraints are given upfront, and some are revealed sequentially over time. The algorithm must maintain a monotonically increasing solution that satisfies all the constraints revealed so far.
        \item A \emph{prophet} \aip instance is an online \aip instance in which the constraints $v^1, \ldots, v^n$ are drawn from known distributions $D^1, \ldots, D^n$.
        \item A \emph{k-sample} prophet \aip instance is an online \aip in which the constraints $v^1, \ldots, v^n$ are drawn from \emph{unknown} distributions $D^1, \ldots, D^n$, except the algorithm is given $k$ samples from each of the distributions before the online sequence begins.
        \item A \emph{free-order} (resp. $k$-sample) prophet \aip instance is a (resp. $k$-sample) prophet \aip instance in which the algorithm is allowed to adaptively decide the order in which it samples the known (resp. unknown but sampled $k$ times) distributions $D^1, \ldots, D^n$.
        \item A \emph{2-stage} prophet \aip instance is a prophet \aip instance and a positive number $\lambda$. The algorithm is allowed to purchase an initial solution $z_1$ before the online sequence begins, and any purchases $z_2$ made during the online sequence suffer a markup cost of $\lambda$.
        \item An \emph{online-with-a-sample} \aip instance is an online \aip instance in which, after the adversary fixes the input, the algorithm is given a uniformly random $\alpha$ fraction of the sequence upfront.
    \end{enumerate}
    \section{Universal Prophet Algorithms}

\label{sec:prophet}

In this section we prove theorem \cref{thm:main_prophet} via a reduction to random-order set cover. Recall that \cite{DBLP:conf/soda/AzarKW14} gave such a reduction for maximization problems. Our results are a complimentary attempt to do this for minimization problems.

In fact, our reduction is more powerful in two ways:
\begin{itemize}
    \item It holds for {\aip}s generally, beyond set cover. Hence to construct a prophet algorithm for an \aip, it suffices to construct a random-order algorithm.
    \item We require a weaker property even than random order. In fact, we can reduce the prophet setting to the \emph{free-order} prophet setting, where the algorithm is granted the freedom to choose the order in which it samples the distributions $D_1, \ldots, D_n$. Random-order algorithms are a special class of free-order algorithms.
\end{itemize}

\begin{theorem}
\label{thm:proph_to_customproph}
Let $\mathcal{I}$ be an instance class of prophet \aip. If algorithm $\mathcal{A}$ is a free-order prophet \aip algorithm that achieves competitive ratio $\Delta$ on class $\mathcal{I}$ using $k$ samples, then there is a fixed-order prophet \aip algorithm $\mathcal{A}'$ for class $\mathcal{I}$ achieving competitive ratio $2\Delta$ using $k+1$ samples.
\end{theorem}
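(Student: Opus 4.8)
The plan is to simulate the free-order algorithm $\mathcal{A}$ on a \emph{mock} instance and use its chosen order as the (now artificially imposed) random order in which the real instance is processed, paying one extra sample to generate the mock instance. Concretely, $\mathcal{A}'$ receives $k+1$ samples from each distribution $D^i$. It sets aside the first sample from each $D^i$ to form a mock instance $\widehat{\mathcal{I}}$: mock request $\widehat{v}^i$ is this first sample, for $i = 1, \ldots, n$. The remaining $k$ samples from each $D^i$ are handed to $\mathcal{A}$ as its sample access. Now $\mathcal{A}'$ runs $\mathcal{A}$ on $\widehat{\mathcal{I}}$: since $\mathcal{A}$ is a free-order algorithm, at each step it names the index $i$ of the next distribution it wishes to query; $\mathcal{A}'$ feeds it $\widehat{v}^i$. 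This produces a solution $\widehat{z}$ (the sets/variables $\mathcal{A}$ purchases on the mock run) together with the order $\pi$ in which $\mathcal{A}$ chose to open the distributions. Crucially, because the $k+1$ samples are i.i.d.\ and $\mathcal{A}$'s behavior is measurable, the pair $(\widehat{\mathcal{I}}, \pi)$ has exactly the distribution of a run of $\mathcal{A}$ on a fresh prophet instance in free order, and in particular $\expect{c(\widehat{z})} \le \Delta \cdot \expect{\opt(\widehat{V})}$, where $\widehat{V} = \{\widehat v^1, \dots, \widehat v^n\}$.

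Next, $\mathcal{A}'$ processes the \emph{real} online sequence. The real requests $v^1, \ldots, v^n$ arrive in the fixed order $1, 2, \ldots, n$, but $\mathcal{A}'$ pretends they arrive in the order $\pi$ dictated by the mock run: when the real $v^i$ arrives, $\mathcal{A}'$ treats it as occupying position $\pi^{-1}(i)$. It first commits to buying all of $\widehat{z}$ up front (or incrementally; monotonicity of the solution makes this harmless). Then, for each real request $v^i$, if $v^i$ is already feasible given everything purchased so far, $\mathcal{A}'$ does nothing; otherwise it buys a cheapest augmentation, i.e. $\backup(\{v^i\} \mid z_{\text{cur}}, V_{\text{seen}})$ where $z_{\text{cur}}$ is the current solution and $V_{\text{seen}}$ the set of real requests seen so far. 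Call the total cost of these backup augmentations $\textsc{Backup}$. The output cost of $\mathcal{A}'$ is then $c(\widehat{z}) + \textsc{Backup}$.

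The key step is to bound $\expect{\textsc{Backup}}$ by $\expect{c(\widehat z)} + \text{(something)} \le (\Delta+1)\,\expect{\opt(V)}$, yielding total $2\Delta \cdot \expect{\opt}$ after absorbing constants appropriately; let me describe the charging. For the backup argument we want to charge the augmentation cost of real request $v^i$ against the decision $\mathcal{A}$ made at the corresponding step of the mock run. Here is the crux: in the mock run, at the step where $\mathcal{A}$ opened distribution $i$, it saw the mock sample $\widehat v^i$ and, having solution $\widehat z_{<i}$ feasible to the mock requests seen so far, it paid $\aug{\{\widehat v^i\}}{\widehat z_{<i}, \widehat V_{<i}}$ to cover it. The real $v^i$ and mock $\widehat v^i$ are i.i.d.\ from $D^i$, and $\widehat z_{<i}$ is independent of the real $v^i$; moreover, by monotonicity of augmentation costs (\cref{defn:aip}) the real algorithm's current solution dominates $\widehat z_{<i}$ (it contains all of $\widehat z$), so $\aug{\{v^i\}}{z_{\text{cur}}, V_{\text{seen}}} \le \aug{\{v^i\}}{\widehat z_{<i}, \widehat V_{<i}}$. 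Taking expectations and using that $v^i \stackrel{d}{=} \widehat v^i$ conditioned on the history, the expected real backup cost at step $i$ is at most the expected mock augmentation cost at that step, and summing over $i$ gives $\expect{\textsc{Backup}} \le \expect{c(\widehat z)} \le \Delta \cdot \expect{\opt(\widehat V)} = \Delta \cdot \expect{\opt(V)}$, the last equality because $\widehat V$ and $V$ are identically distributed. Hence the total cost is at most $2\Delta \cdot \expect{\opt(V)}$.

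The main obstacle is making the coupling in the charging argument fully rigorous: one must be careful that the ``history'' against which $\widehat z_{<i}$ is measurable (the mock samples of distributions opened before $i$, plus the $k$ sample sets) is independent of the real request $v^i$, so that conditioning does not bias the distribution of $v^i$; and one must verify that the order $\pi$ is itself a measurable function of that same history, so that the index $i$ being ``the $t$-th opened'' does not itself leak information about $v^i$. The clean way to handle this is to define everything on the product space of the $k+1$ sample batches, note $\mathcal{A}$'s entire mock transcript $(\pi, \widehat z)$ is a deterministic function of the first $n$ mock samples and the $k$ side-samples, and observe that the real sequence is an \emph{independent} fresh draw from $D^1 \times \cdots \times D^n$ — so step $i$'s backup cost, as a function of (independent) $v^i$ and (history-measurable) $\widehat z_{<i}, \widehat V_{<i}$, has conditional expectation exactly matching the mock augmentation cost at that step. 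A secondary point to check is that buying $\widehat z$ entirely up front is legitimate in the online \aip model (solutions are monotone, and feasibility is upward closed in the solution), and that the ``$+1$ sample'' bookkeeping is exactly right.
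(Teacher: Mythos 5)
Your proposal is correct and is essentially the paper's own proof: form a mock run from one extra sample per distribution, simulate the free-order algorithm $\mathcal{A}$ on it in its chosen order $\pi$, buy its output $\widehat z$ up front, and charge each real request's backup augmentation to its identically-distributed ``mate'' in the mock run via the monotonicity of augmentation costs, giving $\expect{\textsc{Backup}} \leq \expect{c(\widehat z)} \leq \Delta\cdot\expect{\copt}$ and hence $2\Delta$ in total. One bookkeeping point to fix: the request set in your backup call should be $\textsc{MockRun}$ together with the real requests seen so far (as in the paper's reduction), not the real requests alone, since otherwise the containment $V' \subseteq V''$ needed to invoke the monotonicity property of \cref{defn:aip} against $\widehat V_{<i}$ does not hold.
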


In particular, random-order algorithms are $0$-sample free-order prophet algorithms, and furthermore \loc of \cite{DBLP:conf/focs/0001KL21} is a random-order set cover algorithm. Hence we get \cref{thm:main_prophet} as a corollary.

 \begin{proof}[Proof of \cref{thm:proph_to_customproph}]

 Let $\mathcal{A}$ be a $k$-sample algorithm for prophet \aip using custom/adaptive order $\pi$ and with expected competitive ratio $\Delta$. 
 Define $\mathcal{A}'$ to be \Cref{alg:proph_to_rosc}:

\makeatletter
\renewcommand{\ALG@name}{Reduction}
\makeatother

\begin{algorithm}[H]
\caption{\textsc{($k+1$)-sample prophet to $k$-sample free order prophet}}
\label{alg:proph_to_rosc}
\begin{algorithmic}[1]
    \State Train algorithm $\mathcal{A}$ on $k$ samples each of $D^1, \ldots, D^n$.
    \State Let $\textsc{MockRun} = \{\widehat v^1, \ldots, \widehat v^n\}$ be one sample each from $D^1, \ldots, D^n$. 
    \For{$\tau = 1, 2, \ldots, n$}
    \State $\pi^\tau \leftarrow$ the $\tau^{th}$ sample in (possibly adaptive) order $\pi$ specified by $\mathcal{A}$. \label{line:draw_freeorder}
    \State Feed $\pi^\tau$ to $\mathcal{A}$.
    \EndFor
    \State Let $\widehat z$ be the output of $\mathcal{A}$.
    \State Initialize $z \leftarrow \widehat z$.
    \For{$t=1,2,\ldots, n$}
    \State Draw $v^t \sim D^t$.  \label{line:draw_fixedorder}
    \If{$v^t$ not satisfied by $z$}
    \State Update $z \gets \max(z, \textsc{backup}(v^t \mid z, \textsc{MockRun}\cup\{v^1, \ldots, v^{t-1}\}))$. \label{line:proph_to_rosc_backup} 
    \EndIf
    \EndFor    
    \State \Return $z$.
\end{algorithmic}
\end{algorithm}

\makeatletter
\renewcommand{\ALG@name}{Algorithm}
\makeatother

Clearly $\mathcal{A}'$ uses $k+1$ samples, since that is enough samples to simulate $\mathcal{A}$: algorithm $\mathcal{A}$ requires $k$ samples upfront, and $\mathcal{A}'$ uses one more to simulate the real draw from each distribution. We turn to bounding the cost of $\mathcal{A}$.

The two sets of samples $v_1, \ldots, v_n$ and $\widehat v^1, \ldots, \widehat v^n$ are identically distributed, so \[\expect{\coptof{v_1, \ldots, v_n}} = \expect{\coptof{\widehat v^1, \ldots, \widehat v^n}}.\] Thus the expected cost of the solution $\widehat z$ bought by $\mathcal{A}$ is at most $\Delta \cdot \expect{\coptof{v_1, \ldots, v_n}}$, by the guarantee on $\mathcal{A}$. It remains to bound the cost of the backup purchases in \cref{line:proph_to_rosc_backup}.
To this end, consider each pair of requests $v^t, \widehat v^t \sim D^t$, where $v^t$ is drawn on \cref{line:draw_fixedorder}, and $\widehat v^t$ is part of the mock run specified on \cref{line:draw_freeorder}.
We will refer to these pairs of requests as \emph{mates}.

We will argue that the expected augmentation cost of a request $v^t$ is no more than the expected augmentation cost of its mate $\widehat v^t$ during the simulation of $\mathcal{A}$. Towards this, let $z(v)$ be the state of $z$ at the beginning of the round in which request $v$ arrives. Let $\widehat z(\widehat v)$ be the state of the solution of algorithm $\mathcal{A}$ at the beginning of round in which $\widehat v$ arrives in the simulation of $\mathcal{A}$ (that is, \emph{in the order $\pi$ chosen by $\mathcal{A}$}). Finally, let $\textsc{MockRun}_{<\widehat v}$ be the set of requests of $\textsc{MockRun}$ that arrive before $\widehat v$ according to order $\pi$. Now:
\begin{align*}
	\expect*{\aug{v^t}{z(v^t), \textsc{MockRun}\cup\{v^1, \ldots, v^{t-1}\}}}&\leq \expect*{\aug{v^t}{\widehat z(\widehat v^t),  \textsc{MockRun}_{<\widehat v^t}}} \\
 &= \expect*{\aug{\widehat v^t}{\widehat z(\widehat v^t), \textsc{MockRun}_{<\widehat v^t}}} 
	 \intertext{The inequality holds by the monotonicity of augmentation property of {\aip}s in \cref{defn:aip}, since $z(v^t) \geq \widehat z(\widehat v^t)$. The equality holds because $v^t$ and $\widehat v^t$ are identically distributed. Summing over $t$, and noting that $\mathcal{A}$ must pay at least $\aug{\widehat v^\tau}{\widehat z(\widehat v^\tau), \textsc{MockRun}_{<\widehat v^\tau}}$ in round $\tau$ in the event that $\widehat v^\tau$ is unsatisfied on arrival, we get that the total backup cost is bounded as}
	\sum_t \expect*{\aug{v^t}{z(v^t), \textsc{MockRun}\cup\{v^1, \ldots, v^{t-1}\}}} &\leq \sum_t \expect*{\{\aug{\widehat v^t }{\widehat z(\widehat v^t), \textsc{MockRun}_{<\widehat v^t}}} \\
 &\leq \expect{c(\widehat z)}, 
\end{align*}
which we can bound by $\Delta \cdot \expect{\coptof{v_1, \ldots, v_n}}$ by assumption of $\mathcal{A}$. Thus, in total, \cref{alg:proph_to_rosc} pays at most $2 \cdot \Delta \cdot \expect{\copt}$.
 \end{proof}

We note that this proof only requires the monotone augmentation cost property for individual requests.
    \section{Two-Stage Prophet Algorithms}

    \label{sec:2stage}

    Recall the 2-stage prophet setting. At the outset we have sample access to distributions $D_1, \ldots, D_t$, as well as some $\lambda > 0$.
    \begin{itemize}
        \item
        Stage 1: The algorithm may buy a partial solution $z_0$ and incur cost $c(z_0)$. 
        \item
        Stage 2: Requests $v_t \sim D_t$ arrive one-at-a-time and the algorithm must augment its solution to satisfy them immediately. If $z_1$ is the portion of the solution bought in this second phase, the algorithm incurs an additional cost of $\lambda \cdot c(z_1)$.
    \end{itemize}
    Note that for $0 < \lambda \leq 1$ the algorithm should always wait to buy sets online, and this reduces to the prophet setting above. 
    For $\lambda > 1$ we may assume without loss of generality that $\lambda$ is an integer (at the expense of a small constant factor).
    
    Our aim is to compete with $\opto$, the solution bought by the optimal online algorithm for this two-stage problem. We write  $\opto= \max(z_0^*, z_1^*)$ where the solution bought in advance $z_0^*$ is deterministic, and the solution bought during the online sequence $z_1^*$ depends on the realizations of the draws from the distributions. 

    Our main result in this section is that the $2$-stage setting is no harder than the random-order setting.
    
    \begin{theorem}
    \label{thm:2stage_to_rosc}
    Let $\mathcal{I}$ be an instance class of prophet \aip. If algorithm $\mathcal{A}$ is a random-order \aip algorithm that achieves competitive ratio $\Delta$ on class $\mathcal{I}$, then there is a $2$-stage prophet \aip algorithm $\mathcal{A}'$ for class $\mathcal{I}$ achieving competitive ratio $2\Delta$ with respect to the optimal online policy using $\lambda$ samples.
    \end{theorem}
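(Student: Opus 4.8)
The plan is to adapt the reduction of \cref{thm:proph_to_customproph}, spending the $\lambda$ available samples (recall $\lambda$ may be taken to be a positive integer) to drive a single, \emph{inflated} mock run. Concretely, the $2$-stage algorithm $\mathcal{A}'$ works as follows. In Stage~1 it draws $\lambda$ samples $\widehat v_t^1,\dots,\widehat v_t^\lambda \sim D_t$ from every distribution, forms the mock request set $\widehat I := \{\widehat v_t^j : t\in[n],\, j\in[\lambda]\}$ of size $\lambda n$, runs the random-order algorithm $\mathcal{A}$ on $\widehat I$ presented in uniformly random order, and buys its output $\widehat z$ as its Stage-1 solution (which is at the unmarked-up price). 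In Stage~2, whenever a real request $v_t \sim D_t$ arrives uncovered by the current solution $z$, the algorithm buys $\textsc{backup}(v_t \mid z, \widehat I \cup \{v_1,\dots,v_{t-1}\})$ and sets $z \gets \max(z, \textsc{backup}(\cdot))$, exactly as in \cref{alg:proph_to_rosc}; these Stage-2 purchases suffer the markup $\lambda$. Write $z(v_t)$ for the state of $z$ just before $v_t$ arrives, $\widehat z(\widehat v_t^j)$ for the state of $\mathcal{A}$'s solution just before it processes $\widehat v_t^j$, and call $v_t$ and each of $\widehat v_t^1,\dots,\widehat v_t^\lambda$ \emph{mates}.

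For the Stage-1 cost, the guarantee on $\mathcal{A}$ gives $\mathbb{E}[c(\widehat z)] \le \Delta\cdot\mathbb{E}[\opt(\widehat I)]$, so the crux is the bound $\mathbb{E}[\opt(\widehat I)] \le c(z_0^*) + \lambda\,\mathbb{E}[c(z_1^*)]$, which is exactly the expected cost of $\opto$. I would prove this by the telescoping of \cref{obs:aip_subadd}, but using the \emph{single} deterministic advance-purchase $z_0^*$ of the optimal online policy as a common base for all $\lambda$ copies: starting from $z_0^*$, augment it to be feasible for $\widehat I_1 := \{\widehat v_t^1\}_t$, then for $\widehat I_1\cup\widehat I_2$, and so on up to $\widehat I = \widehat I_1\cup\dots\cup\widehat I_\lambda$, so the resulting feasible solution upper-bounds $\opt(\widehat I)$. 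By monotone augmentation (\cref{defn:aip}, applied with $z'=z_0^*$ and $z''$ the current partial solution, which dominates $z_0^*$), the incremental cost at step $j$ is at most $\aug{\widehat I_j}{z_0^*,\emptyset}$, which is at most the cost of the online purchases $\opto$ would make on the realization $\widehat I_j$. Summing over $j$ and taking expectations --- the $\lambda$ mock runs $\widehat I_1,\dots,\widehat I_\lambda$ are i.i.d.\ copies of a real run --- yields the claim. (The crude bound $\opt(\widehat I)\le\sum_j\opt(\widehat I_j)$ from \cref{obs:aip_subadd} would instead give $\lambda\big(c(z_0^*)+\mathbb{E}[c(z_1^*)]\big)$, losing a factor of $\lambda$ exactly when $z_0^*$ carries most of the optimal cost; hence this refinement is the key point of the Stage-1 analysis.)

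For the Stage-2 cost, because $z$ is seeded with the \emph{entire} mock solution $\widehat z$ and only increases thereafter, $z(v_t) \ge \widehat z(\widehat v_t^j)$ for \emph{every} mate $j$, while the request set conditioned on in $\textsc{backup}(v_t\mid z(v_t),\cdot)$ contains the one $\mathcal{A}$ had processed before $\widehat v_t^j$; moreover, by construction of the backups, $z(v_t)$ is feasible for the former set and $\widehat z(\widehat v_t^j)$ for the latter. Hence monotone augmentation for the single request $v_t$ gives the deterministic inequality $\aug{v_t}{z(v_t),\cdot} \le \aug{v_t}{\widehat z(\widehat v_t^j),\cdot}$, and since $v_t$ is distributed as $\widehat v_t^j$ and independent of $\mathcal{A}$'s state and of the random arrival order, taking expectations lets us swap $v_t$ for $\widehat v_t^j$ and conclude $\mathbb{E}\big[\aug{v_t}{z(v_t),\cdot}\big] \le \mathbb{E}\big[\aug{\widehat v_t^j}{\widehat z(\widehat v_t^j),\cdot}\big]$ for each $j\in[\lambda]$ (this swap is the only measure-theoretic point, handled exactly as in \cref{thm:proph_to_customproph}). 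Averaging over the $\lambda$ mates and summing over $t$, $\sum_t \mathbb{E}\big[\aug{v_t}{z(v_t),\cdot}\big] \le \tfrac1\lambda \sum_{\tau=1}^{\lambda n}\mathbb{E}\big[\aug{\widehat v^\tau}{\widehat z(\widehat v^\tau),\cdot}\big] \le \tfrac1\lambda\,\mathbb{E}[c(\widehat z)]$, where $\widehat v^\tau$ is the $\tau$-th mock request in the random order and the last inequality holds because $\mathcal{A}$ pays at least the incoming request's augmentation cost in each round. Since the Stage-2 cost is $\lambda$ times the total backup cost and the latter is at most $\sum_t \aug{v_t}{z(v_t),\cdot}$, its expectation is at most $\mathbb{E}[c(\widehat z)]$. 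Adding the two stages, $\mathcal{A}'$ pays in expectation at most $2\,\mathbb{E}[c(\widehat z)] \le 2\Delta\,\mathbb{E}[\opt(\widehat I)] \le 2\Delta\big(c(z_0^*)+\lambda\,\mathbb{E}[c(z_1^*)]\big)$, i.e.\ $2\Delta$ times the cost of $\opto$.

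The substantive difficulty --- the only place the structure of the problem really bites --- is making the two occurrences of $\lambda$ cancel. The Stage-2 markup $\lambda$ is absorbed only because $z$ is seeded with the full $\lambda n$-request mock solution, which lets every real request borrow from any one of its $\lambda$ mates and so supplies the averaging factor $\tfrac1\lambda$; and the threatened factor-$\lambda$ blow-up of $\opt(\widehat I)$ is killed only by charging all $\lambda$ mock runs to the \emph{shared} advance-purchase $z_0^*$ rather than to $\lambda$ independent offline optima. With these two ideas in place, the remaining steps --- the independence needed for the swap, the feasibility and containment bookkeeping for the conditioning request sets, and the fact that backup purchases are always well-defined for {\aip}s --- are exactly as in the proof of \cref{thm:proph_to_customproph}.
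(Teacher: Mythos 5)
Your proposal is correct and follows essentially the same route as the paper's proof: the same algorithm (run $\mathcal{A}$ on $\lambda$ i.i.d.\ mock runs in random order, buy its output in Stage~1, and back up uncovered real requests in Stage~2), the same ``boosted sampling'' charge of all $\lambda$ mock runs to the shared advance purchase $z_0^*$ plus one independent second-stage cost each, and the same mate-averaging argument that absorbs the Stage-2 markup $\lambda$. If anything, your explicit telescoping augmentation from $z_0^*$ is a slightly more careful writeup of the Stage-1 bound than the paper's one-line appeal to subadditivity, but the underlying argument is identical.
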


    \begin{proof}
         Let $\mathcal{A}$ be the random-order algorithm for instance class $\mathcal{I}$ with expected competitive ratio $\Delta$. Define $\mathcal{A}'$ to be \cref{alg:two_stage_OSC}:

         \begin{algorithm}[H] \caption{\textsc{$2$-stage prophet to random order}}
		\label{alg:two_stage_OSC}
		\begin{algorithmic}[1]
            \Statex \textbf{First Stage:}
            \For{$i=1, \ldots, \lambda$}
            \State Let $\textsc{MockRun}_i \gets \{\widehat v_i^1, \ldots, \widehat v_i^n\}$ be one sample each from $D^1, \ldots, D^n$.
            \EndFor
            \State Let $\textsc{MockRun} = \bigcup_i \textsc{MockRun}_i$.
            \For{$\tau = 1, 2, \ldots, \lambda \cdot n$}
            \State $\pi^\tau \leftarrow$ the $\tau^{th}$ request of \textsc{MockRun} in random order $\pi$.
            \State Feed $\pi^\tau$ to $\mathcal{A}$.
            \EndFor
            \State Let $z_0$ be the output of $\mathcal{A}$. Buy $z_0$. \label{line:two-stage_LoC_call}
            \Statex{}
            \Statex \textbf{Second Stage:}
            \State Initialize $z_1 \leftarrow 0$.
                \For{$t=1,2,\ldots, n$}
                \State Draw $v^t \sim D^t$.  \label{line:2stage_draw_fixedorder}
                \State Let $z \gets \max(z_0, z_1)$.
                \If{$v^t$ not satisfied by $z$}
                \State Let $w \gets \textsc{backup}(v^t \mid \max(z_0,z_1), \textsc{MockRun} \cup \{v^1, \ldots, v^{t-1}\})$.
                \State Update $z_1 \gets \max(z_1, w)$.\label{line:2stage_backup} 
                \EndIf
                \EndFor    
                \State \Return $\max(z_0, z_1)$.
		\end{algorithmic}
	\end{algorithm}

        The proof proceeds in two steps. Let $Z := \expect{c(z^*_0) + \lambda \cdot c(z^*_1)}$ be the expected cost of $\opto$. First we bound the expected cost of $z_0$ computed by $\mathcal{A}$ in terms of $Z$, and then we bound the total cost of backups, i.e. $z_1$, in terms of the cost of $z_0$.

        For the first bound on $z_0$, we follow the ``boosted sampling'' argument of \cite{DBLP:conf/stoc/GuptaPRS04}.
        Suppose that $z_0^*$ is the optimal first-stage solution, and $z_1^*, \ldots, z_\lambda^*$ are the second-stage solutions bought by the optimal online strategy when fed each of the sequences $\textsc{MockRun}_i$. By the subadditivity property of {\aip}s from \cref{obs:aip_subadd},
        \begin{align*}
            \coptof{\textsc{MockRun}} &\leq c(z_0^*) + \sum_{i=1}^\lambda c(z_i^*) = c(z_0^*) + \expectover{i\sim [\lambda]}{\lambda\cdot  c(z_i^*))},
        \end{align*}
        where $i \sim [\lambda]$ denotes that $i$ is drawn uniformly from $[\lambda]$. Taking the expectation over the drawing of $\textsc{MockRun}$, we get that $\expect{\coptof{\textsc{MockRun}}} \leq Z$. Since algorithm $\mathcal{A}$ is $\Delta$-competitive, we immediately get that $\expect{c(z_0)} \leq \Delta \cdot Z.$

        It remains to bound the cost of $z_1$. This second half of the proof resembles that of \cref{thm:proph_to_customproph}. Let $z_0(v)$ denote the state of the solution held by $\mathcal{A}$ before the arrival of request $v$.

        Fix an index $t \in [n]$. Without loss of generality, reorder the corresponding $\textsc{MockRun}$ samples $\widehat v_1^t, \ldots, \widehat v_\lambda^t$ to agree with their relative order in $\pi$. Define $\textsc{MockRun}_{<v}$ to be the set of clients of $\textsc{MockRun}$ that arrive before $v$ according to order $\pi$. By \cref{defn:aip}, since $z_0(\widehat{v}_1^t) \leq \ldots \leq z_0(v^t)$ for every realization of the random variables, we have that for each request $v$,
        \begin{align*}
            \aug{v}{z_0(\widehat{v}_1^t), \textsc{MockRun}_{<\widehat{v}_1^t}} 
            \geq  \cdots &\geq \aug{v}{z_0(\widehat{v}_\lambda^t), \textsc{MockRun}_{<\widehat{v}_\lambda^t}} \\
            &\geq \aug{v}{z_0(v^t), \textsc{MockRun} \cup \{v^1, \ldots, v^{t-1}\}}. 
            \intertext{Then, taking the expectation over both the random sequences of $z_0(\widehat{v}_1^t),\ldots, z_0(v^t)$ and the identically distributed draws of $\widehat{v}^t_1, \ldots, \widehat{v}^t_\lambda, v^t \sim D^t$,}
            \expect*{\aug{\widehat{v}_1^t}{z_0(\widehat{v}_1^t), \textsc{MockRun}_{<\widehat{v}_1^t}}}\geq \cdots &\geq \expect*{\aug{\widehat{v}_\lambda^t}{z_0(\widehat{v}_\lambda^t), \textsc{MockRun}_{<\widehat{v}_\lambda^t}}} \\
            &\geq \expect*{\aug{v^t}{z_0(v^t), \textsc{MockRun} \cup \{v^1, \ldots, v^{t-1}\}}}.
        \end{align*}
        Summing yields
        \begin{align*}
        \lambda \cdot \expect*{\aug{v^t}{z_0(\widehat{v}^t), \textsc{MockRun} \cup \{v^1, \ldots, v^{t-1}\}}} &\leq \sum_{i=1}^\lambda \expect*{\aug{\widehat{v}_i^t}{z_0(\widehat{v}_i^t),\textsc{MockRun}_{<\widehat{v}_i^t}}}.
        \end{align*}
        Finally, summing again over $t \in [n]$, we get that the second-stage costs of \cref{alg:two_stage_OSC} are bounded by
        \begin{align*}\lambda \cdot \expect{c(z_1)} &= \sum_{t=1}^n \lambda \cdot \expect*{\aug{v^t}{z_0(v^t),\textsc{MockRun} \cup \{v^1, \ldots, v^{t-1}\}}} \\
        &\leq \sum_{t=1}^n\sum_{i=1}^\lambda \expect*{\aug{\widehat{v}_i^t}{z_0(\widehat{v}_i^t),\textsc{MockRun}_{<\widehat{v}_i^t}}} \\
        &\leq \expect{c(z_0)},\end{align*}
        where the last inequality holds since $\mathcal{A}$ pays at least $\aug{\widehat{v}_i^t}{z_0(\widehat{v}_i^t),\textsc{MockRun}_{<\widehat{v}_i^t}}$ in rounds where $\widehat{v}_i^t$ is unsatisfied on arrival.
        
        We conclude that the total expenditure of the algorithm is $\expect{c(z_0) + \lambda \cdot c(z_1)} \leq 2 \expect{c(z_0)} \leq 2 \Delta \cdot Z$.
    \end{proof}
    \section{Online-With-a-Sample}

\label{sec:was}

In this section we show a general reduction from online-with-a-sample {\aip}s to the random-order version.

\begin{theorem}
\label{thm:WaS_to_rosc}
Let $\mathcal{I}$ be an instance class of {\aip}s. If algorithm $\mathcal{A}$ is a random-order \aip algorithm with competitive ratio $\Delta$ on class $\mathcal{I}$, then there is an online-with-a-sample algorithm $\mathcal{A}'$ for class $\mathcal{I}$ with competitive ratio $\Delta/\alpha$.
\end{theorem}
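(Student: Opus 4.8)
The plan is to follow the template of \cref{thm:proph_to_customproph,thm:2stage_to_rosc}. The algorithm $\mathcal{A}'$ treats the revealed $\alpha$-fraction $S \subseteq U$ as a ``mock run'': it feeds $S$ to $\mathcal{A}$ in a uniformly random order, buys the resulting solution $\widehat z$, and then during the online phase covers each uncovered arrival $v^t$ by updating $z \gets \max\bigl(z, \textsc{backup}(v^t \mid z, S \cup \{v^1,\ldots,v^{t-1}\})\bigr)$. Since $S$ is a uniformly random subset of $U$ fed in uniformly random order, $\mathcal{A}$'s guarantee yields $\expect{c(\widehat z)} \le \Delta\cdot \expect{\opt(S)} \le \Delta\cdot\opt(U)$, using that a solution feasible for $U$ is feasible for the subset $S$. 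The remaining task is to charge the backup purchases: the goal is to show their expected total is at most $\tfrac{1-\alpha}{\alpha}\expect{c(\widehat z)}$, since then the overall cost is at most $\tfrac1\alpha\expect{c(\widehat z)} \le \tfrac\Delta\alpha\,\opt(U)$, as claimed.

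The first step in bounding the backups is the same crude move used in \cref{thm:proph_to_customproph}: before each online arrival $v^t$, the algorithm's solution $z(v^t)$ dominates $\widehat z$ and the set of seen requests contains $S$, so by monotonicity of augmentation costs (\cref{defn:aip}) we have $\aug{v^t}{z(v^t), S\cup\{v^1,\ldots,v^{t-1}\}} \le \aug{v^t}{\widehat z(S), S}$. Thus the total backup cost is at most $\sum_{v \in U\setminus S}\aug{v}{\widehat z(S), S}$, and this bound already absorbs the adversarial ordering of the online phase, so the adversarial order causes no further difficulty. The content is therefore to prove
\[
\expect*{\sum_{v \in U\setminus S}\aug{v}{\widehat z(S), S}} \;\le\; \frac{1-\alpha}{\alpha}\,\expect{c(\widehat z(S))}.
\]

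To prove this I would couple the sample with a uniformly random permutation $\pi = (u_1,\ldots,u_n)$ of $U$, taking $S = \{u_1,\ldots,u_{\alpha n}\}$, and write $\widehat z_\ell$ for $\mathcal{A}$'s state after it has processed the prefix $P_\ell = \{u_1,\ldots,u_\ell\}$ in order, so $\widehat z(S) = \widehat z_{\alpha n}$. Set $g(\ell) := \expect{\aug{u_{\ell+1}}{\widehat z_\ell, P_\ell}}$. Since, conditioned on the set $P_\ell$, every $u_k$ with $k > \ell$ is uniform on $U\setminus P_\ell$ and independent of the internal order of $P_\ell$, one gets $\expect{\sum_{v\in U\setminus S}\aug{v}{\widehat z(S), S}} = (1-\alpha)n\cdot g(\alpha n)$, while $\expect{c(\widehat z(S))} \ge \sum_{j=1}^{\alpha n}\expect{\aug{u_j}{\widehat z_{j-1},P_{j-1}}} = \sum_{\ell=0}^{\alpha n - 1} g(\ell)$, because $\mathcal{A}$ pays at least the augmentation cost in each round. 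Hence it suffices to show $\alpha n\cdot g(\alpha n) \le \sum_{\ell=0}^{\alpha n-1}g(\ell)$, which follows once $g$ is non-increasing. To see $g(\ell+1) \le g(\ell)$, couple $P_{\ell+1} = P_\ell\cup\{u_{\ell+1}\}$ with its order extending that of $P_\ell$, so $\widehat z_{\ell+1}\ge\widehat z_\ell$; then monotonicity gives $\aug{v}{\widehat z_{\ell+1},P_{\ell+1}}\le\aug{v}{\widehat z_\ell,P_\ell}$ for each $v\notin P_{\ell+1}$, and averaging over the uniform choice of $u_{\ell+1}\in U\setminus P_\ell$ — a one-line identity, since deleting one of the $n-\ell$ terms that are equal in expectation turns the factor $\tfrac1{n-\ell-1}$ back into $\tfrac1{n-\ell}$ — yields $g(\ell+1)\le g(\ell)$.

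The main obstacle is precisely this monotonicity of $g$: the per-request augmentation cost incurred by an online algorithm along a random permutation need \emph{not} be non-increasing round by round (an algorithm may defer a large ``catch-up'' purchase), so a loose ``costs are front-loaded'' heuristic does not suffice; the prefix coupling together with the exact averaging identity is what makes the inequality $\alpha n \cdot g(\alpha n) \le \sum_{\ell < \alpha n} g(\ell)$ rigorous. Everything else — the reduction template, $\opt(S)\le\opt(U)$, and the crude monotonicity reduction of the backup cost — parallels \cref{thm:proph_to_customproph,thm:2stage_to_rosc}.
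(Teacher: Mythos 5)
Your proposal is correct, and it reaches the same $\nicefrac{1}{\alpha}$ charging bound as the paper, but via a genuinely different coupling. The paper partitions the $n$ requests into $\alpha n$ groups of size $\nicefrac1\alpha$ and lets $\textsc{Samples}$ consist of one uniformly random representative per group; each real request's backup cost is then bounded by the mock-run augmentation cost of its representative (AIP monotonicity plus exchangeability within the group), and summing charges each representative exactly $\nicefrac1\alpha$ times. You instead run a single random permutation, define the expected marginal augmentation cost $g(\ell)$ at prefix length $\ell$, prove $g$ is non-increasing by the prefix coupling ($\widehat z_{\ell+1} \geq \widehat z_\ell$, $P_{\ell+1} \supseteq P_\ell$, plus the averaging identity for the deleted term), and conclude that the $(1-\alpha)n$ tail terms, each equal to $g(\alpha n)$ in expectation, are at most $\frac{1-\alpha}{\alpha}\sum_{\ell < \alpha n} g(\ell) \leq \frac{1-\alpha}{\alpha}\,\expectation[c(\widehat z)]$. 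The two arguments use the same two ingredients (monotone augmentation costs and an exchangeability step), just organized differently: the paper's grouping is slicker and avoids your auxiliary monotonicity lemma, but needs $n$ to be a multiple of $\nicefrac1\alpha$; your version only needs $\alpha n$ integral, is self-contained, and makes explicit the (correct) point that pointwise per-round augmentation costs need not decrease, only their conditional expectations along the permutation. Both correctly reduce the adversarial-order phase to a single application of monotonicity against the frozen state $(\widehat z(S), S)$, and both bound $\expectation[c(\widehat z)]$ by $\Delta\cdot\opt(U)$ the same way.
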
 

\begin{proof}[Proof of \cref{thm:WaS_to_rosc}]

Let $\mathcal{A}$ be the random-order algorithm for instance class $\mathcal{I}$ with expected competitive ratio $\Delta$. Define $\mathcal{A}'$ to be \cref{alg:was_to_rosc}:

\begin{algorithm}[H]
\caption{\textsc{Online-with-a-sample to Random Order}}
\label{alg:was_to_rosc}
\begin{algorithmic}[1]
    \State Let $\textsc{Samples} = \{\widehat v^1, \ldots, \widehat v^{\alpha \cdot n}\}$ be the samples given upfront. 
    \For{$\tau = 1, 2, \ldots, \alpha \cdot n$ }
    \State $\pi^\tau \leftarrow$ the $\tau^{th}$ sample in random order. \label{line:was_draw_ro}
    \State Feed $\pi^\tau$ to $\mathcal{A}$.
    \EndFor
    \State Let $\widehat z$ be the output of $\mathcal{A}$.
    \State Initialize $z \gets \widehat z$.
    \For{$t=1,2,\ldots, n$}
    \State Draw $v^t \sim D^t$.  \label{line:was_draw_fixedorder}
    \If{$v^t$ not satisfied by $z$}
    \State Update $z \gets \max(z, \textsc{backup}(v^t \mid z, \textsc{Samples} \cup \{v^1, \ldots, v^{t-1}\}))$. \label{line:was_to_rosc_backup} 
    \EndIf
    \EndFor    
    \State \Return $z$.
\end{algorithmic}
\end{algorithm}

Assume without loss of generality that $n$ is a multiple of $\nicefrac{1}{\alpha}$. We imagine generating $\textsc{Samples}$  according to the following procedure. First, perform a random partition of $v^1, \ldots, v^n$ into $\alpha \cdot n$ groups of size $\nicefrac{1}{\alpha}$. Pick a uniformly random representative request from each subset to include in $\textsc{Samples}$. For each request $v$, let $r(v)$ be the representative request of the group containing $v$, and let $\textsc{Samples}_{<v}$ be the set of samples that arrive before $v$ according to random order $\pi$. Now: 
\begin{align*}
	\expect*{\aug{v^t}{z(v^t),\textsc{Samples} \cup \{v^1, \ldots, v^{t-1}\}}}
 &\leq \expect*{\aug{v^t}{\widehat z(r(v^t)),\textsc{Samples}_{<r(v^t)}}} \\
 &= \expect*{\aug{r(v^t)}{\widehat z(r(v^t)),\textsc{Samples}_{<r(v^t)}}}.
 \intertext{The inequality holds by the monotonicity of augmentation condition of \cref{defn:aip}, since $z(v^t) \geq \widehat z(r(v^t))$. The equality holds because $v^t$ and $r(v^t)$ are identically distributed at the beginning of the run. Summing over $t$, and noting that $\mathcal{A}$ must pay at least $\aug{\widehat v^\tau}{\widehat z(v^{\tau})}$ in round $\tau$ when $\widehat v^\tau$ is unsatisfied on arrival, we get that the total backup cost is bounded as}
	\sum_{t=1}^n \expect*{\aug{v^t}{z(v^t),\textsc{Samples} \cup \{v^1, \ldots, v^{t-1}\}}} &\leq \sum_{t=1}^n \expect*{\aug{r(v^t)}{z(r(v^t)),\textsc{Samples}_{<r(v^t)}}} \\
    &= \sum_{\tau=1}^{\alpha\cdot n} \frac{1}{\alpha} \cdot \expect*{\aug{\widehat v^\tau}{z(\widehat v^\tau),\textsc{Samples}_{<\widehat v^\tau}}} \\
    &\leq \frac{1}{\alpha} \expect{c(\widehat z)}.
\end{align*}
The equality above comes from the fact that each representative that makes up $\textsc{Samples}$ appears $1/\alpha$ times in the first sum. We can bound the expression above by $\nicefrac{\Delta}{\alpha} \cdot \expect{\coptof{v_1, \ldots, v_n}}$ by the assumption on $\mathcal{A}$, and so in total, \cref{alg:proph_to_rosc} pays at most $\nicefrac{\Delta}{\alpha} \cdot \expect{\copt}$.
\end{proof}

\medskip

We conclude by remarking that a similar argument proves \cref{thm:WaS_to_rosc} in the slightly different `online-with-a-sample' setting in which each of the requests in the online sequence is sampled independently with probability $\alpha$.
More generally, for $\alpha \leq 1$ it is possible to get an analogous tradeoff between $\alpha$ in the number of samples and 
$\nicefrac{1}{\alpha}$ in the approximation ratio for \cref{thm:proph_to_customproph} and \cref{thm:2stage_to_rosc} as well.
    	\section{Online Facility Location}

	\label{sec:nmfl}

    In this section we apply our framework to online facility location problems.

    \subsection{Facility Location}
	In \textsc{FacilityLocation}, the input is a set of $n$ clients and $m$ facilities. Each facility $f$ has an opening cost $c_f$, and each client-facility pair $(v,f)$ has a connection cost $c_{fv}$. The goal is to open a number of facilities and connect each client to exactly one open facility such that the total cost is minimized. 
	A standard integer programming formulation for offline (unit-demand) \textsc{FacilityLocation} is as follows \cite{conforti2014integer}:
    \begin{align}
        \min \sum_f c_f \cdot x_f + &\sum_{f,v} c_{fv} \cdot y_{fv} \label{eq:nmfl_ip} \\
        \sum_j y_{fv} &\geq 1 \qquad \forall v \notag \\ 
        y_{fv} &\leq x_f \qquad \forall f,v \notag \\
        x_f, y_{fv} &\in \{0,1\}. \notag
    \end{align}

    In the random-order online version, the $m$ facilities are known ahead of time, and $n$ unknown clients arrive online in random order; on the arrival of each client, the algorithm must choose which (if any) new facilities to open, and then connect the client to an open facility. 
    Decisions are irrevocable, in the sense that a client may not change which facility it has connected to after arrival.

    We observe that \textsc{FacilityLocation} is amenable to our framework:
    \begin{restatable}{observation}{nmflisAIP}
		\label{lem:nmflisAIP}
		\textsc{FacilityLocation} is an \aip.
	\end{restatable}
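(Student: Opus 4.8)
The plan is to verify directly that the IP formulation \eqref{eq:nmfl_ip} for \textsc{FacilityLocation} satisfies \cref{defn:aip}, i.e. that augmentation costs are monotone under coordinatewise growth of the partial solution and growth of the request set. Here the "requests" $V$ are the clients, a solution is a vector $(x,y) \in \Z^{m + mn}$ (actually $\{0,1\}$-valued at optimality, but we work in $\Z_+^{m+mn}$ as in the general \aip framework), and $(x,y) \in \textsc{Sols}(V')$ exactly when every client $v \in V'$ has some $f$ with $y_{fv} \ge 1$ and $x_f \ge y_{fv}$. Note that growing the request set (adding clients) only adds constraints, and the constraints for different clients are "separable" once the $x_f$ variables are fixed — this separability is what makes the monotonicity transparent.

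**First I would** pin down the augmentation cost explicitly. Fix $V' \subseteq V$, a solution $z = (x,y) \in \textsc{Sols}(V')$, and a new request set $W$ of clients. To augment, we may open a set $T$ of additional facilities (paying $\sum_{f \in T} c_f$, but only for those $f$ with $x_f = 0$), and then each client $w \in W$ that is not already connected in $z$ must be connected to some facility in $\{f : x_f \ge 1\} \cup T$, paying $\min_{f \in \{x_f \ge 1\} \cup T} c_{fw}$. (Clients in $W \cap V'$ are already satisfied and cost nothing more.) So
\[
\aug{W}{z, V'} = \min_{T \subseteq \mathcal{F}} \Bigl( \sum_{f \in T,\, x_f = 0} c_f \;+\; \sum_{w \in W \setminus V',\, \text{$w$ unsatisfied in } z} \ \min_{f \in \operatorname{supp}(x) \cup T} c_{fw} \Bigr),
\]
where $\mathcal{F}$ is the facility set. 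Then I would check monotonicity: take $V' \subseteq V''$, $z' \le z''$ with $z' \in \textsc{Sols}(V')$ and $z'' \in \textsc{Sols}(V'')$, and fix any $W$. Let $T^*$ be the optimal augmenting facility set for $z'$ over $V'$. I claim $T^*$ is also feasible for augmenting $z''$ over $V'' \cup W$, and costs no more. Indeed $\operatorname{supp}(x') \subseteq \operatorname{supp}(x'')$ since $x' \le x''$, so every connection option available when augmenting from $z'$ is still available from $z''$; the set of clients in $W$ still needing connection from $z''$ is a subset of those needing it from $z'$ (more facilities are already open, and $W \cap V''$ may now be pre-satisfied); and the facility-opening cost $\sum_{f \in T^*, x''_f = 0} c_f \le \sum_{f \in T^*, x'_f = 0} c_f$ since $\{f : x''_f = 0\} \subseteq \{f : x'_f = 0\}$. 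Summing these three term-by-term comparisons gives $\aug{W}{z'', V''} \le \aug{W}{z', V'}$, which is exactly \cref{defn:aip}.

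**The main thing to be careful about** — rather than a deep obstacle — is the bookkeeping of which clients are already satisfied and which facilities are already paid for when we pass from $z'$ to $z''$; one must make sure that "already done" work is never double-charged and that the larger solution genuinely dominates the smaller in every relevant cost term. A secondary point worth a sentence is that \textsc{FacilityLocation} is genuinely \emph{not} a \cip: the feasible region is not upward closed, since increasing a $y_{fv}$ variable without increasing $x_f$ can violate the linking constraint $y_{fv} \le x_f$ — this is consistent with the remark in \cref{sec:prelim} and explains why we need the \aip generalization here. With monotone augmentation established, \cref{obs:aip_subadd} and all the reductions of \cref{sec:prophet,sec:2stage,sec:was} apply to \textsc{FacilityLocation} automatically.
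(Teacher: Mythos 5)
Your proposal is correct, and in fact the paper never writes this proof out: \cref{lem:nmflisAIP} is declared as a restatable observation but no proof of it appears in the deferred-proofs appendix, so the authors evidently regard the monotone-augmentation property as immediate. Your direct verification---writing $\aug{W}{z,V'}$ explicitly as a choice of newly opened facilities plus cheapest connections for the still-unsatisfied clients of $W$, and checking term-by-term that each component of the cost only shrinks when $(z',V')$ is replaced by $(z'',V'')$ with $z'\le z''$ and $V'\subseteq V''$---is exactly the argument the observation implicitly relies on, and your care with the linking constraints $y_{fv}\le x_f$ (which is what makes this an \aip rather than a \cip) is the right point to flag.
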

    This enables us to convert algorithms for random-order \facloc into algorithms for the prophet, two-stage, and with-a-sample settings.

    \subsection{Metric Facility Location}
    
    Random-order \facloc is well studied when the connection costs $c_{fv}$ satisfy the triangle inequality. In pioneering work, Meyerson gave an $8$-approximation for this problem \cite{meyerson2001online}, which has recently been improved to a $3$-approximation \cite{kaplan2023almost}. 
    Appealing to \cref{lem:nmflisAIP} and plugging this algorithm as a black box into \cref{thm:proph_to_customproph,thm:2stage_to_rosc,thm:WaS_to_rosc}, we get:
    
    \begin{corollary}
        For the \mfl problem, there exists a $6$-competitive algorithm in the single-sample prophet setting, a $6$-competitive algorithm in the 2-stage prophet setting, and a $3/\alpha$-competitive algorithm in the online-with-a-sample setting.
    \end{corollary}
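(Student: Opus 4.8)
The plan is to treat this purely as an application of the three reduction theorems proved above, with the $3$-competitive random-order metric facility location algorithm of \cite{kaplan2023almost} plugged in as the black-box random-order algorithm $\mathcal{A}$. The only nontrivial hypothesis to discharge is that \mfl actually falls inside the \aip framework, which is exactly the content of \cref{lem:nmflisAIP}: using the IP formulation \eqref{eq:nmfl_ip}, a partial solution is a choice of opened facilities together with a set of client connections, and augmenting an infeasible solution $z$ (one that leaves some arrived clients unconnected) to a feasible one can only get cheaper as $z$ grows, since any facility already opened in $z$ is available for free to serve a new client. Hence augmentation costs are monotone in the sense of \cref{defn:aip}, and \mfl is an \aip; note that this holds regardless of whether connection costs are metric, so in particular it holds for the metric special case.

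First I would invoke \cref{thm:proph_to_customproph}. A random-order algorithm is, by definition, a $0$-sample free-order prophet algorithm, so the Meyerson-style $3$-competitive algorithm of \cite{kaplan2023almost} qualifies as $\mathcal{A}$ with $k = 0$ and $\Delta = 3$ on the instance class of metric \mfl instances. The theorem then produces a fixed-order (i.e. standard) prophet \mfl algorithm using $k+1 = 1$ sample per distribution with competitive ratio $2\Delta = 6$, giving the first claim.

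Next I would apply \cref{thm:2stage_to_rosc} with the same $\mathcal{A}$ and $\Delta = 3$: this yields a $2$-stage prophet \mfl algorithm with competitive ratio $2\Delta = 6$ with respect to the optimal online policy, using $\lambda$ samples, which is the second claim. Finally, \cref{thm:WaS_to_rosc} with $\Delta = 3$ gives an online-with-a-sample \mfl algorithm with competitive ratio $\Delta/\alpha = 3/\alpha$, the third claim. Each of these steps is immediate once \mfl is known to be an \aip and $\mathcal{A}$ is known to be a valid random-order \aip algorithm.

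The main obstacle is really the bookkeeping in the preceding sentence rather than any new argument: one must check that the algorithm of \cite{kaplan2023almost} genuinely fits the "random-order \aip algorithm" interface used by the reductions --- in particular that it maintains a monotonically growing solution (it only opens facilities and adds connections, never retracting them) and that it is $\Delta$-competitive in the expected-ratio sense against $\expect{\coptof{\cdot}}$ for \emph{every} distribution over client arrivals consistent with the random-order guarantee. Since the reductions are stated and proved for arbitrary \aip{}s and only use the monotone-augmentation property of individual requests, no further work is needed beyond confirming these interface conditions, after which the three bounds follow by substituting $\Delta = 3$ into \cref{thm:proph_to_customproph,thm:2stage_to_rosc,thm:WaS_to_rosc} respectively.
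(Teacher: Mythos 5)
Your proposal matches the paper's argument exactly: the paper also obtains this corollary by combining \cref{lem:nmflisAIP} (that \facloc is an \aip) with the $3$-competitive random-order algorithm of \cite{kaplan2023almost} plugged as a black box into \cref{thm:proph_to_customproph,thm:2stage_to_rosc,thm:WaS_to_rosc} with $\Delta = 3$. Your additional remarks about verifying the random-order interface and monotonicity of the solution are sensible sanity checks but introduce nothing beyond what the paper's one-line derivation already assumes.
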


    In the next section, we study the more general problem without the metric assumption.
    
    \subsection{Non-Metric Facility Location}
    When connection costs do not satisfy the triangle inequality, this problem is more complex.
    In particular, \nmfl recovers \setcov as the special case in which all service costs are $c_{fv} \in \{0, \infty\}$. In this section, we give an algorithm for random-order \nmfl that is best possible, even in the special case of set cover.
    We show:

    \begin{theorem} \label{thm:nmfl_main_theorem}
        There exists an $O(\log mn)$-competitive algorithm for random-order \nmfl.
    \end{theorem}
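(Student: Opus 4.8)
The plan is to lift the \loc framework of \cite{DBLP:conf/focs/0001KL21} from \setcov to \nmfl by viewing each facility as a ``set'', each client as an ``element'', and declaring that the element of client $v$ lies in the ``set'' of facility $f$ precisely when opening $f$ would cut $v$'s current connection cost by (say) a factor of two. Crucially, this set system is \emph{not} static: it changes when a new client arrives and, more importantly, whenever we open a facility, since that lowers the connection cost of many clients simultaneously. Before anything else I would discretize, rounding all connection costs to powers of two, so that at any time each client $v$ has a current ``connection scale''; and I would fix a budget of $\Theta(\log mn)$ times $v$'s connection cost in a fixed optimum (plus an appropriate share of its opening cost) against which $v$ is declared \emph{covered} or \emph{uncovered}. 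The algorithm always connects an arriving client to its cheapest open facility, so only uncovered arrivals are expensive, and those are the ones on which we must make measurable progress.

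Next, mirroring \loc, I would maintain a fractional vector $x$ over facilities together with an integral open set obtained by keeping $\Theta(\log mn)$ independent samples from (a suitable scaling of) $x$ and re-sampling as $x$ grows. When an uncovered client $v$ arrives, the dichotomy is: either (a)~$x$ already places enough mass on facilities that are good for $v$, in which case a fresh round of sampling opens such a facility with constant probability and $v$ becomes cheap; or (b)~it does not, in which case a multiplicative-weights step that up-weights the facilities good for $v$ makes substantial \emph{learning} progress, measured as a decrease of a weighted-KL potential $\wKL{x^*}{x}$ toward a fixed near-optimal fractional solution $x^*$. Unlike in \setcov, $x^*$ must pay both to open facilities and to connect clients, so the potential has to be more intricate: I would split the learning part by connection scale and argue that $x^*$'s connection cost to $v$ caps how much learning $v$ can still yield, so that the total learning summed over the $\Theta(\log mn)$ scales is $O(\poly\log mn)\cdot\opt$, while the total covering deficit is likewise controlled.

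Finally I would run the standard random-order accounting: conditioning on the client set $U$ and on the arrived prefix, the next arrival is a uniformly random not-yet-seen client, so in expectation an uncovered arrival is ``representative'' and yields an $\Omega(1/|U|)$ fraction of the remaining covering deficit in covering progress, while each step's learning progress is nonnegative with total $O(\poly\log mn)\cdot\opt$; balancing these two budgets bounds the expected number of uncovered arrivals, and hence the total opening plus excess-connection cost, by $O(\log mn)\cdot\opt$ in expectation. Here I would invoke \cref{lem:nmflisAIP} and the subadditivity of \cref{obs:aip_subadd} to relate $\opt$ restricted to subsets of clients to the global optimum, so that the per-arrival progress is charged correctly against $\opt(U)$.

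The main obstacle will be exactly the piece that distinguishes this from set cover: because connection costs are continuous and drop every time a facility opens, the underlying set system is genuinely dynamic and a client can migrate between the covered and uncovered states. Designing the two-part potential so that (i)~opening facilities never increases it, (ii)~every uncovered arrival in the learning case decreases it by a fixed quantum proportional to the connection cost being paid, and (iii)~its initial value is $O(\poly\log mn)\cdot\opt$ --- all while handling the $\Theta(\log mn)$ geometric connection scales without losing more than a constant factor per scale --- is where essentially all of the technical difficulty lies.
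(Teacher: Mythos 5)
Your high-level architecture is the same as the paper's: run \loc on a \emph{dynamic} set system in which facility $f$ ``covers'' client $v$ at time $t$ exactly when opening $f$ would halve $v$'s current marginal connection cost $\kappa^t_v$, maintain a fractional vector $x$ over facilities, and use a two-part learn/cover potential with a weighted KL term $\wKL{x^*}{x^t}$ plus a random-order covering argument. However, there is a genuine gap at precisely the point you flag as ``where essentially all of the technical difficulty lies,'' and your proposed resolution does not deliver the claimed bound. You suggest discretizing connection costs into $\Theta(\log mn)$ geometric scales and splitting the learning potential by scale, conceding that the total learning progress is then $O(\poly\log mn)\cdot\opt$; that yields at best an $O(\log^2 mn)$-competitive ratio, not $O(\log mn)$. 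The paper avoids any scale decomposition. Its learning potential is $\wKL{x^*}{x^t} + 2\sum_{v\in U^t}\sum_f c_{fv}\,y^*_{fv}$, i.e.\ the KL term plus (twice) the \emph{residual connection cost that the fractional optimum pays for the not-yet-arrived clients}. The point is a single clean charging identity: when client $v$ arrives, the facilities $f\notin\Gamma^{t-1}(v)$ have $c_{fv}>\kappa^{t-1}_v/2$, so whatever fraction of $v$'s unit demand $x^*$ serves from outside $\Gamma^{t-1}(v)$ contributes at least $\kappa^{t-1}_v/2$ per unit to the second term, while the fraction served from inside $\Gamma^{t-1}(v)$ drives the KL decrease at rate $\kappa^{t-1}_v$ per unit of $x^*$-mass. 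Together these always produce a full $-\kappa^{t-1}_v$ of expected learning progress per uncovered arrival, with no per-scale loss; the initial value of this augmented potential is still $O(\beta\log mn)$. Without this (or an equivalent) device, your argument does not prove the theorem as stated.

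A second, smaller gap: you need an implementable rule for which arrivals trigger a learn-or-cover step. Your proposal gates this on a budget defined relative to ``$v$'s connection cost in a fixed optimum,'' which the online algorithm cannot evaluate. The paper instead preemptively connects any client with $\kappa^{t-1}_{v^t}<\beta/t$ (where $\beta\approx\lpopt$ comes from guess-and-double), so these cheap arrivals cost at most $\sum_t \beta/t=O(\beta\log n)$ in total by the harmonic sum; this threshold is also what guarantees $\rho^{t-1}\ge\beta/n$ in the covering lemma, which is needed for the covering potential $\beta\log(\rho^t/\beta+1/n)$ to decrease at the claimed rate. You would need to replace your optimum-dependent budget with something of this form for both the algorithm and the covering analysis to go through.
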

    From \cref{lem:nmflisAIP} and \cref{thm:proph_to_customproph,thm:2stage_to_rosc,thm:WaS_to_rosc} we then directly obtain:
    \begin{corollary}
        For random-order \nmfl, there exists an $O(\log mn)$-competitive algorithm in the single-sample prophet setting, a $O(\log mn)$-competitive algorithm in the 2-stage prophet setting, and a $O(\log mn/\alpha)$-competitive algorithm in the online-with-a-sample setting.
    \end{corollary}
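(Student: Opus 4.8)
The plan is to obtain this corollary by composing the random-order \nmfl algorithm of \cref{thm:nmfl_main_theorem} with the three black-box reductions of \cref{sec:prophet,sec:2stage,sec:was}; no new algorithmic ideas are needed. The first step is to note that by \cref{lem:nmflisAIP}, \nmfl is an \aip, so the class $\mathcal{I}$ of \nmfl instances is a legitimate input class for each of \cref{thm:proph_to_customproph,thm:2stage_to_rosc,thm:WaS_to_rosc}. Let $\mathcal{A}$ be the $O(\log mn)$-competitive random-order algorithm for \nmfl guaranteed by \cref{thm:nmfl_main_theorem}, and write $\Delta = O(\log mn)$ for its competitive ratio against $\expect{\copt}$ over the realized client sequence.

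For the single-sample prophet bound I would invoke the remark following \cref{thm:proph_to_customproph}: a random-order algorithm is in particular a $0$-sample free-order prophet algorithm, since it uses no advance samples and a uniformly random arrival order is a (non-adaptive) special case of a self-chosen order. Applying \cref{thm:proph_to_customproph} with $k=0$ therefore produces a fixed-order prophet algorithm using $k+1=1$ sample per distribution with competitive ratio $2\Delta = O(\log mn)$. For the $2$-stage bound I would feed $\mathcal{A}$ directly into \cref{thm:2stage_to_rosc}, which (with $\lambda$ samples, $\lambda$ being the markup parameter, assumed integral without loss of generality as in \cref{sec:2stage}) yields a $2$-stage prophet algorithm that is $2\Delta = O(\log mn)$-competitive with respect to the optimal online policy. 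For the online-with-a-sample bound I would feed $\mathcal{A}$ into \cref{thm:WaS_to_rosc}, giving an algorithm with competitive ratio $\Delta/\alpha = O(\log mn/\alpha)$. In each case the factor of $2$ is absorbed into the $O(\cdot)$, and polynomial running time is preserved because each reduction only runs $\mathcal{A}$ on $O(1)$ (resp.\ $\lambda$) simulated instances of the same size, plus $O(n)$ \textsc{backup} computations, each a single minimum-cost augmentation that is solvable in polynomial time for \nmfl.

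There is essentially no obstacle; all the content lives in \cref{thm:nmfl_main_theorem} and in the reduction theorems, which are already proved. The only things worth double-checking are bookkeeping items: that the benchmark in each reduction matches the one \cref{thm:nmfl_main_theorem} is stated against (expected cost of \opt on the realized sequence for the prophet and with-a-sample settings, and the optimal online policy for the $2$-stage setting), and that the \aip monotonicity property that each reduction's proof relies on indeed holds for \nmfl --- which is exactly \cref{lem:nmflisAIP}. Given these, the three claims follow immediately.
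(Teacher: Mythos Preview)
Your proposal is correct and matches the paper's approach exactly: the paper derives this corollary immediately from \cref{lem:nmflisAIP} together with \cref{thm:proph_to_customproph,thm:2stage_to_rosc,thm:WaS_to_rosc}, applied to the random-order algorithm of \cref{thm:nmfl_main_theorem}. Your additional bookkeeping checks (random-order as a $0$-sample free-order algorithm, benchmark compatibility, polynomial running time) are accurate elaborations of details the paper leaves implicit.
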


    \medskip

    We demonstrate the versatility of the the \loc algorithm of \cite{DBLP:conf/focs/0001KL21} for random-order set cover by adapting it to random-order \facloc, which is not a pure covering problem. The challenge is to decide which facilities to open in any given round; once this is decided, one can assume the incoming client in that round always connects to the cheapest open facility.

	Let $\mathcal{C}^\thist$ be the facilities purchased by the end of round $\thist$. For every client $v$, define 
	\begin{align*}
		f^t(v) &\in \argmin_{f} (\mathbbm{1}\{f \not \in \mathcal{C}^\thist\} \cdot c_f + c_{fv}) \\
		\kappa^\thist_v &:= \min_{f} (\mathbbm{1}\{f \not \in \mathcal{C}^\thist\} \cdot c_f + c_{fv})
	   \intertext{to be the facility which can connect the client in the cheapest way possible (including opening said facility if necessary), and the corresponding marginal cost of doing so. This is the cost at the end of round $\thist$. Note that $k_v^t$ corresponds to $\aug{v^t}{\mathcal{C}^t, \{v^1, \ldots, v^{t-1}\}}$ in our more general notation. Also define} 
		\Gamma^t(v) &:= \{f : c_{fv} \leq \kappa^t_v / 2\}
	\end{align*}
	to be the set of set of facilities that, if opened, would reduce the marginal cost of connecting $v$ by at least a factor of $2$. 
	Then we will say a facility $f$ \emph{covers} a client $v$ at time $t$ if $f \in \Gamma^t(v)$.

    We will show that \cref{alg:nmfl} is in expectation an $O(\log mn)$-approximation for random-order \nmfl. 
    Our approach may be viewed as running \loc algorithm for set cover, but on a \emph{dynamically changing} set system. The facilities are the sets, the clients are the elements, and a client's element is contained in a facility's set at time $t$ if $f \in \Gamma^t(v)$.
    
	\begin{algorithm}
	\caption{\textsc{LearnOrCoverNMFL}}
	\label{alg:nmfl}
	\begin{algorithmic}[1]
		\State Let $\mathcal{F}' \leftarrow \{f : \beta/m \leq c_f \leq \beta \}$ and let $m' \leftarrow |\mathcal{F}'|$.
		\State Initialize $x_f^{0} \leftarrow \frac{\beta}{c_f \cdot m'} \cdot \mathbbm{1}\{f \in \mathcal{F}' \}$.
		\For{$\thist=1,2\ldots, n$}
		\State{$v^\thist \leftarrow$ $\thist^{th}$ client in the random order, and let $\mathcal{R}^\thist \leftarrow \emptyset$.}
        \If{$\kappa_{v^t}^\lastt \geq \beta / t$} \label{line:nmfl_cheaply_connected_test}
        \State \ForEach facility $f$, add $\mathcal{R}^\thist \leftarrow \mathcal{R}^\thist \cup \{f\}$ with probability $\min(\kappa^\lastt_{v^\thist} \cdot x^{\lastt}_f/\beta,1)$. \label{line:nmfl_sample}
		\State Update $\mathcal{C}^\thist \leftarrow \mathcal{C}^{\lastt} \cup \mathcal{R}^\thist$.
		\If {$\sum_{f \in \Gamma^{\lastt}(v^{\thist})}  x^{\lastt}_f < 1$}
		\State For every facility $f$, update $x^{\thist}_f \leftarrow x^{\lastt}_f \cdot \exp\left\{\mathbbm{1}\{f \in \Gamma^{\lastt}(v^{\thist}) \} \cdot \kappa^{\lastt}_{v^{\thist}} / c_f\right\}$. 
		\State Let $Z^{\thist} = \langle c, x^\thist \rangle / \beta$ and normalize $x^{\thist} \leftarrow x^{\thist} / Z^{\thist}$. \label{line:nmfl_loc_cost_invariant}
		\Else
		\State $x^\thist \leftarrow x^{\lastt}$. \label{line:nmfl_noupdate}
		\EndIf
        \Else
		\State $x^\thist \leftarrow x^{\lastt}$. \label{line:nmfl_noupdate_cheap}
		\EndIf
          \State $\mathcal{C}^\thist \leftarrow \mathcal{C}^\thist \cup \{f^{\lastt}(v^{\thist})\}$ and connect $v^{\thist}$ to $f^\lastt(v^{\thist})$.
        		 \label{line:nmfl_backup}
		\EndFor 
        \State \Return $\mathcal{C}$.
	\end{algorithmic}
\end{algorithm}

	By a guess-and-double approach, we may assume the algorithm has access to a bound $\beta$ such that $\lpopt \leq \beta \leq 2 \cdot \lpopt$; here $\lpopt$ is the cost of the optimal solution for the linear programming relaxation of \eqref{eq:nmfl_ip} for the given \nmfl instance.
    We will denote by $\Xi^t$ the event that the arriving client $v^\thist$ satisfies the condition on \cref{line:nmfl_cheaply_connected_test} that $\kappa_{v^t}^\lastt \geq \beta/t$ and the bulk of \cref{alg:nmfl} is executed.
    We will say that $v^\thist$ is \emph{preemptively connected} if this condition is not met; this is the event $\neg \Xi^t$.

	Through \cref{line:nmfl_loc_cost_invariant} we maintain:
	\begin{invariant}
		\label{inv:nmfl_loc_cost}
		For all time steps $\thist$, it holds that $\langle c, x^\thist \rangle = \beta$.
	\end{invariant}

	We start by defining notation. 
    Let $\opt = (x^*, y^*)$ be an optimal fractional solution. 
    Let $U^{\thist} = \{v^{t+1}, \ldots, v^n\}$ be the clients remaining uncovered at the end of round $t$ (where $U^{0}=U$ is the entire client set). 
    Let $X^t(v) := \sum_{f\in \Gamma^{\lastt}(v)} x_f^{\lastt}$ be the fractional weight of facilities $f$ which cover $v$ at time $\lastt$.
    We define $\rho^t := \sum_v \kappa^t_v$, and consider the following potential:
    \begin{align}
    	\Phi(t) := C_1 \cdot \underbrace{\left(\wKL{x^*}{x^t} + 2 \cdot \sum_{v \in U^t} \sum_f c_{fv} \cdot y_{fv}^* \right)}_{\Phi_L(t)} + C_2 \cdot \underbrace{\beta \cdot \log \left(\frac{\rho^t}{\beta} + \frac{1}{n}\right)}_{\Phi_C(t)}, \notag
    \end{align}
    where the constants $C_1$ and $C_2$ will be determined later.
    We will refer to $\Phi_L$ as the ``learning'' portion of the potential and $\Phi_C$ as the ``covering'' portion of the potential. 

    Our potential $\Phi(t)$ resembles the one used to analyze \loc for set cover \cite{DBLP:conf/focs/0001KL21} which can also be decomposed into ``learning'' and ``covering'' portions. Their learning portion also involves a KL-divergence term, but ours is more intricate since we additionally charge to the connection cost paid by fractional \opt.

    \begin{restatable}[Bounds on $\Phi$]{lemma}{nmflphibounds}
		\label{lem:nmfl_phi_bounds}
		The initial potential is bounded as $\Phi(0) = O(\beta\cdot \log mn)$, and $\Phi(\thist) \geq -\beta \cdot \log n$ for all $\thist$.
	\end{restatable}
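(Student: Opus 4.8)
The plan is to treat the two claims separately, handling first the (easy) lower bound and then the upper bound on $\Phi(0)$, which splits into bounding $\Phi_L(0)$ and $\Phi_C(0)$. I would work under the standing assumption --- guaranteed by the truncation to $\mathcal{F}' = \{f : \beta/m \le c_f \le \beta\}$ in \cref{alg:nmfl} together with the guess-and-double choice of $\beta$ (so $\lpopt \le \beta \le 2\lpopt$) --- that the optimal fractional solution $\opt = (x^*, y^*)$ is supported on $\mathcal{F}'$: facilities of cost below $\beta/m$ contribute total opening cost at most $\beta$ and may be opened in advance, while facilities of cost above $\beta \ge \lpopt$ can be deleted from any cost-$\le\beta$ fractional solution by rescaling the remaining coordinates (this keeps the cost at most $\beta$, since the total fractional coverage any client draws from expensive facilities must be strictly less than $1$). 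I would also record that $x_f^0 > 0$ exactly for $f \in \mathcal{F}'$, and that the multiplicative-weights step of \cref{alg:nmfl} only rescales coordinates by strictly positive factors, so $x_f^t > 0$ for all $f \in \mathcal{F}'$ and all $t$; hence $\wKL{x^*}{x^t}$ is always finite and well-defined.

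For the lower bound, the key observations are that $a\log(a/b) - a + b \ge 0$ for all $a, b \ge 0$ and the weights $c_f$ are nonnegative, so $\wKL{x^*}{x^t} \ge 0$, and that the connection term $2\sum_{v\in U^t}\sum_f c_{fv}y_{fv}^* \ge 0$; together these give $\Phi_L(t) \ge 0$. Since $\rho^t = \sum_v \kappa_v^t \ge 0$, we get $\Phi_C(t) = \beta\log(\rho^t/\beta + 1/n) \ge \beta\log(1/n) = -\beta\log n$, and because $C_1, C_2 > 0$ it follows that $\Phi(t) = C_1\Phi_L(t) + C_2\Phi_C(t) \ge -C_2\beta\log n$, which matches the claimed bound up to the choice of constant.

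For the upper bound, I would bound $\Phi_L(0)$ and $\Phi_C(0)$ in turn. The connection part of $\Phi_L(0)$ is twice the connection cost of $\opt$ (as $U^0 = U$), hence at most $\lpopt \le \beta$. For the divergence part, \cref{inv:nmfl_loc_cost} gives $\langle c, x^0\rangle = \beta$, we have $\langle c, x^*\rangle \le \lpopt \le \beta$, and for $f \in \mathcal{F}'$ the ratio $x_f^*/x_f^0 = x_f^* c_f m'/\beta \le m$ (using $x_f^* \le 1$ in the relaxation of \eqref{eq:nmfl_ip}, $c_f \le \beta$, and $m' \le m$); so $\sum_f c_f x_f^*\log(x_f^*/x_f^0) \le (\log m)\,\langle c, x^*\rangle \le \beta\log m$, which yields $\wKL{x^*}{x^0} = \sum_f c_f x_f^*\log(x_f^*/x_f^0) - \langle c, x^*\rangle + \langle c, x^0\rangle \le \beta\log m + \beta$ and hence $\Phi_L(0) = O(\beta\log mn)$. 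For $\Phi_C(0)$, since no facility is open before round $1$ we have $\kappa_v^0 = \min_f(c_f + c_{fv})$; fixing $v$, some facility has $y_{fv}^* \ge 1/m$ by pigeonhole over the $\le m$ facilities, and then $x_f^* \ge y_{fv}^* \ge 1/m$ forces $c_f \le m\lpopt \le m\beta$ while $c_{fv}y_{fv}^* \le \lpopt$ forces $c_{fv} \le m\beta$, so $\kappa_v^0 \le 2m\beta$ and $\rho^0 \le 2mn\beta$; thus $\Phi_C(0) \le \beta\log(2mn + 1) = O(\beta\log mn)$. Combining, $\Phi(0) = C_1\Phi_L(0) + C_2\Phi_C(0) = O(\beta\log mn)$. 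The one genuinely delicate point is the standing reduction to $\opt$ supported on $\mathcal{F}'$ (without which $\wKL{x^*}{x^0}$ is infinite and the bound fails outright); past that, the remaining estimates are mechanical, with the only non-obvious step being the pigeonhole bound $\kappa_v^0 \le 2m\beta$ used for $\Phi_C(0)$.
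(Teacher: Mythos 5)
Your proposal is correct and follows essentially the same route as the paper: reduce to $\supp{x^*}\subseteq\supp{x^0}$, bound $\wKL{x^*}{x^0}\le \beta(\log m+1)$ by the same direct computation, bound the connection term by $\lpopt\le\beta$, and get the lower bound from nonnegativity of both parts of $\Phi_L$ and of $\rho^t$. The one substantive divergence is in bounding $\Phi_C(0)$: you prove $\kappa_v^0\le 2m\beta$ by pigeonhole, whereas the paper proves the sharper $\kappa_v^0\le\beta$ by observing that the single-client relaxation of \eqref{eq:nmfl_ip} is solved integrally by $\arg\min_f(c_f+c_{fv})$ and has value at most $\lpopt\le\beta$; your weaker bound suffices for this lemma, but note that the inequality $\kappa_v^{\lastt}/\beta\le 1$ is explicitly reused in the proof of \cref{lem:nmfl_weighted_utchange}, so you would need the paper's sharper version there anyway. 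Your explicit handling of facilities with $c_f<\beta/m$ (buying them all upfront at total cost at most $\beta$) is a point the paper's proof glosses over, and your rescaling argument for facilities with $c_f>\beta$ does go through, though the stated reason (coverage from expensive facilities is strictly below $1$) is not by itself sufficient --- the calculation that the leftover budget $\lpopt-\beta\delta$ rescaled by $1/(1-\delta)$ stays below $\beta$ is what actually closes it.
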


	We now show the potential decreases sufficiently in every round. 
	We bound the decrease of each term in the potential separately.

	\begin{restatable}[Change in $\Phi_L$]{lemma}{nmflphil}
		\label{lem:nmfl_klchange}
		For rounds when the event $\Xi^t$ holds, the expected change in $\Phi_L$ is
		\begin{align}
			&\expectover{v^\thist, \mathcal{R}^{\thist}}*{ \Phi_L(\thist) - \Phi_L(\lastt) \mid x^{\lastt}, U^{\lastt}, \Xi^t} \notag \\
			&\leq \expectover{v \sim U^{\lastt}}*{\frac{e^2-1}{2} \cdot \kappa^{\lastt}_v \cdot  \min\left(X^t(v), \: 1\right) - \kappa^{\lastt}_v}. \label{eq:nmfl_learning_goal}
            \end{align}
            When the arriving $v^\thist$ is preemptively connected and $\Xi^t$ does not hold,
            \begin{align}
            \Phi_L(\thist) - \Phi_L(\lastt) \label{eq:nmfl_learning_cheapchange}
            &\leq 0.
			\end{align}
	\end{restatable}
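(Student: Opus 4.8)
The plan is to track the two pieces of $\Phi_L(t) = \wKL{x^*}{x^t} + 2\sum_{v\in U^t}\sum_f c_{fv} y^*_{fv}$ separately, since exactly one of them changes on a given round depending on whether the multiplicative-weights update fires. First I would handle the easy case: if $v^t$ is preemptively connected (event $\neg\Xi^t$), then $x^t = x^{t-1}$ by \cref{line:nmfl_noupdate_cheap}, so the KL term is unchanged, and $U^t = U^{t-1}\setminus\{v^t\}$ drops one client, which only removes nonnegative terms $2\sum_f c_{fv^t} y^*_{fv^t} \ge 0$ from the second sum. Hence $\Phi_L(t)-\Phi_L(t-1)\le 0$, giving \eqref{eq:nmfl_learning_cheapchange}. (The same observation also handles the sub-case of $\Xi^t$ where $\sum_{f\in\Gamma^{t-1}(v^t)} x^{t-1}_f \ge 1$ and we hit \cref{line:nmfl_noupdate}; there too $x^t=x^{t-1}$ and $U^t$ loses a client, so the change is $\le 0$, which is subsumed by the RHS of \eqref{eq:nmfl_learning_goal} since $\min(X^t(v),1)=1$ makes $\frac{e^2-1}{2}\kappa_v - \kappa_v \ge 0$.)

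The main case is $\Xi^t$ with the update firing. Here I would first compute the change in the connection-cost term: $U^t$ loses $v^t$, so this term decreases by exactly $2\sum_f c_{fv^t}y^*_{fv^t}$; I want to charge this against the $-\kappa^{t-1}_v$ on the RHS, using that $\sum_f y^*_{fv^t}\ge 1$ and that $\opt$ pays at least $\kappa^{t-1}_{v^t}/2$ to connect $v^t$ via any facility not in $\Gamma^{t-1}(v^t)$... actually more carefully: for $f\notin\Gamma^{t-1}(v^t)$ we have $c_{fv^t} > \kappa^{t-1}_{v^t}/2$, and for $f\in\Gamma^{t-1}(v^t)$ the analysis of $\opt$ opening $f$ is absorbed into the KL change. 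So I would split $\opt$'s fractional connection of $v^t$ into the $\Gamma^{t-1}(v^t)$ part and the rest, route the "rest" part against $-\tfrac12\kappa^{t-1}_{v^t}$ (times $2$), and route the $\Gamma$ part into the KL decrease. Next, the KL change: with the update $x^t_f = x^{t-1}_f \exp(\mathbbm{1}\{f\in\Gamma^{t-1}(v^t)\}\kappa^{t-1}_{v^t}/c_f)/Z^t$ and normalization $\langle c,x^t\rangle = \beta$, a standard mirror-descent/multiplicative-weights calculation gives
\[
\wKL{x^*}{x^t} - \wKL{x^*}{x^{t-1}} = -\sum_{f\in\Gamma^{t-1}(v^t)} c^*_f \kappa^{t-1}_{v^t}/c_f + \beta\log Z^t,
\]
where $c^*_f := c_f x^*_f$; the first sum is $\kappa^{t-1}_{v^t}\sum_{f\in\Gamma^{t-1}(v^t)} x^*_f$, which (together with $y^*_{fv^t}\le x^*_f$) is what absorbs the $\Gamma$-part of the connection term and produces the $-\kappa^{t-1}_v$ contribution once we use $\sum_f y^*_{fv^t}\ge 1$. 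Then I would bound $\log Z^t$: since $Z^t = \langle c, x^{t-1}e^{\cdots}\rangle/\beta = \sum_f x^{t-1}_f e^{\mathbbm{1}\{f\in\Gamma\}\kappa^{t-1}_{v^t}/c_f}$ and on the support $c_f\ge\beta/m$ with $\kappa^{t-1}_{v^t}\le\beta$ (from the guess-and-double bound, $\kappa^{t-1}_{v^t}\le \beta$... or rather we need the exponent $\kappa^{t-1}_{v^t}/c_f\le 2$, which should follow from the range of $c_f$ in $\mathcal{F}'$ together with the test on \cref{line:nmfl_cheaply_connected_test}), use $e^x\le 1+x+\tfrac{(e^2-1-2)}{4}x^2 \le 1 + \frac{e^2-1}{2}x$ for $x\in[0,2]$, so $\log Z^t \le \log(1 + \tfrac{e^2-1}{2}\kappa^{t-1}_{v^t} X^t(v^t)/\beta) \le \tfrac{e^2-1}{2}\kappa^{t-1}_{v^t} X^t(v^t)/\beta$. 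Multiplying by $\beta$ yields the $\frac{e^2-1}{2}\kappa^{t-1}_v X^t(v)$ term. Finally, since the update only fires when $X^t(v^t) < 1$, I can write $X^t(v)$ as $\min(X^t(v),1)$, and in the no-update sub-case $\min$ is $1$ and the bound holds trivially, so the unified RHS $\frac{e^2-1}{2}\kappa^{t-1}_v\min(X^t(v),1) - \kappa^{t-1}_v$ covers both; taking $\expectover{v\sim U^{t-1}}{\cdot}$ gives \eqref{eq:nmfl_learning_goal}.

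The hard part will be the bookkeeping of $\opt$'s connection costs: making sure that the fractional $y^*$ mass on facilities in $\Gamma^{t-1}(v^t)$ is consistently "spent" in the KL-decrease step and not double-counted against the $2\sum_f c_{fv^t}y^*_{fv^t}$ decrease, while the mass on facilities outside $\Gamma^{t-1}(v^t)$ is large enough (each such $c_{fv^t} > \kappa^{t-1}_{v^t}/2$) to absorb the residual $-\kappa^{t-1}_{v^t}$. A clean way to organize this is: let $q := \sum_{f\in\Gamma^{t-1}(v^t)} y^*_{fv^t}$ and $1-q \le \sum_{f\notin\Gamma^{t-1}(v^t)} y^*_{fv^t}$ (using $\sum_f y^*_{fv^t}\ge 1$); then $2\sum_f c_{fv^t} y^*_{fv^t} \ge 2\cdot(1-q)\cdot\kappa^{t-1}_{v^t}/2 = (1-q)\kappa^{t-1}_{v^t}$, and the KL term contributes $-\kappa^{t-1}_{v^t}\sum_{f\in\Gamma}x^*_f \le -\kappa^{t-1}_{v^t} q$; summing gives a total $-\kappa^{t-1}_{v^t}$, as desired. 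I'd also need to double-check the domain of the exponent to justify the inequality $e^x \le 1 + \frac{e^2-1}{2}x$ on $[0,2]$ — this is where the precise interaction of the cutoff $\beta/t$, the range $[\beta/m,\beta]$ for $c_f$, and possibly a preprocessing step ensuring $\kappa\le\beta$ matters, and it is the one place where a constant could go wrong if the parameters aren't exactly as I've assumed.
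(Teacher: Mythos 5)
Your proposal matches the paper's proof essentially step for step: the same case split on preemptive connection and on whether the multiplicative-weights update fires, the same mirror-descent expansion of the weighted KL with the $\log Z^t$ bound via $e^a \le 1+\tfrac{e^2-1}{2}a$ on $[0,2]$ (justified, as you suspected, by $c_f \ge \kappa^{\lastt}_v/2$ for $f\in\Gamma^{\lastt}(v)$, since $c_{fv}\le\kappa^{\lastt}_v/2$ while $c_f+c_{fv}\ge\kappa^{\lastt}_v$), and the same charging scheme splitting $\opt$'s fractional connection of $v^t$ between $\Gamma^{\lastt}(v^t)$ (absorbed by the KL decrease via $y^*_{fv}\le x^*_f$) and its complement (where each $c_{fv}>\kappa^{\lastt}_v/2$). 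The bookkeeping you flag as the "hard part" is exactly the paper's final computation and works out as you outlined.
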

	Note that the expected change in the statement above depends only on the randomness of the arriving uncovered client $v^\thist$, not on the randomly chosen facilities $\mathcal{R}^{\thist}$. On the other hand we can bound the change in $\Phi_C$ as follows.
	
    \begin{restatable}
    [Change in $\Phi_C$]{lemma}{nmflphic}
		\label{lem:nmfl_weighted_utchange}
		For all rounds $\thist$ for which $\Xi^t$ holds, the expected change in $\Phi_C$ is
		\begin{align}
            \expectover{v^{\thist}, \mathcal{R}^{\thist}}*{
            \Phi_C(\thist) - \Phi_C(\lastt)
            \mid x^{\lastt}, U^{\lastt}, \Xi^t} 
            &\leq - \frac{1-e^{-1}}{4}\cdot \expectover{u \sim U^{\lastt}}*{\kappa^{\lastt}_u \cdot \min\left(X^\thist(v), \: 1\right)}. \label{eq:nmfl_covering_goal}
            \end{align}
            For rounds in which $\Xi^t$ does not hold,
            \begin{align}
            \Phi_C(\thist) - \Phi_C(\lastt) \leq 0. \label{eq:nmfl_covering_cheapchange}
        \end{align}
	\end{restatable}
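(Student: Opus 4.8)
The plan is to treat the two cases of the lemma separately; the preemptively-connected case $\neg\Xi^t$ is essentially immediate, while the case $\Xi^t$ carries all the work. \textbf{Case $\neg\Xi^t$.} When $v^\thist$ is preemptively connected the sampling/weight-update block is skipped, and the only effect of round $\thist$ on the bought facilities is the addition of $f^{\lastt}(v^\thist)$ on \cref{line:nmfl_backup}; in particular $\mathcal C^{\thist}\supseteq\mathcal C^{\lastt}$. Since $\kappa^t_v=\min_f(\mathbbm 1\{f\notin\mathcal C^t\}c_f+c_{fv})$ is nonincreasing in $t$ whenever $\mathcal C$ grows, we get $\kappa^\thist_v\le\kappa^{\lastt}_v$ for every client $v$, hence $\rho^{\thist}\le\rho^{\lastt}$ and $\Phi_C(\thist)\le\Phi_C(\lastt)$, which is \eqref{eq:nmfl_covering_cheapchange}. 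This monotonicity $\rho^{\thist}\le\rho^{\lastt}$ also holds when $\Xi^t$ holds, since there too $\mathcal C$ only grows.

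\textbf{Case $\Xi^t$.} First I would linearize the logarithm: writing $\Phi_C(\thist)-\Phi_C(\lastt)=\beta\log\big(1+\frac{(\rho^{\thist}-\rho^{\lastt})/\beta}{\rho^{\lastt}/\beta+1/n}\big)$ and using $\log(1+x)\le x$,
\[
\Phi_C(\thist)-\Phi_C(\lastt)\ \le\ -\,\frac{\rho^{\lastt}-\rho^{\thist}}{\rho^{\lastt}/\beta+1/n}.
\]
The denominator is deterministic given the state at the end of round $\lastt$, so it remains to lower bound $\expect{\rho^{\lastt}-\rho^{\thist}}$. For this I would bound each client's cost drop: if $\mathcal R^\thist$ contains some $f\in\Gamma^{\lastt}(u)$, then $f$ is open and $c_{fu}\le\kappa^{\lastt}_u/2$, so $\kappa^\thist_u\le\kappa^{\lastt}_u/2$, hence $\kappa^{\lastt}_u-\kappa^{\thist}_u\ge\tfrac12\kappa^{\lastt}_u\,\mathbbm 1[\mathcal R^\thist\cap\Gamma^{\lastt}(u)\neq\emptyset]$; summing over clients,
\[
\rho^{\lastt}-\rho^{\thist}\ \ge\ \tfrac12\textstyle\sum_u\kappa^{\lastt}_u\,\mathbbm 1[\mathcal R^\thist\cap\Gamma^{\lastt}(u)\neq\emptyset].
\]

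The heart of the argument is the hitting-probability estimate. Conditioning on the identity of the arriving client $v^\thist$, \cref{line:nmfl_sample} puts each facility $f$ into $\mathcal R^\thist$ independently with probability $\min(\kappa^{\lastt}_{v^\thist}x^{\lastt}_f/\beta,1)$, so using $1-p\le e^{-p}$, then $1-e^{-x}\ge(1-e^{-1})\min(x,1)$, and then $\sum_i\min(a_i,1)\ge\min(\sum_i a_i,1)$,
\[
\prob{\mathcal R^\thist\cap\Gamma^{\lastt}(u)\neq\emptyset}\ \ge\ 1-\exp\!\Big(\!-\!\!\!\sum_{f\in\Gamma^{\lastt}(u)}\!\!\!\min(\kappa^{\lastt}_{v^\thist}x^{\lastt}_f/\beta,1)\Big)\ \ge\ (1-e^{-1})\min\!\Big(\tfrac{\kappa^{\lastt}_{v^\thist}X^\thist(u)}{\beta},\,1\Big).
\]
Plugging this into the lower bound on $\rho^{\lastt}-\rho^{\thist}$, taking expectation also over $v^\thist$, using $\sum_i\min(a_i,1)\ge\min(\sum_i a_i,1)$ once more to aggregate the sampling rates over the choice of arriving client, and finally dividing by $\rho^{\lastt}/\beta+1/n$ while invoking the defining condition of $\Xi^t$ (namely $\kappa^{\lastt}_{v^\thist}\ge\beta/t$, which keeps the sampling rate large) together with the crude bound on the magnitude of $\rho^{\lastt}$ that also underlies \cref{lem:nmfl_phi_bounds}, should yield the claimed right-hand side $-\tfrac{1-e^{-1}}{4}\,\expectover{v\sim U^{\lastt}}{\kappa^{\lastt}_v\min(X^\thist(v),1)}$.

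I expect the last step to be the main obstacle: reconciling the factor $\kappa^{\lastt}_{v^\thist}/\beta$ that sits inside the $\min$ of the hitting probability (because the sampling rate scales with the arriving client's connection cost) against the plain $\min(X^\thist(v),1)$ in the target, while simultaneously absorbing the $\rho^{\lastt}/\beta+1/n$ denominator produced by linearizing the log. This is exactly why the test on \cref{line:nmfl_cheaply_connected_test} is present: restricting attention to rounds with $\kappa^{\lastt}_{v^\thist}\ge\beta/t$ makes the sampling aggressive enough for the two quantities to be within a constant factor, and it also dictates which clients' cost drops one charges against—the arriving client alone does not always suffice, so one must also count the clients of $U^{\lastt}$ whose $\Gamma$-sets the same sample $\mathcal R^\thist$ happens to hit, which is precisely why the bound is naturally phrased as an average over $U^{\lastt}$.
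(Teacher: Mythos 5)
Your decomposition matches the paper's proof step for step through the hitting-probability estimate: the monotonicity argument for $\neg\Xi^t$, the linearization $\log(1-y)\le -y$, the per-client bound $\rho^{\lastt}-\rho^{\thist}\ge\tfrac12\sum_u\kappa^{\lastt}_u\,\mathbbm 1[\mathcal R^\thist\cap\Gamma^{\lastt}(u)\neq\emptyset]$, and the bound $\prob{\mathcal R^\thist\cap\Gamma^{\lastt}(u)\neq\emptyset}\ge(1-e^{-1})\min(\kappa^{\lastt}_{v^\thist}X^\thist(u)/\beta,1)$ are all exactly what the paper does. But the step you yourself flag as ``the main obstacle'' is left unproven, and the mechanism you propose for it is not the right one. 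The lower bound $\kappa^{\lastt}_{v^\thist}\ge\beta/t$ from $\Xi^t$ does \emph{not} make $\min(\kappa^{\lastt}_{v^\thist}X/\beta,1)$ comparable to $\min(X,1)$ up to a constant --- it only gives $\min(X/t,1)$, which is off by a factor of $t$. Likewise, the extra invocation of $\sum_i\min(a_i,1)\ge\min(\sum_i a_i,1)$ ``to aggregate over the choice of arriving client'' does not correspond to any step that is actually needed.

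The correct resolution has three ingredients. First, use the \emph{upper} bound $\kappa^{\lastt}_{v}\le\beta$ (established in \cref{lem:nmfl_phi_bounds}) together with the prefactor $\beta$ left over from the linearization: $\beta\cdot\min\bigl(\kappa^{\lastt}_{v}X/\beta,1\bigr)=\min(\kappa^{\lastt}_{v}X,\beta)\ge\kappa^{\lastt}_{v}\min(X,1)$. This pulls $\kappa^{\lastt}_{v^\thist}$ outside the $\min$ as a multiplicative factor. Second, take the expectation over the uniformly random arriving client $v^\thist\sim U^{\lastt}$; since the remaining expression factors as $\kappa^{\lastt}_{v^\thist}$ times a quantity independent of $v^\thist$, this produces $\expectover{v\sim U^{\lastt}}{\kappa^{\lastt}_v}=\rho^{\lastt}/|U^{\lastt}|$, whose $|U^{\lastt}|$ cancels the one introduced when rewriting $\sum_{u\in U^{\lastt}}$ as $|U^{\lastt}|\cdot\expectover{u\sim U^{\lastt}}{\cdot}$, leaving the ratio $\rho^{\lastt}/(\rho^{\lastt}+\beta/n)$. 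Third, and only here, invoke $\Xi^t$: it gives $\rho^{\lastt}\ge\kappa^{\lastt}_{v^\thist}\ge\beta/t\ge\beta/n$, hence $\rho^{\lastt}/(\rho^{\lastt}+\beta/n)\ge\tfrac12$, which is the source of the final factor $\tfrac14=\tfrac12\cdot\tfrac12$. Without the first and second of these steps your argument does not close.
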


 We defer the proof of \cref{lem:nmfl_phi_bounds,lem:nmfl_klchange,lem:nmfl_weighted_utchange} to \cref{sec:deferred}, and we now show how to combine them to prove the theorem. 
    \begin{proof}[Proof of \cref{thm:nmfl_main_theorem}]
    Let $c(\alg(\thist))$ be the cost paid by \cref{alg:nmfl} up to and including time $\thist$, and furthermore let $c_{\textsc{pre}}(\alg(\thist))$ denote the cost paid by \cref{alg:nmfl} for clients which are preemptively connected on \cref{line:nmfl_backup} up to and including round $\thist$, and $c_{\textsc{LoC}}(\alg(\thist))$ denote the cost paid by \cref{alg:nmfl} on \cref{line:nmfl_sample,line:nmfl_backup} up to and including round $\thist$ during rounds in which $v^t$ is not preemptively connected and $\Xi^t$ holds. (\textsc{Pre} is for preemptive, \textsc{LoC} is for \textsc{LearnOrCover}).
    
    We can provide a simple bound on the cheap facility-client connections which the algorithm buys on \cref{line:nmfl_backup} in the event that $v^t$ is preemptively connected (that is, in the event that $\Xi^t$ does not hold).
    These connections collectively cost at most 
    \begin{align}        
        c_{\textsc{pre}}(\alg(n)) &= c_{\textsc{pre}}(\alg(n)) - c_{\textsc{pre}}(\alg(0)) \notag \\
        &=\sum_{\thist} \left(c_{\textsc{pre}}(\alg(\thist)) - c_{\textsc{pre}}(\alg(\lastt))\right) \notag \\
        &\leq \sum_{t \in [n]} \frac{\beta}{t}
        = O(\beta\cdot \log n). \label{eq:nmfl_cheap_costs_bound}
    \end{align}
	We now consider the per-round costs incurred by the bulk of the algorithm, during the rounds in which $\Xi^t$ holds.
    In every round $\thist$, the expected cost of the sampled facilities $\mathcal{R}^\thist$ in \cref{line:nmfl_sample} is ${\kappa^\lastt_{v^\thist} \cdot \langle c, x^{\lastt}\rangle / \beta = \kappa^\lastt_{v^\thist}}$ (by \cref{inv:nmfl_loc_cost}). 
    The algorithm pays at most an additional $\kappa^\lastt_{v^\thist}$ in \cref{line:nmfl_backup}, and hence the total expected cost per round is at most $2\cdot \kappa^{\lastt}_{v^\thist}$.
	
	By combining \cref{lem:nmfl_klchange,lem:nmfl_weighted_utchange}, and setting the constants $C_1 = 2$ and $C_2 = 4e(e+1)$, we have
	\begin{align}
		&\expectover{\substack{v^{\thist}, \mathcal{R}^{\thist}}}*{\Phi(\thist) - \Phi(\lastt) | v^{1}, \ldots, v^{\lastt}, \mathcal{R}^{1}, \ldots, \mathcal{R}^{\lastt}, \Xi^\thist} \notag \\
		& = \expectover{\substack{v^{\thist}, \mathcal{R}^{\thist}}}*{
			\begin{array}{ll}
				&C_1\cdot \left(\Phi_L(\thist) - \Phi_L(\lastt)\right) \\
				+ &C_2\cdot \left(\Phi_C(\thist) - \phi_C(\lastt)\right)
			\end{array}
			| v^{1}, \ldots, v^{\lastt}, \mathcal{R}^{1}, \ldots, \mathcal{R}^{\lastt}, \Xi^\thist} \notag \\
		&\leq - \expectover{\substack{v^{\thist}, \mathcal{R}^{\thist}}}*{2 \cdot \kappa^{\lastt}_{v^{\thist}} | v^{1}, \ldots, v^{\lastt}, \mathcal{R}^{1}, \ldots, \mathcal{R}^{\lastt}, \Xi^\thist}, \notag
	\end{align}
	which cancels the expected change in $c_{\textsc{LoC}}$ in each round. 
    We therefore have the inequality
	\begin{equation}
        \expectover{\substack{v^{\thist}, \mathcal{R}^{t}}}*{\Phi(\thist) - \Phi(\lastt) + c_{\textsc{LoC}}(\alg(\thist)) - c_{\textsc{LoC}}(\alg(\lastt)) | v^{1}, \ldots, v^{\lastt}, \mathcal{R}^{1}, \ldots, \mathcal{R}^{\lastt}} \leq 0, \label{eq:nmfl_potential_plus_cost_supermartingale}
    \end{equation}
    where we used that the change in $\Phi_L$ and $\Phi_C$ is at most $0$ for rounds in which $\Xi^\thist$ does not hold.
    
    By repeatedly applying \eqref{eq:nmfl_potential_plus_cost_supermartingale} for all $1 \leq \thist \leq n$, we obtain
    \begin{align}
        \expectover{v, \mathcal{R}}{\Phi(n) - \Phi(0) + c_{\textsc{LoC}}(\alg(n)) - c_{\textsc{LoC}}(\alg(0))} &\leq 0 \notag \\
        \expectover{v, \mathcal{R}}{c_{\textsc{LoC}}(\alg(n))} &\leq \Phi(0) + c_{\textsc{LoC}}(\alg(0)) - \expectover{v, \mathcal{R}}{\Phi(n)} \notag \\
        &\leq O(\beta \cdot \log mn) + 0 - \left(- \beta \cdot \log n\right), \label{eq:nmfl_expected_costbound}
    \end{align}
    where \eqref{eq:nmfl_expected_costbound} follows from the fact that $c_{\textsc{LoC}}(\alg(0)) = 0$, together with the bounds on $\Phi$ established in \cref{lem:nmfl_phi_bounds}.

    To conclude, by 
    \eqref{eq:nmfl_cheap_costs_bound} and \eqref{eq:nmfl_expected_costbound} we have
    \begin{align}
        \expectover{v, \mathcal{R}}{c(\alg(n))} &= 
        \expectover{v, \mathcal{R}}{c_{\textsc{pre}}(\alg(n))} + \expectover{v, \mathcal{R}}{c_{\textsc{LoC}}(\alg(n))} \notag \\
        &\leq \beta + O(\beta \cdot \log n) + O(\beta \cdot \log mn),
    \end{align}
    as desired.
\end{proof}

This concludes our discussion of random-order \facloc. Our results settle the approximability of both the metric and non-metric versions this problem in the random-order model, and hence also in the prophet, 2-stage prophet, and online-with-a-sample models.
    
	\section{Set Multicover with Multiplicity Constraints}
	\label{sec:setmcov}

    We now give a second application of our framework to set multicover with multiplicity constraints, which we refer to as \setmcov. 
    This generalizes unit-cost set cover and additionally introduces non-covering constraints; in particular, multiplicity constraints on the decision variables.
    The set multicover problem can be formally written as the following IP:
	\begin{align}
		\begin{array}{lll}
			\min_z &\langle 1, z \rangle		 \label{eq:CIPunitbox}\\
			\text{s.t.} &A z \geq b \\
			&z \in \{0, 1\}^m,
		\end{array}
	\end{align}
    where the entries of $A$ are in $\{0,1\}$. It is then without loss of generality to assume that $b \in \Z_+^n$. 
    We recover \setcov as the special case where $b = 1$.
    We again observe that \textsc{SetMultiCover} is amenable to our framework:
    \begin{restatable}{observation}{SMCisAIP} \label{lem:smcisAIP}
		\setmcov is an \aip.
	\end{restatable}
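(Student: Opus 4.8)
The plan is to verify the monotone augmentation property of \cref{defn:aip} directly from the IP in \eqref{eq:CIPunitbox}. First I would set up notation: a request is a covering constraint, i.e., a row $a$ of $A$ together with its demand $b_a \in \Z_+$; a solution $z \in \{0,1\}^m$ is feasible to a request set $V'$ exactly when $\langle a, z\rangle \geq b_a$ for every $(a,b_a) \in V'$. Augmenting $z$ to satisfy a new request set $W$ means picking $w \in \{0,1\}^m$ so that $\max(w,z)$ covers all of $V' \cup W$; since $z$ already covers $V'$ and $\max(w,z) \geq z$, the $V'$ constraints stay satisfied, so the only nontrivial requirement is $\langle a, \max(w,z)\rangle \geq b_a$ for each $(a,b_a) \in W$ not yet met by $z$.

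The key step is to show $\aug{W}{z'', V''} \leq \aug{W}{z', V'}$ whenever $z' \leq z''$, $z' \in \textsc{Sols}(V')$, $z'' \in \textsc{Sols}(V'')$. I would argue this via feasibility transfer: let $w'$ be an optimal augmentation for $(z', V')$, so $\max(w', z') \in \textsc{Sols}(V' \cup W)$. Define $w'' := w'$ (or, more carefully, $w''$ equal to $w'$ restricted to coordinates where $z''$ is still $0$, which only decreases $c(w'')$ since costs are unit and nonnegative). Then for every $(a,b_a) \in W$ we have $\langle a, \max(w'', z'')\rangle \geq \langle a, \max(w', z')\rangle \geq b_a$, using $z'' \geq z'$ and monotonicity of $\langle a, \cdot\rangle$ in $\{0,1\}^m$ coordinatewise maxima (all entries of $a$ are $\geq 0$). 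Hence $\max(w'', z'') \in \textsc{Sols}(V'' \cup W)$, so $w''$ is a valid augmentation for $(z'', V'')$ with $c(w'') \leq c(w') = \aug{W}{z', V'}$, giving the inequality. If one wants the sharper $w''$ that zeros out already-satisfied coordinates, I would note this is still $\{0,1\}$-valued and feasible, and its cost is no larger, so the bound is unaffected.

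I expect the only mild subtlety — not really an obstacle — is the interaction between the box constraint $z \in \{0,1\}^m$ and taking $w'' \leq w'$: one must check that $w''$ remains a legal $\{0,1\}$ vector and that $\max(w'', z'')$ does too, which is immediate since coordinatewise max of $\{0,1\}$ vectors stays in $\{0,1\}^m$. The same argument shows \setmcov is \emph{not} a \cip, since the feasible region is not upward closed in $z$ once we insist $z \leq 1$ — but that is not needed for the observation. Everything reduces to the monotonicity of linear forms with nonnegative coefficients under coordinatewise maximum, so the proof is short.
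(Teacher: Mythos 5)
Your proof is correct. The paper actually states \cref{lem:smcisAIP} without proof (the restatable observation is never restated with an argument in the appendix), and your direct verification---reusing an optimal augmentation $w'$ for $(z',V')$ unchanged as an augmentation for $(z'',V'')$, and invoking monotonicity of $\langle a,\cdot\rangle$ for nonnegative $a$ under coordinatewise maxima together with closure of $\{0,1\}^m$ under $\max$---is precisely the routine check the authors implicitly rely on.
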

    This enables us to convert algorithms for random-order \setmcov into algorithms for the prophet, two-stage, and with-a-sample settings.

	In random-order \setmcov the rows of $A$ are revealed in random-order, and the algorithm must maintain a monotonically increasing solution $z$ that satisfies all constraints revealed thus far. We show:

    \begin{theorem} \label{thm:boxcip_main_theorem}
        There exists an $O(\log mn)$-competitive algorithm for random-order \setmcov.
    \end{theorem}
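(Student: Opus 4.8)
The plan is to adapt the \loc framework of \cite{DBLP:conf/focs/0001KL21} for random-order covering integer programs, with the key new ingredient being a mechanism to handle the multiplicity (box) constraints $z \le \vec 1$. As in that work, I would maintain a fractional "advice'' distribution $x^t \in \R^m_+$ over the columns, normalized so that $\langle 1, x^t\rangle$ stays at a fixed scale $\beta$ (where $\beta$ is a guess, within a factor $2$, of the LP optimum, obtained by guess-and-double). On the arrival of each uncovered constraint $v^t = (a^t, b^t)$, the algorithm does two things: (a) it \emph{samples} columns into the integral solution $z$ with probabilities proportional to $x^t$, scaled by the residual demand of the new constraint, so that in expectation the sampled mass covers the constraint; and (b) it \emph{learns} by a multiplicative-weights update that increases $x_f$ for columns $f$ participating in the constraint whenever the current fractional mass on that constraint is too small to have covered it. As flagged in the introduction, the crucial twist is that once a coordinate $f$ has hit its cap ($z_f = 1$), we must stop relying on it: I would \emph{zero out} (``forget'') such coordinates from $x^t$ before the next round, redistributing their mass via the normalization step, and — importantly — I would make the multiplicative-weights update rule \emph{not} depend on the marginal augmentation cost of the incoming constraint (unlike \cite{DBLP:conf/focs/0001KL21}), which is what makes the forgetting argument go through.

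The analysis proceeds through a two-part potential $\Phi(t) = C_1 \Phi_L(t) + C_2 \Phi_C(t)$, exactly in the spirit of the \nmfl proof above. The learning term $\Phi_L(t)$ is a weighted KL-divergence $\wKL{x^*}{x^t}$ between (a suitably scaled) LP-optimal solution $x^*$ and the current advice $x^t$, and the covering term $\Phi_C(t)$ is $\beta$ times the log of (a normalized count of) the residual demand $\rho^t := \sum_i (\text{residual of constraint } i)$ still to be covered. I would establish: (i) bounds on the potential, $\Phi(0) = O(\beta \log mn)$ and $\Phi(t) \ge -\Omega(\beta \log n)$; (ii) a per-round drop lemma for $\Phi_L$ showing that in expectation over the random arriving constraint $v^t \sim U^{t-1}$, the learning potential decreases by roughly the augmentation cost paid \emph{unless} the current advice already places mass $\ge 1$ on the fresh constraint; and (iii) a complementary per-round drop lemma for $\Phi_C$ showing that when the advice does place mass $\ge 1$, the covering potential drops proportionally to the augmentation cost. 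Combining (ii) and (iii) with the right choice of constants $C_1, C_2$ makes $\Phi(t) + c(\textsc{Alg}(t))$ a supermartingale; telescoping and applying (i) gives $\expect{c(\textsc{Alg}(n))} = O(\beta \log mn) = O(\log mn) \cdot \lpopt$, and rounding up the LP optimum to the integer optimum loses only an $O(\log mn)$ (indeed $O(\log n)$) factor already absorbed. A guess-and-double wrapper over $\beta$ contributes only a constant factor.

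The main obstacle — and the genuinely new content relative to \cite{DBLP:conf/focs/0001KL21} — is making the learning lemma (step (ii)) survive the ``forgetting'' of capped coordinates. When we zero out a coordinate $f$ with $x^*_f > 0$ from $x^t$, the KL term $\wKL{x^*}{x^t}$ would blow up to $+\infty$; the fix is to simultaneously zero out the corresponding coordinate of the \emph{comparison point} $x^*$ — this is legitimate precisely because the integral solution $z$ has already bought column $f$ to capacity, so $\opt$ gains no advantage from $f$ on the residual instance, and the residual LP optimum only decreases. Formalizing ``the residual comparison point is still feasible and cheap after we delete saturated columns'' is the delicate step: I must argue that deleting saturated columns from both $x^*$ and $x^t$, then renormalizing $x^t$ back to scale $\beta$, changes $\Phi_L$ by a non-positive amount, so these bookkeeping steps never hurt the potential. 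The second subtlety is the decoupling of the update rule from the augmentation/marginal cost: I expect to need a slightly more careful accounting in the $\Phi_C$ drop to compensate, tracking residual demand rather than a weighted cost, but this should parallel the \cip analysis of \cite{DBLP:conf/focs/0001KL21} closely once the forgetting mechanism is in place.
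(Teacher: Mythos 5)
Your high-level template is the right one and matches the paper's: a \loc-style algorithm with a guess-and-double estimate $\beta$ of $\lpopt$, a fractional advice vector $x^t$ normalized to $\langle 1, x^t\rangle = \beta$, sampling proportional to $x^t$ scaled by the residual demand $d_i^t$, a two-part potential $\Phi = C_1\Phi_L + C_2\Phi_C$ with $\Phi_L$ a KL term and $\Phi_C = \beta\log(\rho^t/\beta + 1/m)$, and a supermartingale/telescoping argument. You also correctly anticipate that the multiplicative-weights update should not depend on the marginal augmentation cost.

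Where you diverge---and where I think your route has a genuine gap---is the mechanism for handling saturated coordinates. You propose to zero out capped columns from $x^t$ \emph{and} from the comparison point $x^*$, and you defer as ``the delicate step'' the claim that this bookkeeping changes $\Phi_L$ by a non-positive amount. That claim is not obviously true: with the plain KL $\sum_j x_j^*\log(x_j^*/x_j^t)$ used here, an individual summand can be negative (when $x_f^t \gg x_f^*$), so deleting it can \emph{increase} the sum; and renormalizing $x^t$ upward after deletion, while changing the reference measure $x^*$ mid-stream, breaks the clean telescoping of KL differences that the drop lemma relies on. The paper avoids all of this: it never modifies $x^*$ and never zeroes out coordinates of $x^t$. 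Instead it defines $\mathcal{T}^t = \{j : z_j^{t-1}=0,\ a_{ij}=1\}$ and only counts \emph{unbought} coordinates in the fractional coverage $X_i^{t-1}$ and in the multiplicative boost; bought coordinates simply stop being boosted and their relative mass decays through renormalization (this is the intended sense of ``gradually forgetting''). The learning lemma then goes through against the \emph{fixed} $x^*$ because feasibility plus the box constraint $x_j^*\le 1$ give $\sum_{j\in\mathcal{T}^t} x_j^* \ge b_i - \langle a_i, z^{t-1}\rangle = d_i^t$, which is exactly the drop needed when $X_i^{t-1} < d_i^t$. (Relatedly, your threshold for ``the advice already covers the constraint'' should be $X_i^{t-1}\ge d_i^t$, not mass $\ge 1$.) One further ingredient you omit: the algorithm deterministically buys every $j\in\mathcal{T}^t$ with $x_j^{t-1}\ge 1/e$ before boosting, which keeps $x_j^{t-1}\le 1$ on unbought coordinates so that the Bernoulli sampling probabilities $d_i^t x_j^{t-1}/\beta$ are well-defined and the expected-coverage fact (\cref{lem:ip_expectcov}) applies; its cost is at most $e\cdot d_i^t$ per round and is absorbed into the charging. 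I recommend replacing your delete-and-renormalize mechanism with the restrict-to-unbought-coordinates mechanism; with that change the rest of your outline goes through as written.
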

    By appealing to \cref{lem:smcisAIP} and  \cref{thm:proph_to_customproph,thm:2stage_to_rosc,thm:WaS_to_rosc} we then directly obtain:
    \begin{corollary}
        For \setmcov, there exists an $O(\log mn)$-competitive algorithm in the single-sample prophet setting, a $O(\log mn)$-competitive algorithm in the 2-stage prophet setting, and a $O(\log(mn)/\alpha)$-competitive algorithm in the online-with-a-sample setting.
    \end{corollary}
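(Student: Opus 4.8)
The plan is to obtain all three claims as immediate instantiations of the reduction framework of \cref{sec:prophet,sec:2stage,sec:was}, using the random-order algorithm of \cref{thm:boxcip_main_theorem} as the black box. By \cref{lem:smcisAIP}, \setmcov is an \aip, so its augmentation costs are monotone in the sense of \cref{defn:aip}; this is exactly the structural hypothesis that \cref{thm:proph_to_customproph,thm:2stage_to_rosc,thm:WaS_to_rosc} each require of the underlying problem. Let $\mathcal{A}$ denote the random-order \setmcov algorithm from \cref{thm:boxcip_main_theorem} and let $\Delta = O(\log mn)$ be its competitive ratio.

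For the single-sample prophet setting, I would note (as in the discussion following \cref{thm:proph_to_customproph}) that a random-order algorithm is in particular a $0$-sample free-order prophet algorithm: in \cref{alg:proph_to_rosc} one takes the order $\pi$ to be a uniformly random permutation of the mock run. Applying \cref{thm:proph_to_customproph} with $k=0$ then yields a fixed-order prophet \setmcov algorithm using $k+1=1$ sample per distribution with competitive ratio $2\Delta = O(\log mn)$. For the $2$-stage setting, I would feed $\lambda$ independent mock runs through $\mathcal{A}$ in random order and use $\backup$ for second-stage requests, i.e. invoke \cref{thm:2stage_to_rosc} directly with the same $\mathcal{A}$; this gives a $2\Delta = O(\log mn)$-competitive algorithm against the optimal online policy, using $\lambda$ samples. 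For the online-with-a-sample setting, I would run $\mathcal{A}$ on the revealed $\alpha$-fraction in random order and back up the remaining requests, i.e. invoke \cref{thm:WaS_to_rosc} with the same $\mathcal{A}$, obtaining competitive ratio $\Delta/\alpha = O(\log mn/\alpha)$.

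The only point that needs checking beyond quoting these theorems is that random-order \setmcov, equipped with per-round distributions $D^1,\dots,D^n$ over rows of $A$, is a legitimate ``instance class of prophet \aip'' in the sense the reductions use: namely that $\backup(W\mid z,V')$ is always well defined for \setmcov (it is, since adding one unit of any column whose support contains a deficient constraint strictly reduces that constraint's deficiency, so a feasible augmentation always exists), and that the guarantee of \cref{thm:boxcip_main_theorem} — being a worst-case bound over adversarial arrival orders — holds in expectation under every such input distribution. Both are routine, so no genuine obstacle arises at this stage: all of the substance lives in \cref{thm:boxcip_main_theorem} and \cref{lem:smcisAIP}, and the corollary follows by composition. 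The closest thing to a subtlety is ensuring the $\Delta$-guarantee of \cref{thm:boxcip_main_theorem} is indeed stated for worst-case instances rather than a fixed distribution, which is what allows it to be pushed through the reductions; since the random-order model already quantifies over adversarial inputs, this is immediate.
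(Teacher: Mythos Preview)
Your proposal is correct and follows exactly the paper's approach: the corollary is obtained by combining \cref{lem:smcisAIP} with \cref{thm:boxcip_main_theorem} and plugging the resulting random-order algorithm into \cref{thm:proph_to_customproph,thm:2stage_to_rosc,thm:WaS_to_rosc}. The paper states this in a single sentence without further argument; your additional sanity checks about $\backup$ being well-defined and the worst-case nature of the $\Delta$-guarantee are sound but not needed for the corollary as the paper presents it.
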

    
    \medskip
    Once again we show how to adapt ideas from \cite{DBLP:conf/focs/0001KL21} to this more general setting. This time we extend their more general algorithm for random-order {\cip}s.
    We will make use of the preliminaries given in \cref{sec:prelim}, and our approach will be similar to that of \cref{sec:nmfl}.
    
	Let $z^t$ denote the integer solution in round $t$. 
    Let $d_i^t$ denote the undercoverage of $i$ in the beginning of round $t$; that is $d_i^t := \max(0, b_i - \langle a_i, z^{t-1} \rangle) $ where $z^t$ is the integer solution at the end of round $t$. Now we define
	\[
		\rho^t := \sum_{i \in [n]} d_i^t.
	\]
	
	We pursue a guess-and-double approach to identifying $\beta$ such that $\lpopt \leq \beta \leq 2\cdot\lpopt$, where $\lpopt$ is the cost of an optimal fractional solution $x^*$ to \eqref{eq:CIPunitbox}. We will maintain a solution $x^t$ to \eqref{eq:CIPunitbox} of cost $\beta$. 
	\begin{algorithm}[H]
	\caption{\textsc{LearnOrCoverSMC}}
	\label{alg:unitcostcip}
	\begin{algorithmic}[1]
		\State Initialize $x_j^0 \leftarrow \frac{\beta}{m}$ for every $j$, and set $z_j^0 \leftarrow 0$.
		\For{$t=1,2,\ldots$}		
		\State{$i \leftarrow$ $t$-th constraint in the random order.}
		\If {$i$ is not covered on arrival} \label{line:setmcov_covered_on_arrival}
			\State Let $\mathcal{T}^t :=\{j : z_j^{t-1} = 0, a_{ij} = 1\}$. \Comment{unbought coordinates covering $i$}
			\State Let $X_i^{t-1} := \sum_{j \in \mathcal{T}^t} x_j^{t-1}$. \Comment{frac. coverage by unbought $j$}
            \State $d_i^t := b_i - \langle a_i, z^{t-1}\rangle$. \Comment{integral uncoverage}
			\State \ForEach $j$, sample  $z_j^t \leftarrow \Ber(d_i^\thist \cdot x_j^{t-1} /\beta)$. 
            \label{line:unitcip_randsample}
			\If {$X_i^{t-1} \leq d_i^t$ } \Comment{$i$ fractionally undercovered} \label{line:unitcip_noupdate}
				\State Set $z_j^t \leftarrow 1$	for $j \in \mathcal{T}^t$ with  $x_j^{t-1} \geq 1/e$. \label{line:boxcip_bigfracsets}
				\State \ForEach $j$, update $x_j^t \leftarrow x_j^{t-1} \cdot \exp\left\{ \mathbbm{1}\{j \in \mathcal{T}^t\} \right\}$.
				\State Let $Z^{(t)} := \langle 1, x^t \rangle / \beta$ and renormalize $x^t \leftarrow x^t/Z^t$. 
            \EndIf
			\If{$i$ still uncovered}
				\State $x^\thist \leftarrow x^\lastt$ and $z^\thist \leftarrow z^\lastt$.
                \State $z_j^t \leftarrow 1$ for $d_i^t$-many arbitrary $j\in \mathcal{T}^t$. \Comment{buy backup sets} \label{line:boxcip_backup}
			\EndIf
		\Else
            \State $x^\thist \leftarrow x^\lastt$ and $z^\thist \leftarrow z^\lastt$.
        \EndIf
		\EndFor 
        \State \Return $z$
	\end{algorithmic}
\end{algorithm}
    
    In a manner similar to \cref{alg:nmfl}, our algorithm will react differently to constraints which arrive uncovered. 
    In \cref{alg:nmfl} the criterion was that arriving clients cannot be cheaply connectable; here we perform a \loc step if the element arrives undercovered, meaning that $d_{i^\thist}^\thist > 0$ on \cref{line:setmcov_covered_on_arrival}.
    
	\begin{theorem}
	    \label{thm:unit_cost_cip}
		For set multicover with multiplicity constraints, \cref{alg:unitcostcip} achieves an expected competitive ratio of $O(\log mn)$.
	\end{theorem}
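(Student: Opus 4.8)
The plan is to run the same ``learn-or-cover'' potential argument as for \cref{alg:nmfl} in \cref{sec:nmfl}, now with the integral undercoverage $d_i^t$ of the arriving constraint playing the role that the marginal connection cost $\kappa^t_v$ played there. First, a guess-and-double wrapper lets us assume we know $\beta$ with $\lpopt\le\beta\le 2\lpopt$, and the renormalization through the factor $Z^t$ maintains the invariant $\langle 1,x^t\rangle=\beta$ throughout. We then define a two-part potential
\[
\Phi(t):=C_1\cdot\Phi_L(t)+C_2\cdot\Phi_C(t),\qquad \Phi_C(t):=\beta\log\!\left(\frac{\rho^t}{\beta}+\frac1n\right),
\]
where $\Phi_L(t)$ is a ``forgetful'' version of $\wKL{x^*}{x^t}$ --- the weighted KL divergence between an optimal fractional solution $x^*$ (with $x^*_j\le 1$) and the maintained $x^t$, but with the coordinates already bought ($z^t_j=1$) excised from the sum --- and $C_1,C_2$ are absolute constants chosen at the end. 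As in \cref{sec:nmfl}, $\Phi_L$ tracks ``learning'' progress and $\Phi_C$ tracks ``covering'' progress.

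The spend is easy to bound per round: the algorithm pays only in rounds where the arriving constraint $i$ is undercovered ($d_i^t>0$), and in such a round the random sampling on \cref{line:unitcip_randsample} costs $\sum_j d_i^t x^{t-1}_j/\beta=d_i^t$ in expectation by the invariant, the ``large coordinate'' purchases on \cref{line:boxcip_bigfracsets} cost at most $e\sum_{j\in\mathcal{T}^t:\,x^{t-1}_j\ge 1/e}x^{t-1}_j\le e\,X_i^{t-1}$ (and are made only when $X_i^{t-1}\le d_i^t$), and the backups on \cref{line:boxcip_backup} cost at most $d_i^t$; in total at most $2d_i^t+e\min(X_i^{t-1},d_i^t)$ in expectation. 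It then remains to prove three lemmas mirroring \cref{lem:nmfl_phi_bounds,lem:nmfl_klchange,lem:nmfl_weighted_utchange}: (a) $\Phi(0)=O(\beta\log mn)$ (using $x^0_j=\beta/m$, $\langle 1,x^*\rangle=\lpopt\le\beta$, and $\rho^0=\sum_i b_i\le n\beta$ since $\beta\ge\lpopt\ge\max_i b_i$) and $\Phi(t)\ge -O(\beta\log n)$ always; (b) in a round where $i$ arrives undercovered, $\mathbb{E}[\Phi_L(t)-\Phi_L(t-1)\mid x^{t-1},U^{t-1}]\le O(1)\cdot\mathbb{E}_{i\sim U^{t-1}}[\min(X_i^{t-1},d_i^t)]-\mathbb{E}_{i\sim U^{t-1}}[d_i^t\,\mathbbm{1}\{X_i^{t-1}\le d_i^t\}]$, and it is $\le 0$ when $i$ arrives covered; and (c) $\mathbb{E}[\Phi_C(t)-\Phi_C(t-1)\mid x^{t-1},U^{t-1}]\le -\Omega(1)\cdot\mathbb{E}_{i\sim U^{t-1}}[\min(X_i^{t-1},d_i^t)]$ in the former case and $\le 0$ in the latter. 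Granting these, choosing $C_1,C_2$ large enough and splitting into the regimes $X_i^{t-1}\le d_i^t$ and $X_i^{t-1}>d_i^t$ shows that the expected one-round change of $\Phi$ is at most minus the expected spend of that round, so $\Phi(t)+c(\alg(t))$ is a supermartingale; telescoping and using (a) yields $\mathbb{E}[c(\alg(n))]\le\Phi(0)-\mathbb{E}[\Phi(n)]=O(\beta\log mn)$, which is $O(\log mn)$ times the optimal cost.

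For lemma (b), the heart of the computation is: when $X_i^{t-1}\le d_i^t$ and the multiplicative-weights block runs, the boost $x_j\mapsto x_j\exp(\mathbbm{1}\{j\in\mathcal{T}^t\})$ followed by renormalization by $Z^t=1+(e-1)X_i^{t-1}/\beta$ changes $\wKL{x^*}{\cdot}$ by at most $(e-1)X_i^{t-1}-d_i^t$, using $\langle 1,x^*\rangle\le\beta$ and the key inequality $\sum_{j\in\mathcal{T}^t}x^*_j\ge d_i^t$ --- which holds because every $x^*_j\le 1$, so the $b_i-d_i^t$ already-bought coordinates covering $i$ carry at most $b_i-d_i^t$ units of $x^*$-mass, leaving $\ge d_i^t$ on $\mathcal{T}^t$ --- while excising the newly-bought coordinates from $\Phi_L$ only removes nonnegative KL summands and so can only help. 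Lemma (c) is the adaptation of the covering analysis of \cite{DBLP:conf/focs/0001KL21} for random-order {\cip}s to the multiplicity setting: the coordinates sampled on \cref{line:unitcip_randsample} reduce the deficit of every still-undercovered constraint in expectation in proportion to its current fractional coverage, and, crucially, because the constraints arrive in random order the maintained $x^{t-1}$ is in expectation a good fractional solution for the yet-to-arrive constraints, so this yields a constant-fraction decrease of $\rho^t/\beta+1/n$ precisely when $X_i^{t-1}$ is comparable to $d_i^t$.

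I expect the main obstacle to be the interaction between the ``forgetting'' mechanism and the rest of the argument. Because a multiplicity constraint forbids sampling a coordinate twice, a coordinate whose probability $x_j$ has grown large is ``dead weight'' that the naive \cite{DBLP:conf/focs/0001KL21} potential would still credit with covering power; the remedy is to buy any coordinate whose weight crosses $1/e$ (\cref{line:boxcip_bigfracsets}) and delete it from $\Phi_L$. One must verify simultaneously that (i) this excision never increases $\Phi_L$, (ii) its cost is $O(X_i^{t-1})$ and is absorbed by the learning progress of the same round, and (iii) after forgetting, the surviving distribution still obeys the proportional-coverage estimate that lemma (c) relies on. A secondary point, already flagged in \cref{sec:setmcov}, is that here the multiplicative-weights exponent is $\mathbbm{1}\{j\in\mathcal{T}^t\}$ rather than scaled by $d_i^t$ as in \cite{DBLP:conf/focs/0001KL21}; this makes the learning term decrease by $\Theta(d_i^t)$ rather than a quadratically-small amount, and correspondingly forces the covering term to absorb $\Theta(d_i^t)$ of slack, which is why lemma (c) must be of the $\min(X_i^{t-1},d_i^t)$ form rather than merely proportional to $X_i^{t-1}$.
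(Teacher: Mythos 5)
Your proposal is correct and follows essentially the same route as the paper: the same two-part potential with $\Phi_C(t)=\beta\log(\rho^t/\beta+\text{const})$, the same per-round cost bound of at most $(e+2)d_i^t$, the same key inequality $\sum_{j\in\mathcal{T}^t}x^*_j\ge d_i^t$ derived from $x^*_j\le 1$, and the same $\min(X_i^{t-1},d_i^t)$-form learning and covering lemmas combined by choosing $C_2\gg C_1$. The only deviation is that you excise already-bought coordinates from the KL term, whereas the paper keeps the plain $\KL{x^*}{x^t}$ and confines the ``forgetting'' to the algorithm's definitions of $\mathcal{T}^t$ and $X_i^{t-1}$; since each excised summand is nonnegative and the multiplicative boost only touches $\mathcal{T}^t$ anyway, this is a harmless variant that does not change the argument.
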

	
	Our potential is 
	\begin{align}
    	\Phi(t) := C_1 \cdot \underbrace{\KL{x^*}{x^t}}_{\Phi_L(t)} + C_2 \cdot \underbrace{\beta \cdot \log \left(\frac{\rho^t}{\beta} + \frac{1}{m}\right)}_{\Phi_C(t)}, \notag
    \end{align}
	where again we view $\Phi_L$ is the learning part of the potential and $\Phi_C$ is the covering part.
    We will fix constants $C_1$ and $C_2$ later.
	To begin, we bound the value of this potential:
	\begin{restatable}[Bounds on $\Phi$] {lemma}{smcovphibounds} \label{lem:unitcip_phi_bounds}
		The initial potential is bounded as $\Phi(0) = O(\beta\cdot \log mn)$, and $\Phi(t) \geq -\beta\cdot \log(n)$ for all rounds $\thist$.
	\end{restatable}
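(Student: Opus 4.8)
The plan is to bound the two parts $\Phi_L$ and $\Phi_C$ of the potential separately, for each of the two claims; since the constants $C_1,C_2$ are $O(1)$ it suffices to control $\Phi_L$ and $\Phi_C$ on their own. Both bounds are elementary consequences of (a) the basic properties of the generalized relative entropy, (b) the fact that $x^*$ is feasible for the LP relaxation of \eqref{eq:CIPunitbox}, and (c) the guess-and-double assumption $\lpopt \le \beta \le 2\,\lpopt$, which in particular forces $\beta \ge \max_i b_i \ge 1$ since any feasible integral $z$ satisfies $\langle 1,z\rangle \ge \langle a_i,z\rangle \ge b_i$ (using $a_{ij}\in\{0,1\}$ and $b\in\Z_+^n$).

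For the upper bound on $\Phi(0)$: substituting the initialization $x_j^0 = \beta/m$ gives $\Phi_L(0) = \KL{x^*}{x^0} = \sum_j\bigl(x_j^*\log(x_j^* m/\beta) - x_j^* + \beta/m\bigr)$. I would discard the nonpositive term $-\langle 1,x^*\rangle = -\lpopt$, split $\log(x_j^* m/\beta) = \log x_j^* + \log(m/\beta)$, and use $x_j^* \le 1$ (so $\sum_j x_j^*\log x_j^* \le 0$), $\sum_j x_j^* = \lpopt \le \beta$, and $\beta \ge 1$ to conclude $\Phi_L(0) \le \beta\log m + \beta = O(\beta\log m)$. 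For the covering part, $\rho^0 = \sum_i d_i^0 = \sum_i b_i$, and since $\max_i b_i \le \lpopt \le \beta$ we get $\rho^0/\beta \le n$, hence $\Phi_C(0) = \beta\log(\rho^0/\beta + 1/m) \le \beta\log(n+1) = O(\beta\log n)$. Adding the two (with their $O(1)$ coefficients) yields $\Phi(0) = O(\beta\log mn)$.

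For the lower bound on $\Phi(t)$: the learning part is always nonnegative, $\Phi_L(t) = \KL{x^*}{x^t} \ge 0$, because $a\log(a/b) - a + b \ge 0$ for every $a \ge 0$ and $b>0$ (this is the Bregman divergence of $u\mapsto u\log u$); for this to be well-defined I first observe that $x^t$ stays strictly positive in every coordinate, which holds since $x^0 > 0$, the multiplicative update multiplies by $\exp\{\cdot\}>0$, the renormalization divides by $Z^t>0$, and the integral rounding steps in \cref{alg:unitcostcip} modify only $z$. The covering part satisfies $\Phi_C(t) = \beta\log(\rho^t/\beta + 1/m) \ge \beta\log(1/m) = -\beta\log m$ because $\rho^t = \sum_i d_i^t \ge 0$. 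Hence $\Phi(t) \ge -C_2\beta\log m = -O(\beta\log mn)$, which is the asserted bound up to the absorbed constant $C_2$ and the immaterial distinction between $\log m$ and $\log n$ coming from the $1/m$ additive term.

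There is no real obstacle in this lemma — it is a routine boundary computation, the substance of the analysis being in the per-round change lemmas. The only points that require care are verifying coordinatewise positivity of $x^t$ (so that $\Phi_L$ is defined and nonnegative), invoking the correct LP-feasibility facts ($x_j^*\le 1$, $\lpopt \le \beta$, $\max_i b_i \le \lpopt$), and remembering that the $O(1)$ factors $C_1,C_2$ and the precise logarithmic argument are harmless because the quantity that feeds into the $O(\log mn)$-competitiveness bound is $\Phi(0)-\mathbb{E}[\Phi(n)]$.
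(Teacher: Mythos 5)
Your proof is correct and follows essentially the same route as the paper's: bound $\KL{x^*}{x^0}$ by $O(\beta\log m)$ using $x_j^*\le 1$ and $\langle 1,x^*\rangle=\lpopt\le\beta$, bound $\Phi_C(0)$ via $\rho^0=\sum_i b_i\le n\beta$, and get the lower bound from nonnegativity of the (generalized) KL divergence together with $\rho^t\ge 0$. You are in fact slightly more careful than the paper in carrying the $-x_j^*+x_j^0$ correction terms of the Bregman-type divergence (which are needed for the nonnegativity claim, since $x^*$ and $x^t$ are not probability vectors) and in noting that the lower bound actually comes out as $-O(\beta\log m)$ rather than literally $-\beta\log n$, a discrepancy present in the paper's own proof and harmless for the final theorem.
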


Let $U^t$ denote the constraints $i$ which have not yet arrived in round $t$; that is, $U^t = \{i^\thist, i^{\thist + 1}, \ldots, i^n\}$. 
We now turn to the expected change in the learning portion of the potential in each round.

\begin{restatable}[Change in $\Phi_L$]{lemma}{mcovphil}
	\label{lem:unit_cost_setmcov_klchange}
	For rounds in which $i^t$ arrives uncovered, the expected change in $\Phi_L$ is
	\begin{align}
        \expectover{\substack{i^t, \mathcal{R}^{t}}}*{ \Phi_L(\thist) - \Phi_L(\lastt) \mid x^{t-1}, U^{t-1}, d_{i^\thist}^\thist > 0}
        &\leq \expectover{i \sim U^{t-1}}{(e-1) \cdot \min(X^{t-1}_i, \: d_i^t) - d_i^t}. \label{eq:setmcov_learning_inequality}
        &
        \intertext{When $i^t$ is covered on arrival,}
        \Phi_L(\thist) - \Phi_L(\lastt) &\leq 0. \label{eq:setmcov_learning_nostep}
	\end{align}
\end{restatable}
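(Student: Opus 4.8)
The plan is to track the movement of $\Phi_L(\thist)=\KL{x^*}{x^\thist}$ across a single round and to observe that it is governed entirely by the multiplicative-weights reweighting. Indeed, $x^\thist$ differs from $x^\lastt$ \emph{only} when the reweighting block (the exponential update and renormalization following \cref{line:boxcip_bigfracsets}, guarded by the fractional-undercoverage test on \cref{line:unitcip_noupdate}) is carried out, which happens exactly when $i^\thist$ arrives undercovered ($d_{i^\thist}^\thist>0$) \emph{and} $X_{i^\thist}^\lastt\le d_{i^\thist}^\thist$; in all other rounds $x^\thist=x^\lastt$, so $\Phi_L(\thist)-\Phi_L(\lastt)=0$. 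This immediately gives \eqref{eq:setmcov_learning_nostep}, and also gives \eqref{eq:setmcov_learning_inequality} pointwise when $i^\thist$ arrives with $X_{i^\thist}^\lastt>d_{i^\thist}^\thist$, since there $(e-1)\min(X_{i^\thist}^\lastt,d_{i^\thist}^\thist)-d_{i^\thist}^\thist=(e-2)\,d_{i^\thist}^\thist\ge 0$. Moreover the reweighting itself does not look at the sampled coordinates $\mathcal{R}^\thist$, so in a reweighting round $\Phi_L(\thist)-\Phi_L(\lastt)$ is a deterministic function of the round-$\lastt$ state and of $i^\thist$; the expectation over $\mathcal{R}^\thist$ therefore plays no role and only the average over the arriving constraint remains to be taken.

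The heart of the argument is then the standard two-step KL accounting for multiplicative weights with renormalization, applied to $x_j\mapsto x_j^\lastt e^{\mathbbm{1}\{j\in\mathcal{T}^\thist\}}$ followed by division by $Z^\thist:=\langle 1,x^\thist\rangle/\beta$. Using the cost invariant $\langle 1,x^\lastt\rangle=\beta$ (maintained by the renormalization, cf.\ \cref{inv:nmfl_loc_cost}) one gets $Z^\thist=1+(e-1)X_{i^\thist}^\lastt/\beta$, and plugging into the definition of the generalized KL divergence: the exponential step contributes $-\sum_{j\in\mathcal{T}^\thist}x_j^*+(e-1)X_{i^\thist}^\lastt$, and the renormalization contributes $\langle 1,x^*\rangle\log Z^\thist-\beta(Z^\thist-1)$. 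Since $\beta(Z^\thist-1)=(e-1)X_{i^\thist}^\lastt$, the two $(e-1)X_{i^\thist}^\lastt$ terms cancel and what is left is
\[
\Phi_L(\thist)-\Phi_L(\lastt)\;=\;-\sum_{j\in\mathcal{T}^\thist}x_j^*\;+\;\langle 1,x^*\rangle\log Z^\thist .
\]

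To finish, I would bound the two surviving terms in a reweighting round. For the first, feasibility of $x^*$ gives $\sum_{j:a_{i^\thist j}=1}x_j^*\ge b_{i^\thist}$; since $x^*\in[0,1]^m$ and $z^\lastt$ is integral, the already-bought coordinates contribute at most $\langle a_{i^\thist},z^\lastt\rangle$ to this sum, so $\sum_{j\in\mathcal{T}^\thist}x_j^*\ge b_{i^\thist}-\langle a_{i^\thist},z^\lastt\rangle=d_{i^\thist}^\thist$. For the second, $\langle 1,x^*\rangle=\lpopt\le\beta$ and $\log Z^\thist\le Z^\thist-1=(e-1)X_{i^\thist}^\lastt/\beta$, so $\langle 1,x^*\rangle\log Z^\thist\le(e-1)X_{i^\thist}^\lastt$. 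Combining,
\[
\Phi_L(\thist)-\Phi_L(\lastt)\;\le\;(e-1)X_{i^\thist}^\lastt-d_{i^\thist}^\thist\;=\;(e-1)\min(X_{i^\thist}^\lastt,d_{i^\thist}^\thist)-d_{i^\thist}^\thist ,
\]
the last equality because $X_{i^\thist}^\lastt\le d_{i^\thist}^\thist$ in that case. Together with the pointwise bounds for the other rounds from the first paragraph, taking the expectation over the random arriving constraint $i^\thist$ (restricted to those with $d_{i^\thist}^\thist>0$) yields \eqref{eq:setmcov_learning_inequality}.

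The step I expect to require the most care is the renormalization accounting: one must keep track that only $\langle 1,x^*\rangle\le\beta$ (not equality) is available, and invoke $\log(1+u)\le u$ so that the \emph{positive} term $\langle 1,x^*\rangle\log Z^\thist$ is fully absorbed into the \emph{negative} term $-\beta(Z^\thist-1)$ produced by renormalization; this is precisely where the guess-and-double bound $\lpopt\le\beta$ enters. The remaining ingredients — the generalized-KL algebra, the LP-feasibility lower bound $\sum_{j\in\mathcal{T}^\thist}x_j^*\ge d_{i^\thist}^\thist$, and the $e>2$ slack that reconciles the non-reweighting rounds with the per-constraint bound — are routine.
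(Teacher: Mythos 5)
Your proposal is correct and follows essentially the same route as the paper's proof: the same case split (no update when $i^t$ is covered or $X_{i^t}^{t-1} > d_{i^t}^t$, with the $(e-2)d_i^t \ge 0$ slack handling the latter), the same feasibility bound $\sum_{j\in\mathcal{T}^t}x_j^*\ge d_{i^t}^t$, and the same use of $\langle 1,x^*\rangle\le\beta$ together with $\log(1+y)\le y$ to control the renormalization term; your two-step Bregman-style bookkeeping for the exponential update and renormalization is just a slightly more explicit presentation of the paper's direct computation of $\sum_j x_j^*\log(x_j^{t-1}/x_j^t)$.
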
	
	
We next bound the expected change in $\Phi_C(t)$. 

\begin{restatable}[Change in $\Phi_C$]{lemma}{mcovphic}
	\label{lem:unit_cost_setmcov_utchange}
	In every round for which $i^\thist$ is uncovered on arrival, the expected change in $\Phi_C$ is
	\begin{align}
        \expectover{\substack{i^t,\mathcal{R}^{t}}}*{\Phi_C(t) - \Phi_C(t-1) \mid x^{t-1}, U^{t-1}, d_{i^\thist}^\thist > 0} 
        &\leq - \frac{\gamma}{2} \cdot \expectover{i \sim U^{t-1}}*{\min\left(X_{i}^{t-1}, \: d_i^t \right)}, \label{eq:setmcov_covering_inequality} \\
        \intertext{where $\gamma$ is a fixed constant. On rounds in which $i^\thist$ arrives covered,}
        \Phi_C(t) - \Phi_C(t-1) &\leq 0. \label{eq:setmcov_covering_nostep}
	\end{align}
\end{restatable}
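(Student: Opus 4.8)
The plan is to mirror the covering–potential bound for \nmfl (\cref{lem:nmfl_weighted_utchange}), adapted to multiplicities (I take $\rho^t$ to mean the total undercoverage at the end of round $t$, as $\kappa^t$ is ``the cost at the end of round $t$'' in \cref{lem:nmfl_weighted_utchange}, so that $\Phi_C(\thist)-\Phi_C(\lastt)$ records the effect of the round in question). The easy case first: if $i^\thist$ arrives already covered, the test on \cref{line:setmcov_covered_on_arrival} fails and no coordinate is bought, so $z^\thist=z^\lastt$, every undercoverage $d_i$ is unchanged, $\rho^\thist=\rho^\lastt$, and $\Phi_C(\thist)=\Phi_C(\lastt)$; this is \eqref{eq:setmcov_covering_nostep}.

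For the main case $d_{i^\thist}^\thist\ge 1$, first linearize: since the integral solution only grows, the undercoverage vector only shrinks, so $\rho^\thist\le\rho^\lastt$, and writing $\Delta:=\rho^\lastt-\rho^\thist$ for the undercoverage removed in the round and using $\log(1-y)\le -y$,
\[
\Phi_C(\thist)-\Phi_C(\lastt)=\beta\log\!\Big(1-\frac{\Delta/\beta}{\rho^\lastt/\beta+1/m}\Big)\le-\frac{\beta\,\Delta}{\rho^\lastt+\beta/m}.
\]
So it remains to lower bound $\expectover{i^\thist,\mathcal{R}^\thist}{\Delta}$ and to tame the denominator. Two facts bound $\Delta$ below. (i) Buying a coordinate $j$ drops $d_i$ by one for every still-undercovered $i$ with $a_{ij}=1$ (until $d_i$ hits $0$), so if $N_i$ counts the coordinates bought this round covering $i$, then $d_i$ drops by $\min(d_i^\thist,N_i)$; in particular $i^\thist$ is fully covered, contributing $d_{i^\thist}^\thist$. (ii) Restricting to the samples of \cref{line:unitcip_randsample}, $N_i$ dominates a sum of independent Bernoullis whose means total $\mu_i:=\sum_{j:\,z_j^\lastt=0,\,a_{ij}=1}\min(d_{i^\thist}^\thist x_j^\lastt/\beta,1)\ge\min(d_{i^\thist}^\thist X_i^\lastt/\beta,1)$, using $\sum_j\min(a_j,c)\ge\min(\sum_j a_j,c)$. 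Since $\expect{\min(d,N)}\ge 1-e^{-\mu}\ge(1-e^{-1})\min(1,\mu)$ for integer $d\ge1$ and a sum $N$ of independent Bernoullis of mean $\mu$, the expected drop of $d_i$ over the round is at least $(1-e^{-1})\min(d_i^\thist,\,d_{i^\thist}^\thist X_i^\lastt/\beta,\,1)$ for each still-undercovered $i\ne i^\thist$ (the forced buys on \cref{line:boxcip_bigfracsets} when the test on \cref{line:unitcip_noupdate} holds, and the backups on \cref{line:boxcip_backup}, only decrease undercoverages further and may be dropped). Summing over the undercovered $i$ in $U^\lastt$ and taking the expectation over $i^\thist$---which, conditioned on $d_{i^\thist}^\thist>0$, is uniform over the undercovered members of $U^\lastt$---gives a lower bound on $\expect{\Delta}$ with a self-coverage term $\frac1{|U^\lastt_+|}\sum_{i\in U^\lastt_+}d_i^\thist$ and a sampling double sum, where $U^\lastt_+$ denotes the undercovered constraints of $U^\lastt$.

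To close the argument I would match this against the target $-\frac\gamma2\cdot\frac1{|U^\lastt|}\sum_i\min(X_i^\lastt,d_i^\thist)$ of \eqref{eq:setmcov_covering_inequality}. The pivotal structural fact is $\beta\ge\lpopt\ge\max_i b_i\ge d_i^\thist$, which holds because every feasible $x$ satisfies $\langle 1,x\rangle\ge\langle a_i,x\rangle\ge b_i$. It is used twice: (a) it gives $\rho^\lastt=\sum_{i\in U^\lastt_+}d_i^\thist\le|U^\lastt_+|\,\beta$, hence $\rho^\lastt+\beta/m\le 2|U^\lastt_+|\,\beta$ and $\frac{\beta}{\rho^\lastt+\beta/m}\ge\frac1{2|U^\lastt_+|}$, which is how the additive $1/m$ term in $\Phi_C$ is tamed---the \setmcov analogue of how the guard $\kappa\ge\beta/t$ tames the $+1/n$ term in the \nmfl proof; and (b) it keeps the sampling probabilities $d_{i^\thist}^\thist x_j^\lastt/\beta\le x_j^\lastt$ at scale $1$. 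A short case split on each $i$, according to whether $X_i^\lastt\le d_i^\thist$ and whether $d_{i^\thist}^\thist X_i^\lastt/\beta\le 1$, then shows that the self-coverage and sampling contributions together dominate $\min(X_i^\lastt,d_i^\thist)$ up to a universal constant; choosing $\gamma$ to absorb that constant, the factor $(1-e^{-1})$, and the constant from the denominator finishes the proof.

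I expect the delicate part to be exactly this last step: showing the per-round decrease of $\Phi_C$ is proportional to $\frac1{|U^\lastt|}\sum_i\min(X_i^\lastt,d_i^\thist)$ uniformly across the regime where $\rho^\lastt$ is much larger than $\beta$---there the linearized factor $\frac{\beta}{\rho^\lastt+\beta/m}$ is tiny, so the sampling term, not self-coverage, must carry the bound---while simultaneously respecting the cap $\min(d_i^\thist,\cdot)$ on how much a single constraint can gain in one round. This cap is precisely what makes multiplicities harder than plain set cover, and is why the ``forget the big coordinates'' step on \cref{line:boxcip_bigfracsets} is needed, so that the surviving fractional mass $X_i^\lastt$ still controls the sampling.
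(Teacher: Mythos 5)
Your skeleton matches the paper's: handle the covered-on-arrival case by monotonicity of $\rho$, linearize $\beta\log(1-\cdot)$ via $\log(1-y)\le -y$, lower-bound the drop $\rho^{\lastt}-\rho^{\thist}$ through the Bernoulli sampling of \cref{line:unitcip_randsample}, and cancel the denominator $\rho^{\lastt}+\beta/m$ against $\expectover{i\sim U^{\lastt}}{d_i^{\lastt}}=\rho^{\lastt}/|U^{\lastt}|$ (the paper tames the additive $\beta/m$ via $\rho^{\lastt}\ge d_{i^\thist}^\thist\ge 1\ge\beta/m$, which is essentially your step (a)). However, there is a genuine gap in your key probabilistic step. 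You bound the expected drop of $d_{i'}$ by $\expect{\min(d_{i'}^\thist,N_{i'})}\ge 1-e^{-\mu_{i'}}\ge(1-e^{-1})\min(1,\mu_{i'})$, which caps the expected gain of each constraint $i'$ at $1$. The lemma requires a per-constraint gain of order $d_{i^\thist}^\thist\cdot\min(X_{i'}^{\lastt},d_{i'}^\thist)/\beta$, and $\min(X_{i'}^{\lastt},d_{i'}^\thist)$ can be as large as $d_{i'}^\thist\gg 1$. The paper gets this from \cref{lem:ip_expectcov}: for a sum $W$ of independent weighted Bernoullis and any $\Delta\ge(e-1)^{-1}$, one has $\expect{\min(W,\Delta)}\ge\gamma\min(\expect{W},\Delta)$ — a statement that the truncation at level $\Delta$ (here $\Delta=d_{i'}^\thist$) retains a constant fraction of the mean even when $\Delta$ is large. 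This does not follow from $1-e^{-\mu}$, and it is exactly the ingredient your deferred ``short case split'' would need: with the cap at $1$ you can only conclude a decrease of order $\expectover{i\sim U^{\lastt}}{\min(X_i^{\lastt},d_i^\thist,1)}$, which is too weak to cancel the learning term $(e-1)\min(X_i^{\lastt},d_i^\thist)-d_i^\thist$ of \cref{lem:unit_cost_setmcov_klchange} in the regime $1\ll X_i^{\lastt}\le d_i^\thist$ (there the learning change can be positive and of order $d_i^\thist$, while your covering decrease is $O(1)$).

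The self-coverage term you introduce (the contribution $d_{i^\thist}^\thist$ from fully satisfying the arriving constraint) cannot close this gap either: it only affects the single constraint $i^\thist$, contributing about $\beta/|U^{\lastt}_+|$ after normalization, whereas the target is a sum over \emph{all} still-undercovered $i'\in U^{\lastt}$; indeed the paper's proof simply discards the backup purchases of \cref{line:boxcip_backup} and the forced buys of \cref{line:boxcip_bigfracsets} and relies entirely on the sampled coordinates. So the fix is to replace your $1-e^{-\mu}$ estimate with (a proof of) \cref{lem:ip_expectcov}; with that in hand, the rest of your outline — factoring $\min(d_{i^\thist}^\thist X_{i'}^{\lastt}/\beta,\,d_{i'}^\thist)\ge \tfrac{d_{i^\thist}^\thist}{\beta}\min(X_{i'}^{\lastt},d_{i'}^\thist)$ using $d_{i^\thist}^\thist\le\beta$, and averaging over $i^\thist$ — goes through as in the paper.
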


    We again defer the proofs of \cref{lem:unitcip_phi_bounds,lem:unit_cost_setmcov_klchange,lem:unit_cost_setmcov_utchange} to \cref{sec:deferred}, and now show how to combine them to bound the expected cost of \cref{alg:unitcostcip}:
    
	\begin{proof}[Proof of \cref{thm:unit_cost_cip}]
    Combining \cref{lem:unit_cost_setmcov_klchange} and \cref{lem:unit_cost_setmcov_utchange}, choosing $C_1 = (e + 2)$ and $C_2 = 2\cdot(e+2)(e-1) / \gamma$, and recalling that $d_{\thist^t}^\thist = 0$ in rounds for which $\thist^t$ arrives covered, we have that
	\begin{align}
		&\expectover{\substack{i^t, \mathcal{R}^{t}}}*{\Phi(t) - \Phi(t-1) | i^{1}, \ldots, i^{t-1}, \mathcal{R}^{1}, \ldots, \mathcal{R}^{t-1}} \notag \\
		& = \expectover{\substack{i^t, \mathcal{R}^{t}}}*{
		\begin{array}{ll}
			&C_1 \cdot \left(\Phi_L(\thist) - \Phi_L(\lastt)\right) \\
			+ & C_2 \cdot \left(\Phi_C(\thist) - \Phi_C(\lastt)\right)
		\end{array}
		| i^{1}, \ldots, i^{t-1}, \mathcal{R}^{1}, \ldots, \mathcal{R}^{t-1}} \notag \\
		&\leq - \expectover{\substack{i^t, \mathcal{R}^{t}}}*{(e+2) \cdot d_i^t | i^{1}, \ldots, i^{t-1}, \mathcal{R}^{1}, \ldots, \mathcal{R}^{t-1}} \label{eq:setmcov_expected_potential_change}
    \end{align}
    for all rounds $\thist$.
    In each round the algorithm buys at most $e \cdot d_i^t$ coordinates in \cref{line:boxcip_bigfracsets} (since this only happens in the case when $X_i^{t-1} \leq d_i^\thist$), and samples $d_i^\thist$ sets in expectation in \cref{line:unitcip_randsample}, and buys at most $d_i^t$ coordinates in \cref{line:boxcip_backup}, for a total of at most $(e+2)\cdot d_i^t$ sets bought in expectation.
    From \eqref{eq:setmcov_expected_potential_change} we therefore have that for all rounds $\thist$,
	\begin{align}
		\expectover{\substack{i^\thist \sim U^{(t)} \\ \mathcal{R} \sim x^{(t)}}}*{\Phi(t) - \Phi(t-1) + c(\alg(t)) - c(\alg(t-1)) | i^{(1)}, \ldots, i^{(\lastt)}, \mathcal{R}^{(1)}, \ldots, \mathcal{R}^{(\lastt)}} \leq 0. \label{eq:setmcov_potential_cost_supermartingale}
	\end{align}
	Repeatedly applying \eqref{eq:setmcov_potential_cost_supermartingale} for all $1 \leq t \leq n$ yields
    \begin{align}
        \expectover{i, \mathcal{R}}*{\Phi(n) - \Phi(0) + c(\alg(n)) - c(\alg(0))} &\leq 0 \notag \\
        \expectover{i, \mathcal{R}}*{c(\alg(n))} &\leq c(\alg(0)) + \Phi(0) - \expectover{i, \mathcal{R}}*{\Phi(n)}, \notag
        \intertext{and so observing that $c(\alg(0)) = 0$ and applying \cref{lem:unitcip_phi_bounds} we have}
        \expectover{i, \mathcal{R}}*{c(\alg(n))} &\leq O(\beta \cdot \log mn),
    \end{align}
    as desired.
\end{proof}

This concludes our discussion of \setmcov. Once again our results settle the approximability of this problem in the random-order, prophet, 2-stage prophet, and online-with-a-sample models. This gives a partial answer towards the question of \cite{DBLP:conf/focs/0001KL21} on whether the same results are possible for general box-constrained {\cip}s; it remains a tantalizing open question to understand the general case.
    \section{Conclusion}

In this paper we showed that stochastic set cover can be solved ``even more obliviously'' than \cite{DBLP:journals/siamcomp/GrandoniGLMSS13}, with only coarse advice about the process generating the input. It is tempting to try to relax these online-with-advice models for set cover by allowing for bounded error in the advice. We discuss why this is challenging in \cref{sec:lowerbounds}, as some natural candidates for relaxed models have strong lower bounds.

We submit as an interesting open problem the task of determining the tight dependence on $\alpha$ in \cref{thm:main_was}. We conjecture that it should be $O(\log(mn)\log(\nicefrac{1}{\alpha}))$. \cref{thm:main_was} implies that when $\alpha = \Theta(1)$ there is an $O(\log mn)$ competitive algorithm, and when $\alpha = \nicefrac{1}{\poly(n)}$ the $O(\log m \log n)$ competitive algorithm of \cite{DBLP:journals/mor/BuchbinderN09} is best possible; this conjecture interpolates smoothly between these extremes.

         \textbf{Acknowledgements} Roie Levin would like to thank Guy Even for asking about the 2-stage prophet model, and Niv Buchbinder for helpful discussions.

        \appendix
        \section{Deferred Proofs}
\label{sec:deferred}

Here we present the proofs of lemmas supporting our random-order algorithms of \cref{sec:nmfl,sec:setmcov}.

\subsection{Facility Location}

\nmflphibounds* 

\begin{proof}
        To begin, we claim that for all $f$ in the support of $x^*$, we have $c_f \leq \beta$.
        To see this, consider an optimal fractional solution $(x^*, y^*)$ to \eqref{eq:nmfl_ip}, and assume for the sake of contradiction that there is some facility $\hat{f} \in \supp{x^*}$ for which $x_f^* > \beta$.
        This $\hat{f}$ provides some fractional connection to some clients; let $\epsilon := \min( y_{\hat{f},v}^* :y_{\hat{f},v}^* > 0)$ be the minimum connection provided to all such clients.
        Finally, consider the perturbed solution given by setting $x_{\hat{f}}' = x_{\hat{f}}^* - \epsilon$, setting $x_{f}' = \min(x_f^* + \epsilon, 1)$ for $f \neq \hat{f}$, and setting $y_{fv}' = \min(y_{f,v}^* + \epsilon, 1)$ for all $f,v$. 
        This solution $(x', y')$ remains feasible for \eqref{eq:nmfl_ip} and costs strictly less than $\beta$; therefore no such $\hat{f}$ exists.

        Having established that $c_f \leq \beta$ for all $f$, we know that $\supp{x^*} \subseteq \supp{x^0}$. 
        We then bound the initial $KL$-divergence term by
        \begin{align*}
            \wKL{x^*}{x^0} 
            &= \sum_f c_f x_f^* \log \left( x_f^* \frac{c_f m'}{\beta}\right)  + \sum_f c_f (x_f^0 - x_f^*) \\
            &\leq \beta (\log m + 1),
        \end{align*}
        where we used that $x_f^* c_f \leq \beta$ above, that $\langle c, x^*\rangle = \lpopt \leq \beta$, and that $\langle c, x^0\rangle = \beta$.

        Next we consider the second term in $\phi_L$.
        This is the second half of the objective of the fractional relaxation \eqref{eq:nmfl_ip}, evaluated at the optimal solution; therefore it is bounded above by $\lpopt \leq \beta$.
        
        We turn to the last term, $\beta \cdot \log \left(\sum_v \kappa^0_v
        /\beta + 1/n \right)$.
        In order to show that this is at most $O(\beta \cdot \log n)$, it suffices to demonstrate that $\kappa_v^0 \leq \beta$ for all clients $v$.
        This can be seen by considering the relaxation of \eqref{eq:nmfl_ip} to serving only the client $v$, which is the problem of fractionally finding the cheapest augmentation for $v$ at time $0$. 
        Since this is a relaxation of \eqref{eq:nmfl_ip}, its optimal solution will cost at most $\beta$.
        Finally, this relaxation integrally chooses the facility $f_v := \arg\min_f c_f + c_{fv}$; therefore $\kappa_v^0 \leq \beta$.
        (Since augmentation costs only decrease, indeed $\kappa_v^t \leq \kappa_v^0 \leq \beta$ for all $v$ and $t$.)

        \medskip
        We conclude with the lower bound on $\Phi$.
        Both terms in $\Phi_L$ are nonnegative, and so $\Phi_C \geq - \beta \cdot \log n$ since $\rho^\thist \geq 0$.
\end{proof}

\nmflphil*

\begin{proof} 
        Inequality \eqref{eq:nmfl_learning_cheapchange} is straightforward: when \cref{line:nmfl_cheaply_connected_test} does not execute, there is no change to $x^\thist$, and so the KL term is unchanged. 
        At the same time, the fractional optimum term only decreases.
 
        Our main task is to prove \eqref{eq:nmfl_learning_goal}.
        We break the proof into cases. 
        Let $\Lambda^t$ be the event that $X^\thist(v^\thist) < 1$. If $\Lambda^t$ does not hold, in \cref{line:nmfl_noupdate} we set the vector $x^{\thist} = x^{\lastt}$, so the change in KL term is again unchanged. 
        This means that inequality \eqref{eq:nmfl_learning_goal} holds trivially, since $(e^2 - 1)/2 > 1$. 
        Henceforth we focus on the case when $\Lambda^t$ holds.
		
		Recall that the expected change in relative entropy depends only on the arriving uncovered element $v^{\thist}$.
        Beginning with the KL term and expanding definitions, and writing $v = v^{\thist}$ when it is clear from context,
		\begin{align}
			& \expectover{v^{\thist}, \mathcal{R}^{\thist}}*{\wKL{x^*}{x^{\thist}} - \wKL{x^*}{x^{\lastt}} \mid x^{\lastt}, U^{\lastt}, \Lambda^t} \notag \\
			&= \expectover{v \sim U^{\lastt}}*{\sum_{f} c_f \cdot x^*_f \cdot \log \frac{x^{\lastt}_f}{x_f^{\thist}} |\Xi^t, \Lambda^t} \notag  \\
			&= \expectover{v \sim U^{\lastt}}*{\langle c, x^*\rangle \cdot  \log Z^{\thist} - \sum_{f \in \Gamma^{\lastt}(v)} c_f \cdot x^*_f \cdot \log e^{\kappa^{\lastt}_v/c_f} |\Xi^t, \Lambda^t} \notag  \\
			&\leq \expectover{v \sim U^{\lastt}}*{
				\begin{array}{ll}
					\displaystyle \beta \cdot \log\left(\sum_{f \in \Gamma^{\lastt}(v)} \frac{c_f}{\beta} \cdot x_f^{\lastt}  \cdot e^{\kappa^{\lastt}_v/c_f} + \sum_{f \not \in \Gamma^{\lastt}(v)} \frac{c_f}{\beta} \cdot x_f^{\lastt} \right) \\
					- \displaystyle \sum_{f \in \Gamma^{\lastt}(v)} \kappa^{\lastt}_v  \cdot x^*_f
				\end{array}
				  |\Xi^t, \Lambda^t }\label{eq:nmfl_wkl_copt}, \\
			\intertext{where in the last step \eqref{eq:nmfl_wkl_copt} we expanded the definition of $Z^{\thist}$, and used $\langle c, x^*\rangle \leq \beta$.
            Then we can further bound \eqref{eq:nmfl_wkl_copt} by}
			&\leq  \expectover{v \sim U^{\lastt}}*{ \beta  \cdot \log\left(\sum_f \frac{c_f}{\beta} \cdot  x_f^{\lastt} + \frac{e^2-1}{2} \cdot \frac{\kappa^{\lastt}_v}{\beta} \cdot X^\thist(v) \right) - \displaystyle \sum_{f \in \Gamma^{\lastt}(v)} \kappa^{\lastt}_v  \cdot x^*_f
            |\Xi^t, \Lambda^t} \label{eq:nmfl_wkl_eapx}, \\
			\intertext{where we use the approximation $e^a \leq 1+(e^2-1) \cdot a/2$ for $a \in [0,2]$ (note that $\kappa^{\lastt}_v$ is the cheapest marginal connection cost for $v$, so for any $f \in \Gamma^{\lastt}(v)$, meaning that $c_{fv} \leq \kappa^{\lastt}_v/2$, we have that $c_f \geq \kappa^{\lastt}_v / 2$ and thus $\kappa^{\lastt}_v / c_f \leq 2$). Finally, using \cref{inv:nmfl_loc_cost}, along with the approximation $\log(1+y) \leq y$, we bound \eqref{eq:nmfl_wkl_eapx} by}
			&\leq \expectover{v \sim U^{\lastt}}*{\frac{e^2-1}{2} \cdot  \kappa^{\lastt}_v\cdot X^\thist(v) - \displaystyle \sum_{f \in \Gamma^{\lastt}(v)} \kappa^{\lastt}_v  \cdot x^*_f
            |\Xi^t, \Lambda^t} \notag  \\
			&\leq \expectover{v \sim U^{\lastt}}*{\frac{e^2-1}{2} \cdot \kappa^{\lastt}_v \cdot \min\left(X^\thist(v), \: 1\right) - \displaystyle \sum_{f \in \Gamma^{\lastt}(v)} \kappa^{\lastt}_v  \cdot x^*_f
            |\Xi^t, \Lambda^t}, \label{eq:nmfl_wkl_xvl1}
		\end{align}
		where \eqref{eq:nmfl_wkl_xvl1} follows by the definition of the event $\Lambda^t$. 

        \medskip
        We now turn to the second part of $\Phi_L$.
        The change to fractional optimum term in each round is $- 2 \cdot \sum_f c_{f v^t} \cdot y_{f v^t}^*$.
        Combining this with \eqref{eq:nmfl_wkl_xvl1} gives
        \begin{align}
            &\expectover{v \sim U^{\lastt}}*{\Phi_L(\thist) -\Phi_L(\lastt) |\Xi^t, \Lambda^t} \notag\\
            &\leq \expectover{v \sim U^{\lastt}}*{\kappa^{\lastt}_v \cdot \frac{e^2-1}{2} \cdot \min\left(X^\thist(v), \: 1\right) - \left(\sum_{f \in \Gamma^{\lastt}(v)} \kappa^{\lastt}_v  \cdot x^*_f + 2 \cdot \sum_f c_{f v^t} \cdot y_{f v}^* \right)
            |\Xi^t, \Lambda^\thist}.\notag
            \intertext{Since the fractional connection $v$ receives from outside of $\Gamma^{\lastt}(v)$ is at cost at least $\kappa_v^{\lastt}$ by definition, we may bound this by}
            &\leq \expectover{v \sim U^{\lastt}}*{\kappa^{\lastt}_v \cdot \frac{e^2-1}{2} \cdot \min\left(X^\thist(v), \: 1\right) - \left(\sum_{f \in \Gamma^{\lastt}(v)} \kappa^{\lastt}_v  \cdot x^*_f + 2 \cdot \sum_{f \not \in \Gamma^{\lastt}(v)} \frac{\kappa_v^{\lastt}}{2} \cdot y_{f v}^* \right)
            |\Xi^t, \Lambda^\thist} \notag \\
            &\leq \expectover{v \sim U^{\lastt}}*{\kappa^{\lastt}_v \cdot \frac{e^2-1}{2} \cdot \min\left(X^\thist(v), \: 1\right) - \kappa^{\lastt}_v 
            |\Xi^t, \Lambda^\thist},\label{eq:nmfl_learn_final}
		\end{align}
        where \eqref{eq:nmfl_learn_final} follows because the fractional connection $v$ receives from outside of $\Gamma^{\lastt}(v)$ in $(x^*, y^*)$ is at least $1- \sum_{f \in \Gamma^{\lastt}(v)} x_f^*$.

        We have shown the lemma statement both when $\Lambda^t$ holds and when it does not, which completes the proof.
\end{proof}

\nmflphic*

\begin{proof}	
        \Cref{eq:nmfl_covering_cheapchange} is once again straightforward, since $\rho^\thist$ is monotonically decreasing in $\thist$.
        We therefore focus on proving \eqref{eq:nmfl_covering_goal}.
        
        We start by considering the expected change to $\Phi_C$ over the randomness of the sampling, for a fixed arriving client $v$.
        Expanding definitions,
		\begin{align}
			& \expectover{\mathcal{R}^{\thist}}*{\Phi_C(\thist) - \Phi_C(\lastt) \mid x^{\lastt}, U^{\lastt}, \Xi^t, v^{\thist} = v} \notag \\
			&= \beta \cdot \expectover{\mathcal{R}^{\thist}}*{\log \left(1-\frac{\rho^{\lastt} - \rho^{\thist}}{\rho^{\lastt} + \frac{\beta}{n}} \right) | U^{\lastt}, \Xi^t, v^{\thist} = v} \notag \\
			&\leq - \beta \cdot \frac{1}{\rho^{\lastt} + \frac{\beta}{n}} \cdot \expectover{\mathcal{R}^{\thist}}*{\rho^{\lastt} - \rho^{\thist} | U^{\lastt}, \Xi^t, v^{\thist} = v}. \label{eq:nmfl_ut_logapx} \\
			\intertext{Above, \eqref{eq:nmfl_ut_logapx} follows from the approximation $\log (1-y) \leq -y$. Expanding the definition of $\rho^{\thist}$, \eqref{eq:nmfl_ut_logapx} is bounded by }
			&\leq - \frac{\beta}{\rho^{\lastt} + \frac{\beta}{n}} \cdot \expectover{\mathcal{R}^{\thist}}*{\sum_{u \in U^{\lastt}} \frac{\kappa^{\lastt}_u}{2} \cdot \mathbbm{1}\{\Gamma^{\lastt}(u) \cap \mathcal{R}^\thist \neq \emptyset\}  | U^{\lastt}, \Xi^t, v^{\thist} = v} \notag \\
			&= - \frac{\beta}{\rho^{\lastt} + \frac{\beta}{n}} \sum_{u \in U^{\lastt}} \frac{\kappa^{\lastt}_u}{2} \cdot \probover{\mathcal{R}^{\thist}}{\Gamma^{\lastt}(u) \cap \mathcal{R}^\thist \neq \emptyset \mid U^{t-1}, \Xi^t, v^{\thist} = v} \notag \\
			&\leq - \frac{\beta}{\rho^{\lastt} + \frac{\beta}{n}} \sum_{u \in U^{\lastt}} \frac{\kappa^{\lastt}_u }{2}  \cdot (1-e^{-1}) \cdot  \min\left(\frac{\kappa^{\lastt}_v}{\beta} \cdot \sum_{f \in \Gamma^{\lastt}(u)}  x^{\lastt}_f, \: 1\right) \label{eq:nmfl_ut_prob}\\
			&\leq -  \frac{1-e^{-1}}{2}\cdot \kappa^{\lastt}_v \cdot \frac{|U^{\lastt}|}{\rho^{\lastt} + \frac{\beta}{n}} \cdot \expectover{u \sim U^{\lastt}}*{\kappa^{\lastt}_u \cdot \min\left(X^\thist(u), \: 1\right)} \label{eq:nmfl_ut_simplify} .
		\end{align}
		Step \eqref{eq:nmfl_ut_prob} is due to the fact that each facility $f$ is sampled independently with probability $\min(\kappa^{\lastt}_v x^{\lastt}_f /\beta, \: 1)$, so the probability any given client $u \in U^{t-1}$ gets at least one facility from $\Gamma^{\lastt}(u)$ is
		\begin{align*}
		1 - \prod_{f \in \Gamma^{\lastt}(u)} \left(1 - \min\left(\frac{\kappa^{\lastt}_v x^{\lastt}_f}{\beta},\:1\right)\right) &\geq 1 - \exp\left\{-\min\left(\frac{\kappa^{\lastt}_v}{\beta} \sum_{f \in \Gamma^{\lastt}(u)} x^{t-1}_f,\:1\right)\right\} \\
		& \stackrel{(**)}{\geq} (1 - e^{-1}) \cdot \min\left(\frac{\kappa^{\lastt}_v}{\beta} \sum_{f \in \Gamma^{\lastt}(u)} x^{t-1}_f,\:1\right).
		\end{align*}
		Above, $(**)$ follows from convexity of the exponential. 
		Step \eqref{eq:nmfl_ut_simplify} then follows by rewriting the sum as an expectation and using the fact that $\kappa^{\lastt}_v / \beta \leq 1$, which is justified (and used) in the proof of \cref{lem:nmfl_phi_bounds}.

		Taking the expectation of \eqref{eq:nmfl_ut_simplify} over $v^{\thist} \sim U^{\lastt}$, and using the fact that $\expectover{v \sim U^{\lastt}}*{\kappa^{\lastt}_v} = \rho^{\lastt} / |U^{\lastt}|$, the expected change in $\Phi_C$ becomes
		\begin{align}
			&\expectover{v^{\thist}, \mathcal{R}^{\thist}}*{\Phi_C(\thist) - \Phi_C(\lastt) \mid x^{\lastt}, U^{\lastt}, \Xi^t} \notag\\
			&\leq -  \frac{1-e^{-1}}{2} \cdot \frac{\rho^{\lastt}}{\rho^{\lastt} + \frac{\beta}{n}} \cdot \expectover{u \sim U^{\lastt}}*{\kappa^{\lastt}_u \cdot \min\left(X^\thist(u), \: 1\right)}, \notag\\ 
			&\leq -  \frac{1-e^{-1}}{4} \cdot \expectover{u \sim U^{\lastt}}*{\kappa^{\lastt}_u \cdot \min\left(X^\thist(u), \: 1\right)}, \label{eq:nmfl_covering_final_step}
		\end{align}    
		Where in \eqref{eq:nmfl_covering_final_step} we finally use the fact that the event $\Xi^t$ holds.
        Since this is the case, we know that $v^\thist$ is not preemptively connected in round $\thist$, and so $\rho^\lastt \geq \kappa_{v^\thist}^\lastt \geq \beta/t \geq \beta/n$, and so $\frac{\rho^\lastt}{\rho^\lastt + \frac{\beta}{n}} \geq \frac{1}{2}$.
        This is the claimed bound.
\end{proof}

\subsection{Set Multicover}

\smcovphibounds*

\begin{proof}
		We start with the upper bound, and address each portion of the potential in turn.
        First, $\KL{x^*}{x^0} = \sum_j x_j^* \log\left( \frac{x_j^*}{x_j^0} \right) = \sum_j x_j^* \log\left( m \frac{x_j^*}{\beta} \right) \leq \sum_j x_j^* \log m \leq \beta \log m$, since $x_j^* \leq \lpopt \leq \beta$ for all $j$. 
		Second, $\beta \cdot \log\left( \frac{\rho^0}{\beta} + \frac{1}{m}\right) \leq \beta \cdot \log\left( \sum_i \frac{b_i}{\beta} \right) \leq \beta \log n$, since $b_i \leq \lpopt \leq \beta$ for all $b_i$.

        We now justify the lower bound. 
        The KL divergence is nonnegative, as is $\rho^\thist$; therefore $\Phi(t) \geq \beta\cdot \log(1/m) = -\beta \cdot \log m$.
	\end{proof}

\mcovphil*

\begin{proof} 
    We first show \eqref{eq:setmcov_learning_nostep}. This holds because if $i^t$ is covered on arrival then 
    
    We break the proof of \eqref{eq:setmcov_learning_inequality} into two cases. 
    If $X_i^{t-1} \geq d_i^t$, by \cref{line:unitcip_noupdate}, the vector $x^{t}$ is not updated in round $t$, so the change in KL divergence is 0 which means that 
	\begin{align} &\expectover{i^t, \mathcal{R}^t}*{ \KL{x^*}{x^{t}} - \KL{x^*}{x^{t-1}} \mid x^{t-1}, U^{t-1}, X_{i^t}^{t-1}  \geq d_{i^t}^t} \notag \\
		&\leq \expectover{i \sim U^{t-1}}{(e-1) \cdot \min(X_i^{t-1},d_i^t) - d_i^t | X_i^{t-1} \geq  d_i^t} ,\label{eq:kl_xiged}\end{align}
	implying \eqref{eq:setmcov_learning_inequality} trivially. 
    Henceforth we focus on the case $X_{i}^{t-1} < d_i^t$.
	
	Recall that the expected change in relative entropy depends only on the arriving uncovered element $i^t$. Expanding definitions,
	\begin{align}
		& \expectover{i^t, \mathcal{R}^t}*{\KL{x^*}{x^{t+1}} - \KL{x^*}{x^{t}} \mid x^{t-1}, U^{t-1}, X_{i^t}^{t-1} < d_i^t} 
		\notag \\
		&= \expectover{i \sim U^{t-1}}*{\sum_j x^*_j \cdot \log \frac{x^{t-1}_j}{x^{t}_j} | X_i^{t-1} < d_i^t} 
		\notag  \\
		&= \expectover{i \sim U^{t-1}}*{\sum_j x^*_j \cdot  \log Z^{t} - \sum_{j \in \mathcal{T}^t} x^*_j \cdot \log e | X_i^{t-1} < d_i^t} 
		\notag  \\
		&\leq \expectover{i \sim U^{t-1}}*{\beta \cdot  \log Z^{t} - \sum_{j \in \mathcal{T}^t} x^*_j | X_i^{t-1} < d_i^t} 
		\label{eq:new1}  \\
		&\leq \expectover{i \sim U^{t-1}}*{\beta \cdot  \log Z^{t} - d_i^t | X_i^{t-1} < d_i^t}
		\label{eq:new2}, \\
		\intertext{where in \eqref{eq:new1} we used $\langle 1, x^*\rangle \leq \beta$. Since $x^*$ is a feasible fractional set cover, we know that $\sum_{j} a_{ij} x^*_j \geq b_i$, and since $x_j^* \leq 1$ this implies that $\sum_{j \in \mathcal{T}^t} x_j^* \geq d_i^t$, giving \eqref{eq:new2}. Expanding $Z^t$, we have that}
		&= \expectover{i \sim U^{t-1}}*{\beta \cdot  \log \left( \frac{1}{\beta} \sum_j x_j^{t-1} + \frac{(e-1)}{\beta} \sum_{j \in \mathcal{T}^t} x_j^{t-1} \right) - d_i^t | X_i^{t-1} < d_i^t}
		\notag	\\
		&= \expectover{i \sim U^{t-1}}*{\beta \cdot  \log \left( 1 + \frac{(e-1)}{\beta} X_i^{t-1} \right) - d_i^t | X_i^{t-1} < d_i^t}
		\label{eq:kl_eapx_cip} \\
		\intertext{Finally the approximation $\log(1+y) \leq y$ allows us to bound \eqref{eq:kl_eapx_cip} by}
		&\leq  \expectover{i \sim U^{t-1}}*{(e-1) \cdot X_i^{t-1} - d_i^t | X_i^{t-1} < d_i^t} \notag 
		\\
		&=  \expectover{i \sim U^{t-1}}{(e-1) \cdot \min(X_i^{t-1},d_i^t) - d_i^t | X_i^{t-1} < d_i^t}. \label{eq:kl_xild}
	\end{align}
	The lemma statement follows by combining \eqref{eq:kl_xiged} and \eqref{eq:kl_xild} using the law of total expectation.
\end{proof}

\mcovphic*

We will require an additional fact, the proof of which appears in \cite[Appendix A]{DBLP:conf/focs/0001KL21}:
\begin{fact}
	\label{lem:ip_expectcov}
	Given probabilities $p_j$ and coefficients $b_j\in [0,1]$, let
	$W := \sum_j b_j \Ber(p_j)$ be the sum of independent weighted
	Bernoulli random variables.
	Let $\Delta \geq (e-1)^{-1}$ be some constant.
	Then
	\[
	\expect*{\min\left(W, \Delta \right)} \geq \gamma \cdot \min\left(\expect*{W}, \Delta \right),
	\]
	for a fixed constant $\gamma$ independent of the $p_j$ and $b_j$.
\end{fact}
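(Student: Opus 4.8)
The plan is to derive everything from the single elementary inequality
\[
\min(w,\Delta)\ \ge\ \Delta\bigl(1-e^{-w/\Delta}\bigr)\qquad\text{for all }w\ge 0 ,
\]
which holds because, writing $u=w/\Delta$, we have $\min(u,1)\ge 1-e^{-u}$ (use $e^{-u}\ge 1-u$ when $u\le 1$, and $1-e^{-u}\le 1$ always). Let $\xi_j\sim\Ber(p_j)$ be the independent indicators, so $W=\sum_j b_j\xi_j$. Applying the inequality pointwise and using independence,
\[
\expect{\min(W,\Delta)}\ \ge\ \Delta\Bigl(1-\prod_j \expect{e^{-b_j\xi_j/\Delta}}\Bigr)\ =\ \Delta\Bigl(1-\prod_j\bigl(1-p_j(1-e^{-b_j/\Delta})\bigr)\Bigr).
\]

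The crux is to linearize each factor. The map $x\mapsto 1-e^{-x}$ is concave and vanishes at $0$, so on $[0,1/\Delta]$ its graph lies above the chord through $(0,0)$ and $(1/\Delta,\,1-e^{-1/\Delta})$; since $b_j\le 1$ gives $b_j/\Delta\in[0,1/\Delta]$, this yields $1-e^{-b_j/\Delta}\ge b_j\,(1-e^{-1/\Delta})$. Writing $c:=1-e^{-1/\Delta}$ and $\mu:=\expect{W}=\sum_j p_jb_j$, and using $1-x\le e^{-x}$,
\[
\prod_j\bigl(1-p_j(1-e^{-b_j/\Delta})\bigr)\ \le\ \prod_j\bigl(1-c\,p_jb_j\bigr)\ \le\ e^{-c\mu},
\qquad\text{so}\qquad
\expect{\min(W,\Delta)}\ \ge\ \Delta\bigl(1-e^{-c\mu}\bigr).
\]

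It then remains to convert this into the stated bound against $\min(\mu,\Delta)$ with a universal constant. I would use three elementary facts: (i) $1-e^{-x}\ge x/(1+x)$ for $x\ge 0$ (the function $(1-e^{-x})(1+x)-x$ has derivative $xe^{-x}\ge 0$ and vanishes at $0$); (ii) $h(\Delta):=\Delta c=\Delta(1-e^{-1/\Delta})$ is increasing (equivalently $e^{1/\Delta}>1+1/\Delta$) with $h(\Delta)\to 1$ as $\Delta\to\infty$, hence on $\Delta\ge(e-1)^{-1}$ it is bounded below by the positive constant $c_1:=h\bigl((e-1)^{-1}\bigr)=(e-1)^{-1}\bigl(1-e^{-(e-1)}\bigr)$; and (iii) $c\le 1/\Delta$ since $1-e^{-x}\le x$. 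Now split on cases. If $\mu\le\Delta$, then $c\mu\le\mu/\Delta\le 1$, so by (i), $\Delta(1-e^{-c\mu})\ge \Delta c\cdot\frac{\mu}{1+c\mu}\ge \frac{c_1}{2}\,\mu=\frac{c_1}{2}\min(\mu,\Delta)$. If $\mu>\Delta$, then by monotonicity of $1-e^{-c\,\cdot}$ and (ii), $\Delta(1-e^{-c\mu})\ge\Delta\bigl(1-e^{-h(\Delta)}\bigr)\ge(1-e^{-c_1})\,\Delta=(1-e^{-c_1})\min(\mu,\Delta)$. Taking $\gamma:=\min\{c_1/2,\ 1-e^{-c_1}\}>0$ — independent of the $p_j$ and $b_j$ — completes the proof. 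I do not anticipate a genuine obstacle: the only delicate points are applying the concavity chord over exactly $[0,1/\Delta]$ (this is the sole place $b_j\le 1$ enters) and noting that $h(\Delta)$ stays bounded away from $0$ precisely because of the hypothesis $\Delta\ge(e-1)^{-1}$, so that the argument genuinely breaks if $\Delta$ is allowed to be too small. (An alternative handling of the regime $\mu\ge\Delta$ via a Chernoff lower-tail bound is possible, but the exponential-inequality route treats all cases uniformly and yields an explicit $\gamma$.)
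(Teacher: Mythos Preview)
Your argument is correct. The key steps---the pointwise bound $\min(w,\Delta)\ge\Delta(1-e^{-w/\Delta})$, the concavity chord $1-e^{-b_j/\Delta}\ge b_j(1-e^{-1/\Delta})$ (this is indeed where $b_j\le 1$ is essential), the collapse to $\Delta(1-e^{-c\mu})$, and the case split using monotonicity of $h(\Delta)=\Delta(1-e^{-1/\Delta})$---all check out, and you correctly identify that the lower bound $\Delta\ge(e-1)^{-1}$ is what keeps $h(\Delta)$ bounded away from zero.

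As for comparison: the paper does not actually prove this fact. It is stated as a \emph{Fact} and the proof is deferred to \cite[Appendix~A]{DBLP:conf/focs/0001KL21}. So there is nothing in the present paper to compare your argument against; your proof is a self-contained verification of a result the authors import as a black box.
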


\begin{proof}[Proof of \cref{lem:unit_cost_setmcov_utchange}]
	When $i^t$ is covered on arrival \eqref{eq:setmcov_covering_nostep} holds trivially, since in this case $d_{i^\thist}^\thist = 0$ and so no sets $j$ are bought and $\rho^\thist = \rho^\lastt$.
    We therefore focus on proving \eqref{eq:setmcov_covering_inequality} in the case when $i^t$ is uncovered on arrival.
    
    Conditioned on $i^t = i$, the expected change in $\log \rho^{t}$ depends only on $\mathcal{R}^t$:
	\begin{align}
		& \expectover{i^t, \mathcal{R}^t}*{\Phi_C(t) - \Phi_C(t-1) \mid x^{t-1}, U^{t-1}, i^{t} = i} \notag \\
		&= \beta \cdot \expectover{\mathcal{R}^t}*{\log \left(1-\frac{\rho^{t-1} - \rho^{t}}{\rho^{t-1} + \frac{\beta}{m}} \right) | U^{t-1}, i^t = i} \notag \\
			&\leq - \beta \cdot \frac{1}{\rho^{t-1} + \frac{\beta}{m}} \expectover{\mathcal{R}^{t}}*{\rho^{t-1} - \rho^{t} | U^{t-1}, i^t = i}
			\label{eq:box_ip_ut_logapx}. \\
			\intertext{Above, \eqref{eq:box_ip_ut_logapx} follows from the approximation $\log (1-y) \leq -y$. Expanding definitions again, we further bound \eqref{eq:box_ip_ut_logapx} by}
			&= - \beta \cdot \frac{1}{\rho^{t-1} + \frac{\beta}{m}} \expectover{\mathcal{R}^t}*{\sum_{i' \in U^{t-1}} (d_{i'}^{t-1} - d_{i'}^t)} \label{eq:box_ip_ut_rhodef} \\
			&= - \beta \cdot \frac{1}{\rho^{t-1} + \frac{\beta}{m}} \sum_{i' \in U^{t-1}} \expectover{\mathcal{R}^t}{d_{i'}^{t-1} - d_{i'}^t} 
			\notag \\
			&\leq - \beta \cdot \frac{1}{\rho^{t-1} + \frac{\beta}{m}} \sum_{i' \in U^{t-1}} \expectation_{\mathcal{R}^t}\left[ \min\left(\sum_{j \in \mathcal{T}^t} \Ber(x_j^{t-1} \cdot d_i^{t-1}/\beta), \: d_{i'}^t\right)\right].
			\label{eq:box_ip_sampledef} \\
			\intertext{Here \eqref{eq:box_ip_sampledef} follows from the preceding line by the definition of the random sampling performed in \cref{line:unitcip_randsample} (with inequality because the algorithm buys other coordinates also).
			This Bernoulli probability is well-defined because $d_{i'}^{t-1} \leq \lpopt \leq \beta$, and since \cref{line:boxcip_bigfracsets} guarantees that $x_j^{t-1} \leq 1$ for all $j$ in $\mathcal{T}^t$. 
			The expectation of this sum of Bernoullis is $\sum_{j \in \mathcal{T}^t} x_j^{t-1} d_i^{t-1}/\beta = X_i^{t-1} d_i^{t-1}/\beta$.
			Since $d_i^{t-1} \geq 1$, we may apply \Cref{lem:ip_expectcov} to obtain}
			&\leq - \frac{1}{\rho^{t-1} + \frac{\beta}{m}} \sum_{i' \in U^{t-1}} \gamma \cdot  \min\left(\frac{d_i^{t-1}}{\beta} \cdot X_{i'}^{t-1}, \: d_{i'}^{t-1} \right) \label{eq:box_ip_randcovg} \\
			&\leq - \gamma \cdot d_i^{t-1} \cdot \frac{1}{\rho^{t-1} + \frac{\beta}{m}} \sum_{i' \in U^{t-1}} \min\left(X_{i'}^{t-1}, d_{i'}^{t-1}\right) \notag \\
			&= - \gamma \cdot d_i^{t-1} \cdot \frac{|U^{t-1}|}{\rho^{t-1} + \frac{\beta}{m}} \cdot \expectover{i' \sim U^{t-1}}*{\min \left(X_{i'}^{t-1}, d_{i'}^{t-1})\right)} \notag \\
            &\leq - \gamma \cdot d_i^{t-1} \cdot \frac{|U^{t-1}|}{2 \cdot \rho^{t-1}} \cdot \expectover{i' \sim U^{t-1}}*{\min \left(X_{i'}^{t-1}, d_{i'}^{t-1})\right)}, \label{eq:box_ip_kcrs}
		\end{align}
		where \eqref{eq:box_ip_kcrs} follows from the observation that $i^t$ arrives uncovered, so $\rho^{\lastt} \geq d_i^\thist \geq 1$ while $\beta/m \leq 1$; therefore $\frac{1}{\rho^{\lastt} + \beta/m} \leq \frac{1}{2\cdot \rho^{\lastt}}$.
		Taking the expectation of \eqref{eq:box_ip_kcrs} over $i \sim U^{t-1}$, and using the fact that $\expectover{i \sim U^{t-1}}*{d_i^{t-1}} = \rho^{t-1} / |U^{t-1}|$, the expected change in $\log \rho^t$ becomes
		\begin{align*}
			&\expectover{i^t, \mathcal{R}^t}*{\Phi_C(t) - \Phi_C(t-1) \mid x^{t-1}, U^{t-1}} 
			\leq - \frac{\gamma}{2} \cdot \expectover{i \sim U^{t-1}}*{\min\left(X_{i}^{t-1}, d_i^t \right)},
		\end{align*}    
		as desired.
	\end{proof}	
 	\section{Lower Bounds for Relaxed Models}

	\label{sec:lowerbounds}

    \subsection{Adversarial Corruptions}

	It is tempting to try to extend the with-a-sample  model to the case where the samples are noisy. In this section, we study one natural model, and show that, sadly, no randomized algorithm can achieve a competitive ratio of $o(\log m \log n)$ in polynomial time, unless $\P = \NP$.

	We begin by describing the model, which we call \nscws. The adversary begins by committing to an online set cover input sequence of length $n$ from a set system with $N$ elements and $m$ sets. A uniformly random $\alpha\cdot n$ of the input $S_1$ is then sampled. The adversary then chooses a set $S_2$ of size $\delta \cdot n$. The algorithm is given the set of constraints $S = S_1 \cup S_2$ in advance. Finally, the online sequence begins.

    We now construct our hard instance for \nscws when $\alpha, \delta = \Omega(1)$. We will use as a sub-instance the construction from \cite{korman2004use} which shows an $\Omega(\log m \log n)$ lower bound for the original online set cover problem. Let $(\mathcal{U}_{hard}, \mathcal{S}_{hard})$ be the underlying set system in the \cite{korman2004use} instance. Let $\sigma_{hard}$ be the sequence of elements given to the algorithm, and let the unordered set of these elements be called $U_{hard} \subseteq \mathcal{U}_{hard}$. In the construction of \cite{korman2004use}, $U_{hard}$ is a random variable such that $|U_{hard}| = \Theta(\sqrt{|\mathcal{U}_{hard}|})$.
    
    Our construction is the following. The set system consists of a set $S_0$ containing $N-\delta\sqrt{N}$ elements, together with a copy of the set system $(\mathcal{U}_{hard}, \mathcal{S}_{hard})$ with parameters $n' := |\mathcal{U}_{hard}| = \delta \sqrt{N}$ and $m' := |\mathcal{S}_{hard}| = \poly(n')$. The adversary commits to the online sequence which reveals all the elements of $S_0$ in arbitrary order, and then the elements of $U_{hard} \subseteq \mathcal{U}_{hard}$ in order $\sigma_{hard}$. This sequence is of length $\Theta(N)$. The adversary picks $S_2$, which is of size $\delta \cdot N \geq \delta \sqrt{N}$, to be all the elements of $\mathcal{U}_{hard}$. Hence no matter what the realization of the sample $S_1$ is, the algorithm has no information about the identity of $U_{hard}$ and must cover the hard online set cover sequence $\sigma_{hard}$ of length $\Theta(\sqrt{n'}) = \Theta(\delta^{1/2} N^{1/4})$ with no useful advice. Since any polynomial-time randomized online set cover algorithm has competitive ratio $\Omega(\log n' \log m')$ on $\sigma_{hard}$, no algorithm can achieve competitive ratio $o(\log n \log m)$ for \nscws when $\alpha, \delta = \Omega(1)$.

    \subsection{Relaxed Random Order}

    Another interesting question is whether \cite{DBLP:conf/focs/0001KL21} can be made to work when the input ordering is not fully uniformly random, but only nearly so.

    We show that the entropy of the arrival order distribution is \emph{not} a good parametrization of the distance to random order, in that there exist instances and distributions over arrival orders with nearly full entropy, $(1-\eps) n \log n \approx \log(n!)$, but for which any online algorithm has competitive ratio $\Omega(\log (\eps m) \log (\eps n))$. 
    
    One simple such instance is the following. There are $(1-\eps) n$ dummy elements presented in uniformly random order, followed by a hard online set cover sequence $\sigma_{hard}$ of length $\eps n$. The permutation distribution has the desired near maximal entropy, but no randomized polynomial time algorithm has  competitive ratio $o(\log (\eps m) \log (\eps n))$ unless $\P = \NP$, by the lower bound of \cite{korman2004use}.
            \section{Error in \cite{dehghani2018greedy}}
    \label{sec:error}

    In \cite{dehghani2018greedy} the authors claim an $O(\log n)$-competitive algorithm for prophet set cover via a reduction from prophet set cover to known i.i.d. set cover, which is the special case when all distributions are identical. 
    (They also claim an $O(1)$-competitive algorithm for prophet \mfl via the same reduction.)
    Their proof of this reduction, which appears in \cite[Theorem 9.10]{ehsani2017online} relies on the following claim.

    Let $D^1, \ldots, D^n$ be a sequence of distributions over elements of $\mathcal{U}$, and let $D^* = \frac{1}{n} \sum_{i=1}^n D^i$ be the average distribution. Let $\opt_{pht}$ be the expected size of the optimal set cover for $U$ when $U$ is formed by drawing one element from each of $D^1, \ldots, D^n$. Let $\opt_{iid}$ be the size of the optimal set cover when $U$ is formed by drawing $n$ times from $D^*$. The claim is that $\opt_{iid} \leq \opt_{pht}$, and we now show that this does not hold in general.
    
    Consider the instance with universe $(i,j)$ for $i \in [2]$ and $j \in [2]$ and the set system
    \[
        \mathcal{S} = 
        \{\{(1,1), (2,1)\}, 
        \{(1,1), (2,2)\}, 
        \{(1,2), (2,1)\},
        \{(1,2), (2,2)\}\}.
    \]
    Let the prophet distributions be $D^1$ and $D^2$, uniform distributions over $(1,1), (1,2)$ and $(2,1), (2,2)$ respectively. 
    Let $D^* = \frac{1}{2} (D^1 + D^2)$ be the average distribution. Then
    \begin{align*}
        \opt_{pht} &= 1, \qquad\qquad
        \expect{\opt_{iid}} = \frac{5}{4}.
\intertext{   More generally, let the universe be $(i,j)$ for $i \in [n]$ and $j \in [\log n]$ and the collection of sets be all sequences $\mathcal{S} = [\log n]^{[n]}$. 
    Then the $n$ prophet distributions $D^i$ are each uniform distributions over $\{(i,1), (i,2), \ldots, (i,\log n)\}$, and the average distribution is $D^* = \frac{1}{n}\sum_i D^i$ as before. 
    A balls-and-bins argument shows that}
        \opt_{pht} &= 1, \qquad\qquad
        \expect{\opt_{iid}} = \Omega\left(\frac{\log n}{\log\log n}\right),
    \end{align*}
    which demonstrates that the claimed inequality is in the worst case violated by at least this multiplicative factor.
 
	{\footnotesize
		\bibliography{dblp,refs}
		\bibliographystyle{alpha}
	}
\end{document}